\DeclareRobustCommand*\cal{\@fontswitch\relax\mathcal}
\newcommand*{\rom}[1]{\expandafter\@slowromancap\romannumeral #1@}
\newtheorem{definition}{Definition}[section]
\newtheorem{theorem}[definition]{Theorem}
\newtheorem{remark}[definition]{Remark}
\theoremstyle{definition}
\newtheorem{example}[definition]{Example}
\newcommand{\QED}{\nobreak \ifvmode \relax \else
      \ifdim\lastskip<1.5em \hskip-\lastskip
      \hskip1.5em plus0em minus0.5em \fi \nobreak
      \vrule height0.75em width0.5em depth0.25em\fi}
\newcommand{\bm}[1]{{\mbox{\boldmath $#1$}}}
	\def\slash#1{\not\!#1}
\def\Tr{{\mathrm{Tr}\hspace{-.05em}}}
\def\stars{{\star\star}}
\def\bcdot{{\circ}}
\def\bcdots{{\circ\circ}}
\def\bcdott{{\circ\circ\circ\circ}}
\def\bullets{{\bullet\bullet}}
\def\Tm{{T\hspace{-.2em}M}}
\def\Tsm{{T^*\hspace{-.2em}M}}
\def\TM{{T\hspace{-.2em}\M}}
\def\TsM{{T^*\hspace{-.2em}\M}}
\def\TMM{{T\hspace{-.2em}\MM}}
\def\TsMM{{T^*\hspace{-.2em}\MM}}
\def\Gso{{{G}_{\hspace{-.2em}S\hspace{-.1em}O}}}
\def\Gsu{{{G}_{\hspace{-.2em}S\hspace{-.1em}U}}}
\def\TsE{{T^*\hspace{-.2em}E}}
\def\GammaSp{{\Gamma_{\hspace{-.1em}\Sp}}}
\def\aaa{{\mathfrak a}}
\def\AAA{{\mathfrak A}}
\def\Aa{{\mathscr{A}}}
\def\aa{{\bm{a}}}
\def\bbb{\mathfrak{b}}
\def\ooo{\mathfrak{o}}
\def\OOO{\mathfrak{O}}
\def\Fym{\mathfrak{ym}}
\def\C{\mathbb{C}}
\def\Z{\mathbb{Z}}
\def\uuu{\mathfrak{u}}
\def\XXX{\mathfrak{X}}
\def\TT{{\mathcal T}}
\def\TTT{{\mathfrak T}}
\def\d{{\cal D}}
\def\D{{\slash{\cal D}}}
\def\ds{{\slash{d}}}
\def\vvv{\mathfrak{v}}
\def\VVV{\mathfrak{V}}
\def\FF{{\mathcal F}}
\def\FFF{\mathfrak{F}}
\def\f{\mathscr{F}}
\def\ggg{\mathfrak{g}}
\def\sss{\mathfrak{s}}
\def\S{{\cal S}}
\def\SSS{{\mathfrak S}}
\def\Ss{{S}}
\def\LL{{\cal L}}
\def\LLL{{\mathfrak L}}
\def\lll{\mathfrak{l}}
\def\iii{\mathfrak{i}}
\def\bbb{{\mathfrak b}}
\def\C{\mathbb{C}}
\def\ccc{{\mathfrak c}}
\def\Varepsilon{{\mathcal E}}
\def\EEE{\mathfrak{E}}
\def\eee{{\mathfrak e}}
\def\I{\mathscr{I}}
\def\M{{\cal M}}
\def\MM{\mathscr{M}}
\def\R{\mathbb{R}}
\def\Ri{R}
\def\RRR{\mathfrak{R}}
\def\W{{\cal W}}
\def\vomega{{\mathfrak w}}
\def\www{{\mathfrak w}}
\def\WWW{{\mathfrak W}}
\def\={{\hspace{-.1em}=\hspace{-.1em}}}
\def\LLambda{\bm{\Lambda}}
\def\Wedge{{\Omega}}
\def\SO{{\hspace{-.1em}S\hspace{-.1em}O\hspace{-.1em}}}
\def\SU{{\hspace{-.1em}S\hspace{-.1em}U\hspace{-.1em}}}
\def\Uo{{\hspace{-.1em}U\hspace{-.1em}(1)\hspace{-.1em}}}
\def\Sp{{\hspace{-.1em}s\hspace{-.1em}p\hspace{.1em}}}
\def\SG{{\hspace{-.2em}s\hspace{-.1em}g}}
\def\SUO{{\SU\hspace{-.1em}\otimes\SO}}
\def\Cl{{\mathcal C\hspace{-.1em}l}}
\def\sss{{\mathfrak s}}
\def\uuu{{\mathfrak u}}
\def\wcs{{\www\hspace{-.1em}\cdot\hspace{-.1em}\Ss}}
\def\cP{{\hspace{-.1em}c\hspace{-.05em}\mathrm{P}}}
\def\cov{{\hspace{-.1em}cov}}
\def\YM{{\mathrm{YM}}}
\def\MYM{{\mathrm{MYM}}}
\def\ym{{\textrm{Y}\hspace{-.1em}\textrm{M}}}
\def\Dc{{\hspace{-.1em}\textrm{D}c}}
\def\cG{{c^{~}_{\hspace{-.1em}g\hspace{-.1em}r}}}
\def\cGz{{c^{~}_{\hspace{-.1em}g\hspace{-.1em}r}}}
\def\cGo{{{c}^{~}_{\hspace{-.1em}g\hspace{-.1em}r}}}
\def\hcG{{{c}^{~}_{\hspace{-.1em}g\hspace{-.1em}r}}}
\def\DSp{{\D^{~}_{\hspace{-.1em}\Sp}}}
\def\dSp{{\slash{d}^{~}_{\hspace{-.1em}\Sp}}}
\def\dSg{{\slash{d}^{~}_{\SG}}}
\def\dSgp{{\slash{d}^{+}_{\SG}}}
\def\dSgm{{\slash{d}^{-}_{\SG}}}
\def\dSgpm{{\slash{d}^{\pm}_{\SG}}}
\def\hdSg{{\slash{\hat{d}}^{~}_{\SG}}}
\def\dR{{de\hspace{.1em}Rham }}
\def\cs{{\ccc\hspace{-.1em}\sss}}
\def\SM{{\hspace{-.1em}\setminus\hspace{-.1em}}}
\def\gdg{\bm{g}^{-1}_\SU\hspace{.1em}d\bm{g}_\SU^{~}}
\def\HD{{\widehat{\mathrm{H}}}}
\def\TxT{{2\hspace{-.1em}\times\hspace{-.1em}2}}
\def\sigmag{{{\sigma^{~}_{\hspace{-.1em}g}}}}
\def\sigmaeta{{{\sigma^{~}_{\hspace{-.1em}\eta}}}}
\def\sigmath{{{\sigma^{~}_{\hspace{-.1em}\theta}}}}
\def\KP{{K\hspace{-.2em}P}}
\def\dr{{d\hspace{-.1em}R}}
\def\Mp{{\M_{\hspace{-.1em}p}}}
\begin{document}
\title{Topological indices of general relativity and Yang--Mills theory in four-dimensional space-time}
\author{Yoshimasa Kurihara\footnote{yoshimasa.kurihara@kek.jp}
\\
{\it\small High Energy Accelerator Organization (KEK), 
Tsukuba, Ibaraki 305-0801, Japan}
}
\maketitle
\begin{abstract} 
This report investigates general relativity and the Yang--Mills theory in four-dimensional space-time using a common mathematical framework, the Chern--Weil theory for principal bundles.
The whole theory is described owing to the fibre bundle with the $GL(4)$ symmetry by twisting several principal bundles with the gauge symmetry.

In addition to the principal connection, we introduce the Hodge-dual connection into the Lagrangian to make gauge fields have dynamics independent from the Bianchi identity.
We show that the duplex superstructure appears in the bundle when a $\Z_2$-grading operator exists in the total space of the bundle in general.
The Dirac operator appears in the secondary superspace using the one-dimensional Clifford algebra, and it provides topological indices from the Atiyah--Singer index theorem.

Though the topological index is usually discussed in the elliptic-type manifold, this report treats it in the hyperbolic type space-time manifold using a novel method, namely the $\theta$-metric space.
The $\theta$-metric treats Euclidean and Minkowski spaces simultaneously and defines the topological index in the Minkowski space-time.
\end{abstract}
\maketitle
\tableofcontents
\section{Introduction}
A gauge invariance was found and named by Weyl in Maxwell equations at first.
An abelian group of $U(1)$ symmetry in the electromagnetic theory was extended to non-abelian groups of the Yang--Mills theory\cite{PhysRev.96.191}, which is now a basis of the standard theory of particle physics.
A mathematical work concerning the Yang--Mills theory is, e.g., a study of the instanton solutions owing to the harmonic analysis of the (anti-)self-dual Yang--Mills gauge action.
The Atiyah--Singer index theorem\cite{Atiyah:1968mp,Atiyah:1968rj,Atiyah:1967ih,Atiyah:1971rm,Atiyah:1970ws} and its extension to manifolds with boundaries by Atiyah, Patodi, and Singer\cite{atiyah1975spectral, atiyah_patodi_singer_1975, atiyah_patodi_singer_1976} have provided powerful tools to treat the Yang--Mills theory for physicists. 
In 1977, a pioneering work by Schwarz on the Yang--Mills theory\cite{SCHWARZ1977172} appeared.
Since then, dynamic interactions between physics and mathematics have been continuing.
After an epoch-making work by Seiberg and Witten\cite{Witten:1994cg} in 1994, a new era of the interplay between theoretical physics and topology, especially the Donaldson theory\cite{donaldson1983}, was opened, and it continues to date.
At the same time, in theoretical physics, the Atiyah--Singer index theorem of the Dirac operator has provided a deep insight into anomalies of quantum field theory\cite{PhysRevLett.42.1195,ALVAREZGAUME1984449,AlvarezGaume:1984dr}.

Topological studies of the Yang--Mills theory have treated the theory in the four-dimensional manifold, mainly with the \emph{Euclidean} metric.
The primary objective of this report is to clarify topological invariants in the four-dimensional Yang--Mills theory in the \emph{Lorentzian} metric space with non-vanishing curvature.
We are interested in topological invariants in solutions of the Einstein equation and the Yang--Mills equation in the compact and oriented space-time manifold.
This study treats general relativity and the Yang--Mills theory as the Chern--Weil theory with various principal bundles and utilizes the topological method like the Atiyah--Singer index theorem and cobordism.
We cannot use the index theory for the Minkowski space as for Euclidean space since the Dirac operator is the hyperbolic type in theory for the Minkowski space.
This study introduces a novel complex-metric space that provides a one-parameter family of homotopically equivalent metrics, including Euclidean metric space at one end and the Lorentzian metric space at the other.
We note that both spaces are cobordant, and characteristic classes are cobordism invariant.

We organise this report as follows:
After an introductory section, section 2 provides mathematical preliminaries and conventions of mathematical symbols utilised throughout this study.
The $\Z_2$-grading superspace commonly appears in principal bundles, and the novel $\theta$-metric space is introduced in this section.
Section 3 defines various principal bundles in which general relativity and the Yang--Mills theory are developed.
The successive section provides the Lagrangian of the entire Yang--Mills theory and equations of motion for all physical fields in theory.
Section 5 discusses topological invariants appearing in the Yang--Mills theory, and section 6 summarises this study.
The appendix discusses the existing conditions of the dual-connection for the given dual-curvature for $U(1)$ and $SU(N)$ gauge groups.
We note that though the existence of the Hodge-dual curvature is trivial, it is not the case for the dual connection.

This report uses the following physical units:
a speed of light is set to unity $c=1$, but a gravitational constant $\kappa=8\pi G_{\hspace{-.2em}N}/c^2$ ($G_{\hspace{-.2em}N}$ is the Newtonian constant of gravitation) and reduced Plank-constant $\hbar=h/2\pi$ are written explicitly.
In these units, physical dimensions of fundamental constants are $[\hbar\hspace{.1em}\kappa]=L^2=T^2$ and $[\hbar/\kappa]=E^2=M^2$, where $L$, $T$, $E$ and $M$ are, respectively, length, time, energy and mass dimensions.

\section{Mathematical preliminaries}
This section summarises the mathematical preliminaries utilised throughout this study.
A smooth four-dimensional (pseudo-)Riemannian manifold is introduced as a model of the Universe, in which the Yang--Mills theory is developed.
We treat both $SO(1,3)$ and $SO(4)$ symmetries simultaneously using the one-parameter family of metric tensors, namely the $\theta$-metric space.

The Hodge-dual operator has a unique role in four-dimensional manifolds because the operator is endomorphism for two-form objects, such as the curvature form in a four-dimensional manifold.
The space of two-form objects is split into two subspaces and makes a $\Z_2$-grading superspace.
The Dirac operator is naturally introduced in the superspace and induces characteristic index in theories.
We observe that characteristic indices commonly appear in each principal $G$-bundle.

The Atiyah--Singer index theorem asserts that characteristic (analytic) indices owing to the Dirac operator are equivalent to topological indices.
The index theorem is maintained for the elliptic operator in Eudlidean space.
On the other hand, the Yang--Mills theory in physics is formulated in the Minkowski space, and the equation of motion is a hyperbolic type.
We define characteristic indices in the Minkowski space owing to cobordism with the $\theta$-metric.
Characteristic indices in the Minkowski space are mathematically well-defined in this formalism and supply a method to treat indices for physical objects.
%
%
\subsection{Inertial bundle}
We introduce Riemannian manifold $(\MM,\bm{g})$ as a model of the Universe, where $\MM$ is a smooth and oriented four-dimensional manifold, and $\bm{g}$ is a metric tensor in $\MM$ with a negative signature such that det$[\bm{g}]<0$.
In an open  neighbourhood $U_p\subset\MM$, orthonormal bases in $T_{\hspace{-.1em}p}\MM$ and $T^*_{\hspace{-.1em}p}\MM$ are respectively introduced as $\partial/\partial x^\mu$ and $dx^\mu$.
We use the abbreviation $\partial_\mu:=\partial/\partial x^\mu$ throughout this report.
Two trivial vector-bundles $\TMM:=\bigcup_p T_{\hspace{-.1em}p}\MM$ and  $\TsMM:=\bigcup_p T^*_{\hspace{-.1em}p}\MM$ are referred to as  a tangent and cotangent bundles in $\MM$, respectively.

In space-time manifold $\MM$, the Levi-Civita connection is defined using the metric tensor as
\begin{align*}
\Gamma^\lambda_{~\mu\nu}:=\frac{1}{2}g^{\lambda\sigma}\left(
\partial_\mu g_{\nu\sigma}+\partial_\nu g_{\mu\sigma}-\partial_\sigma g_{\mu\nu}
\right),
\end{align*}
where $g_{\mu\nu}:=[\bm{g}]_{\mu\nu}$.
An inertial system, in which the coefficients of the Levi-Civita connection vanishes such that $\Gamma^\lambda_{~\mu\nu}=0$, exists at any point in $\MM$. 
An inertial system at point $p\in\MM$ is denoted $\M_p$, namely a local inertial manifold at $p$.
\begin{definition}[Inertial bundle]\label{LorentzBndl}
An inertial bundle is a tuple such that:
\[
\left(\M,\pi_{\mathrm I},\MM,SO(1,3)\right).
\]
\end{definition}
\noindent
A four-dimensional \emph{rotation} is isometry $\Lambda_p:\Mp\rightarrow\Mp$ concerning a local $SO(1,3)$ group.
A total space $\M$ is an inertial manifold that is a trivial bundle of quotient spaces; 
\begin{align}
\M:=\bigcup_{p\in\MM}\Mp/\Lambda_p.\label{Lmani}
\end{align}
A projection map is defined as
\begin{align*}
\pi_{\mathrm I}:\M\otimes\MM\rightarrow\MM:\M_p\mapsto{p}.
\end{align*}

The existence of smooth map $\pi_{\mathrm I}$ and its inverse globally in $\MM$ is referred to as \emph{Einstein's equivalence principle} in physics.
An orthonormal basis on $\TM$ is represented like $\partial_a$.
As for suffixes of vectors in $\M$, Roman letters are used for components of the trivial basis throughout this study; Greek letters are used for them in $\MM$.
The metric tensor in $\M_p$ is denoted as $\bm\eta$ and is represented using the trivial basis as ${\bm{\eta}}=\mathrm{diag}(1,-1,-1,-1)$.
Metric tensor $\bm\eta$ and Levi-Civita tensor (complete anti-symmetric tensor) $\bm\epsilon$, whose component is $[\bm\epsilon]_{0123}=\epsilon_{0123}=+1$, are constant tensors in $\M_p$.

A pull-back\footnote{
A pull-back of a map $\bullet$ is denoted as $\bullet^\sharp$ in this study.
} 
of bundle map $\pi_{\mathrm I}$ induces one-form object $\eee\in\Gamma(\M,\TsM)$ represented using the trivial basis as
\begin{align*}
\pi_{\mathrm I}^\sharp:
\Omega^1(\TsMM)\otimes\MM\rightarrow\Gamma(\M,\TsM):
dx^\mu\mapsto [\eee]^a:=\Varepsilon^a_\mu(p) dx^\mu,
\end{align*}
where $\Omega^q(\TsMM)$ represents a space of differential $q$-forms in $\TsMM$.
The Einstein convention for repeated indices (one in up and one in down) is exploited throughout this study.
$\Varepsilon^a_\mu\in{C^\infty(\MM)}$ is a smooth function defined in $\MM$, namely a \emph{vierbein}.
The vierbein maps a vector in $\MM$ to that in $\M$ as $\bbb=\Varepsilon\bm\aaa$, where $\bbb\in\TsM$ and $\bm\aaa\in\TsMM$.
We use a Fraktur letter to represent $q$-form objects defined in the cotangent bundle in this report.
(A Fraktur letter is also used to show Lie algebra.)
Vierbain inverse $[\Varepsilon^{-1}]_a^\mu=\Varepsilon_a^\mu$ is also called the vierbein.
One-form object
$\eee=\Varepsilon d\bm{x}$ ($\eee^a=\Varepsilon^a_\mu dx^\mu$ by the trivial basis) is referred to as the vierbein form.
The vierbein form provides an orthonormal basis in $\TsM$; it is a dual basis of $\partial_a$ such as $\eee^a\partial_b=\Varepsilon^a_\mu\Varepsilon^\nu_b dx^\mu\partial_\nu=\delta^a_b$ and is a rank-one tensor (vector) in $\TM$.
The standard bases $\partial_a$ and $\eee^a$ in trivial frame bundles $\TM$ and $\TsM$ are referred to as the trivial bases in this study.
The four-dimensional volume form is represented using virbein forms as 
\begin{align}
\vvv&:=\frac{1}{4!}\epsilon_{\bcdots\bcdots}\hspace{.1em}\eee^\bcdot\wedge\eee^\bcdot\wedge\eee^\bcdot\wedge\eee^\bcdot
=\mathrm{det}[\Varepsilon]\hspace{.2em}dx^0\wedge dx^1\wedge dx^2\wedge dx^3.
\label{VolForm}
\end{align}
Dummy \emph{Roman} indices are often abbreviated to a small circle $\bcdot$ (or $\star$) when the dummy-index pair of the Einstein convention is obvious as above.
When multiple circles appear in an expression, the pairing must be on a left-to-right order at upper and lower indices.
Metric tensors  $\bm{g}$ and $\bm{\eta}$ are related each other like
\begin{align*}
[\bm{g}]_{\mu\nu}&=\left[\bm{\Varepsilon}^t\bm{\eta}
\hspace{.1em}\bm{\Varepsilon}\right]_{\mu\nu}
=\eta_\bcdots\Varepsilon^\bcdot_\mu\Varepsilon^\bcdot_\nu,
\end{align*}
yielding
\begin{align*}
\textrm{det}[\bm{g}]&=\textrm{det}[\bm{\eta}]\textrm{det}[\bm{\Varepsilon}]^2.
\end{align*}
Thus, we obtain that
\begin{align*}
\vvv&=\sqrt{\frac{\mathrm{det}[\bm{g}]}{\mathrm{det}[\bm{\eta}]}}\hspace{.2em}dx^0{\wedge}dx^1{\wedge}dx^2{\wedge}dx^3
=\sqrt{-\mathrm{det}[\bm{g}]}\hspace{.2em}dx^0{\wedge}dx^1{\wedge}dx^2{\wedge}dx^3.
\end{align*}
Similarly the two-dimensional surface form is defined as
\begin{align}
\SSS_{ab}:=\frac{1}{2}\epsilon_{ab\bcdots}\eee^\bcdot\wedge\eee^\bcdot.\label{SufeForm}
\end{align}

Connection one-form $\www$ concerning the $SO(1,3)$ group, namely the \emph{spin-connection form},  is introduced; a $SO(1,3)$-covariant differential operator $d_\www$ is defined using the spin-connection form as
\begin{align}
d_\www\eee^a:=d\eee^a+\cG\www^a_{~\bcdot}\wedge\eee^\bcdot,\label{dwww}
\end{align}
where $\www^a_{~b}=\omega^{~a\bcdot}_{\mu}\hspace{.1em}\eta_{\bcdot b}\hspace{.1em}dx^\mu$, and $\omega^{~ac}_{\mu}$ is a component of the spin-connection form using the trivial basis.
Raising and lowering indices are done using a metric tensor.
Two-form object 
\begin{align}
\TTT^a:=d_\www\eee^a\in V^1(\TM)\otimes\Omega^2\label{torsionFM}
\end{align}
is referred to as a \emph{torsion form}.

Local $SO(1,3)$ group action $\Gso:\TsM\rightarrow\TsM$ is known as the Lorentz transformation.
The Lorentz transformation of the vierbein form and the spin-connection form are
\begin{align*}
\Gso&:\eee\mapsto\Gso(\eee)=\eee'=\LLambda\eee,\\
\Gso&:\www\mapsto\Gso(\www)=\www'=
\LLambda\www\LLambda^{-1}+{\cG}^{\hspace{-.5em}-1}\LLambda\hspace{.1em}d\hspace{-.1em}\LLambda^{\hspace{-.2em}-1}.
\end{align*}
Thus, the Lorentz transformation transforms the covariant differential (\ref{dwww}) like $\Gso(d_\www\eee)=d_{\www'}\eee'$.
A \emph{Lorentz curvature} is defined owing to the structure equation as
\begin{align}
\RRR^{ab}&:=d\www^{ab}+\cG\www^a_{~\bcdot}\wedge\www^{\bcdot b}
\in V^2(\TM)\otimes\Wedge^2,\label{RRR}
\end{align}
which is a two-form valued rank-$2$ tensor  represented using the trivial basis as
\begin{align*}
\RRR^{ab}&=
\sum_{c<d}R^{ab}_{\hspace{.7em}cd}\hspace{.1em}\eee^{c}\wedge\eee^{d}=
\frac{1}{2}R^{ab}_{\hspace{.7em}\bcdots}\hspace{.1em}\eee^\bcdot\wedge\eee^\bcdot.
\end{align*}
Tensor coefficient $R^{ab}_{\hspace{.7em}cd}$ is referred to as the Riemann-curvature tensor.
Ricci-curvature tensor and scalar curvature are defined, respectively, owing to the Riemann-curvature tensor as
\begin{align}
R^{ab}:=\Ri^{\bcdot a}_{\hspace{.7em}\bcdot \star}\eta^{b\star}~~\textrm{and }~~
R:=\Ri^{\bcdot\star}_{\hspace{.7em}\bcdot\star}.\label{RicciR}
\end{align}
The first and second Bianchi identities are
\begin{align*}
d_\www(d_\www\eee^a)=\cG\hspace{.2em}
\eta_{\bcdots}\RRR^{a\bcdot}\wedge\eee^{\bcdot} ~~\mathrm{and}~~
d_\www\RRR^{ab}=0.
\end{align*}

%
%
\subsection{Section, connection, and curvature}\label{app1}
This section introduces \emph{sections}, \emph{connections}, and \emph{curvatures} of a principal bundle.
In fibre bundle $(E,\pi,M,G)$, section is vector space $V$ in total space $E$ and denoted as $\Gamma(M,V(E))$.
We introduce section $s\in\Gamma(M,\Wedge^0(E))$ in the bundle.
Group operator ${G}$ acts on the section from the left like $G(s)={G}\hspace{.1em}s$.
A section is a function (a field in physics) defined on the base manifold $M$.
A connection is a Lie algebra valued one-form $\AAA_G\in\Wedge^1\otimes\ggg$ that keeping differential operator
\begin{align}
d^{~}_G\hspace{.1em}s:=ds+c^{~}_G\hspace{.2em}\AAA^{~}_G\hspace{.1em}s,\label{connec}
\end{align}
namely the \emph{covariant differential}, covariant under a group action ${G}\in End(M,\Wedge^p(E))$, where $\ggg$ is a Lie algebra of the structure group.
Corresponding curvature two-form $\FFF_G\in\Wedge^2\otimes\ggg$ is defined owing to the structure equation as
\begin{align}
\FFF^{~}_G&:=d\AAA^{~}_G+c^{~}_G\hspace{.2em}\AAA^{~}_G\wedge\AAA^{~}_G.\label{adjointG}
\end{align}
We introduce a coupling constant $c^{~}_G\in\C$ in the covariant differential, though it does appear not in the standard mathematics textbook.
The coupling constant can be absorbed in a definition of a connection and a curvature by scaling map $c^{~}_G\hspace{.1em}\AAA_G\mapsto\AAA_G$ and $c^{~}_G\hspace{.1em}\FFF_G\mapsto\FFF_G$, when a single bundle is considered. Nevertheless, a coupling constant is explicitly written in this study because when more than one bundle is twisted over the same base manifold, the coupling constant of each bundle provides relative strength of the interaction in physics.

Covariant differential for $p$-form object $\aaa\in\Wedge^p$ is defined as
\begin{align}
d^{~}_G\aaa:=d\aaa+c^{~}_G\hspace{.1em}[\AAA^{~}_G,\aaa]_\wedge,\label{connec2}
\end{align}
where $[\AAA^{~}_G,\aaa]_\wedge=\AAA^{~}_G\wedge\aaa-(-1)^{p}\hspace{.1em}\aaa\wedge\AAA^{~}_G$.
\begin{remark}\label{remA1}
When connection form $\AAA_G$ is transferred under group operation $G$ as
\begin{align*}
G(\AAA^{~}_G)=c^{-1}_G{G}{\bm{1}}^{~}_Gd{G}^{-1}+{G}\AAA^{~}_G{G}^{-1},
\end{align*}
operator $d_G$ is homomorphism $\Gamma(M,\Wedge^p(E))\rightarrow\Gamma(M,\Wedge^{p+1}(E))$ to preserve a relation $G(d_G\sss)={G}d_G\sss$, where ${\bm{1}}_G$ is a unit operator of $G$.
\end{remark}
\begin{proof}
Simple calculations yield
\begin{align*}
d{G}&=d({G}{G}^{-1}{G})=d{G}+{G}d({G}^{-1}){G}
+d{G}
\Rightarrow {G}d({G}^{-1}){G}=-d{G};
\end{align*}
thus,
\begin{align*}
G\left(d^{~}_G\sss\right)&=d\left({G}\sss\right)+
c^{~}_G\hspace{.1em}\left(c^{-1}{G}{\bm{1}}^{~}_Gd{G}^{-1}+{G}\AAA^{~}_G{G}^{-1}\right)
\wedge\left({G}\sss\right).
\end{align*}
At the same time,
\begin{align*}
G\left(d^{~}_G\sss\right)&=d{G}\sss
+c^{~}_G\hspace{.2em}G\left(\AAA^{~}_G\right)\wedge{G}\sss.
\end{align*}
Hence, the remark is maintained.
\end{proof}

When connection one-form $\AAA^{~}_G$ is given, curvature two-form $\FFF^{~}_G$ is provided owing to the structure equation (\ref{adjointG}).
When Lie-algebra valued two-form $\FFF^{~}_G$ is given, is the solution of the structure equation a connection?
\begin{remark}\label{remA3}
Suppose one-form $\hat\AAA^{~}_G$ exists as a solution of the differential equation $\hat\FFF^{~}_G=d\hat\AAA^{~}_G+c^{~}_G\hspace{.1em}\hat\AAA^{~}_G\wedge\hat\AAA^{~}_G$  for given $\hat\FFF^{~}_G\in\Omega^2(\TsE)\otimes\ggg$, where $\hat\FFF^{~}_G$ is transferred as an adjoint representation $G(\hat\FFF^{~}_G)=c^{-1}_G{G}{\bm{1}}^{~}_Gd{G}^{-1}+{G}\hat\FFF^{~}_G{G}^{-1}$, one-form $\hat\AAA^{~}_G\in\Omega^1(\TsE)\otimes\ggg$ is also transferred as an adjoint representation such that:
\begin{align}
G(\hat\AAA^{~}_G)=c^{-1}_G{G}{\bm{1}}^{~}_Gd{G}^{-1}+
{G}\hspace{.1em}\hat\AAA^{~}_G\hspace{.1em}{G}^{-1}.\label{A31}
\end{align}
\end{remark}
\begin{proof}
Suppose the transformation of $\hat\AAA_G$ can be expressed as 
\begin{align*}
G(\hat\AAA_G)={G}\hat\AAA_G{G}^{-1}+c^{-1}_Gf_{{G}}\hspace{.1em}{\bm{1}}_G,
\end{align*}
where $f_{{G}}$ is a function of ${G}$, ${dG}$, ${G}^{-1}$ and/or ${dG}^{-1}$, and it does not include $\hat\AAA_G$.
Two-form $\hat\FFF_G$ is transfered as
\begin{align*}
G(\hat\FFF_G)&=d(G(\hat\AAA_G))+
c^{~}_G\hspace{.1em}G(\hat\AAA_G)\wedge G(\hat\AAA_G),\\
&=d({G}\hat\AAA_G{G}^{-1})+
c^{~}_G\hspace{.1em}{G}(\hat\AAA_G\wedge\hat\AAA_G){G}^{-1}
+c^{-1}_Gdf_{{G}}{\bm{1}}_G
\\&~
+c^{-1}_Gf_{{G}}^2{\bm{1}}_G
+\{f_{{G}}{\bm{1}}_G,{G}\hat\AAA_G{G}^{-1}\},
\end{align*}
where $\{a,b\}:=ab+ba$.
On the other hand, it is obtained that
\begin{align*}
G(\hat\FFF_G)&={G}\hat\FFF_G{G}^{-1}
={G}d\hat\AAA_G{G}^{-1}+
c^{~}_G\hspace{.1em}{G}(\hat\AAA_G\wedge\hat\AAA_G){G}^{-1},
\end{align*}
owing to the assumption.
The first term on the right-hand side can be expressed as
\begin{align*}
{G}d\hat\AAA_G{G}^{-1}
&=d({G}\hat\AAA_G{G}^{-1})+
\left\{{G}{\bm{1}}_Gd{G}^{-1},{G}\hat\AAA_G{G}^{-1}\right\}.
\end{align*}
Therefore, following equations are obtained by comparing above two results as
\begin{align*}
\left\{f_{{G}}{\bm{1}}_G,
{G}\hat\AAA_G{G}^{-1}\right\}=
\left\{{G}{\bm{1}}_Gd{G}^{-1},
{G}\hat\AAA_G{G}^{-1}\right\},&~~~
df_{{G}}+f_{{G}}^2=0.
\end{align*}
The solution of the first equation is $f_{{G}}={G}\hspace{.1em}d{G}^{-1}$, which is consistent with the second equation such as
\begin{align*}
f_{{G}}^2&=
\left({G}\hspace{.1em}d{G}^{-1}\right)\wedge
\left({G}\hspace{.1em}d{G}^{-1}\right)=-
\left(d{G}\hspace{.1em}{G}^{-1}\right)\wedge
\left({G}\hspace{.1em}d{G}^{-1}\right)
=-d\left({G}\hspace{.1em}d{G}^{-1}\right),\\&\Rightarrow
df_{{G}}+f_{{G}}^2=0
\end{align*}
Hence, the transformation of $\hat\AAA_G$ is obtained as (\ref{A31}).
\end{proof}

%
%
\subsection{$\theta$-metric space}\label{q-metric}
We treat both Euclidean ($SO(4)$) and Lorentzian ($SO(1,3)$) space-time simultaneously in this study.
In the quantum field theory, we utilise an analytic continuation for a propagator function concerning the time coordinate.
An imaginary number replaces the time coordinate, and a point in Euclidean space is mapped to that in Minkowski space.
Here, we propose an alternative method: two metric spaces are continuously connected using a one-parameter family of metrics, namely the \emph{$\theta$-metric}.

Four-dimensional spaces with Eudlidean- or Lorentzian metric have accidental isomorphism such that:
\begin{align}
\begin{array}{ccc}
\sss\ooo(4)&=&\sss\uuu(2)\oplus\sss\uuu(2),\\
\sss\ooo(1,3)&=&\sss\lll(2,\C)\oplus\overline{\sss\lll(2,\C)}.
\end{array}\label{ismor}
\end{align}
Positional vector $\bm{v}_\xi$ pointing to position $\bm{\xi}=(x^0,x^1,x^2,x^3)^T$ in Riemannian manifold $(M,\bm{g})$ is represented like $\bm{v}_\xi=\bm\Sigma\cdot\bm\xi$ using bases 
\begin{align}
\bm\Sigma:=(\bm{1}_2,\kappa_g\hspace{.1em}\bm\sigma), \label{Sigma}
\end{align}
where $\bm{1}_2$ is a $(\TxT)$ identity matrix and  $\bm\sigma=(\sigma^1,\sigma^3,\sigma^3)$ are Pauli matrices.
Discrete function $\kappa_g$ is provided as 
\begin{align}
\kappa_g:=\frac{i}{\sqrt{\sigmag}},~~\textrm{where}~~\sigmag:=\mathrm{det}[\bm{g}], \label{fg}
\end{align}
yielding $\kappa_g=1$ for the Minkowski metric and $\kappa_g=i$ for the Eudlidean metric.
A vector norm squared is provided as $\mathrm{det}[\bm\Sigma\cdot\bm\xi] = g_\bcdots\xi^\bcdot \xi^\bcdot$.
In reality, $\bm{v}_\xi$ has representations such that:
\begin{align*}
  \bm{v}_\xi&=\left\{
  \begin{array}{cl}
    \left(
      \begin{array}{rr}
        ~~ x^0+i x^3 & i x^1 + x^2\\
         i x^1- x^2 &x^0 - i x^3\\
      \end{array}
     \right)&\textrm{for Eudlidean space}~~(\kappa_g=i),\\ \\
     \left(
      \begin{array}{rr}
         x^0-x^3 & -x^1 +i x^2\\
         -x^1-i x^2 &x^0+ x^3\\
      \end{array}
    \right)&\textrm{for the Minkowski space}~~(\kappa_g=1).
  \end{array}
\right.
\end{align*}
We note that a map $\bm{\xi}\rightarrow \bm{v}_\xi$  for Eudlidean space provides homeomorphism from $S^3$ to $SU(2)$.

We propose a novel method to treat Euclidean and Minkowski spaces simultaneously as two boundaries of the same higher-dimensional space.
$SU(2)$ basis (\ref{Sigma}) is extended to a one-parameter family of bases by replacing discrete parameter $\kappa_g$ with continuous function $\kappa(\theta)$.
We require the following properties for function $\kappa(\theta)$:
\begin{enumerate}
\item $\kappa(\theta)$ is a smooth function of continuous variable $\theta\in[0,1]$, and $\sigmag=-1$ at $\theta=0$ and $\sigmag=1$ at $\theta=1$ with $|\kappa(\theta)|=1$.
\item Complex-valued metric tensor $\bm\eta_\theta$ is introduced as a function of $\theta$ with the trivial basis.
\item $\kappa_g$ is replaced by continuous function $\kappa(\theta)$ according to $\sigmag\mapsto\sigmath:=\mathrm{det}[\bm{\eta}_\theta]$.
\end{enumerate}
We propose the following complex metric tensor and $\kappa(\theta)$ that fulfil above-mentioned requirements:
\begin{subequations}
\begin{align}
\left[\bm\eta^{~}_\theta\right]_{ab}&:=\mathrm{diag}\left(e^{i\pi\theta},-1,-1,-1\right),\label{Cmetric}\\
\kappa(\theta)&:=i\left(e^{-i\pi(\theta+1)}\right)^{1/2}.\label{ftheta}
\end{align}
\end{subequations}
We note that an inverse of the metric tensor is
\begin{align}
\left[\bm\eta^{-1}_\theta\right]^{ab}&:=\mathrm{diag}
\left(e^{-i\pi\theta},-1,-1,-1\right),\label{Cmetrici}
\end{align}
yielding
\begin{align*}
\bm\eta^{-1}_\theta\cdot\bm\eta^{~}_\theta=
\left[\bm\eta^{-1}_\theta\right]^{a\bcdot}\left[\bm\eta^{~}_\theta\right]_{\bcdot b}
=\delta^a_b,~~\textrm{for any}~\theta.
\end{align*}
We write, hereafter, elements of the metric tensor and its inverse as $\left[\bm\eta^{~}_\theta\right]_{ab}=\eta^{~}_{\theta\hspace{.1em}ab}$  and $\left[\bm\eta^{-1}_\theta\right]^{ab}=\eta_\theta^{\hspace{.3em}ab}$, respectively.
The Levi-Civita tensor in the $\theta$-metric space is defined as the completely anti-symmetric tensor with
\[
\epsilon^{~}_{0123}:=1,~~\textrm{and}~~
\epsilon^{0123}_{\theta}:=\frac{1}{4!}
\eta_\theta^{\hspace{.2em}0\bcdot}\eta_\theta^{\hspace{.2em}1\bcdot}
\eta_\theta^{\hspace{.2em}2\bcdot}\eta_\theta^{\hspace{.2em}3\bcdot}
\epsilon_{\bcdots\bcdots}=\mathrm{det}\left[\bm{\eta}^{-1}_\theta\right].
\]
The third requirement is maintained as 
\begin{align}
\kappa_g=\frac{i}{\sqrt{\sigmag}}&\mapsto\frac{i}{\sqrt{\sigmath}}
=i\sqrt{e^{-i\pi(\theta+1)}}=\kappa(\theta).\label{kappag}
\end{align}
We set a branch cut of multivalued function $\bullet^{1/2}$ along the negative real-axis; thus, function $\kappa(\theta)$ behaves around $\theta=0$ and $\theta=1$ like
\[
\kappa(\theta)\big|_{\theta\rightarrow1\pm0}
=i,~~\textrm{and}~~
\kappa(\theta)\big|_{\theta\rightarrow\pm0}
=
\left\{
\begin{array}{rl}
-1&(\theta\rightarrow+0)\\
1&(\theta\rightarrow-0)
\end{array}
.\right.
\]
Figure \ref{figF} shows the behaviour of function $\kappa(\theta)$ with $\theta\in[-2,2]$.
A manifold equipping the $\theta$-metric is denoted  as $(M_\theta,\bm{g}_\theta)$.
A line element in $\Tsm_\theta$ is provided as
\begin{align*}
ds^2:=\eta^{~}_{\theta\hspace{.1em}\bcdots}\hspace{.1em}\eee^\bcdot\otimes\eee^\bcdot
=[\bm{g}_\theta]_{\mu\nu}\hspace{.1em}dx^\mu{\otimes}dx^\nu,~~&\textrm{where}~~
[\bm{g}_\theta]_{\mu\nu}:=\eta^{~}_{\theta\hspace{.1em}\bcdots}\hspace{.1em}
\Varepsilon^\bcdot_\mu\Varepsilon^\bcdot_\nu,
\end{align*}
using a trivial bases.
\begin{figure}[t] 
\centering
\includegraphics[width={10cm}]{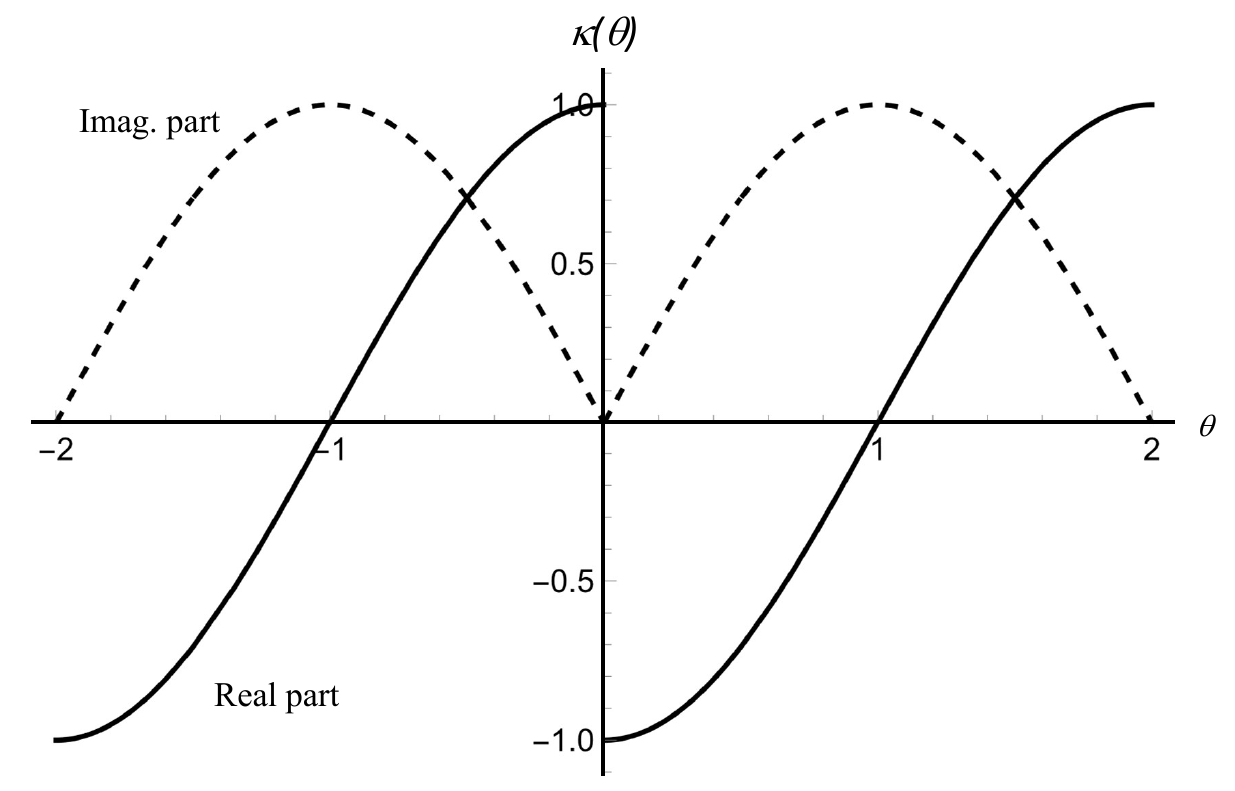}%
 \caption{
$\kappa(\theta)$ is drawn in $-2<\theta<2$.
Solid- and dashed-lines show real- and imaginary-parts of the function, respectively.
A branch cut of $\kappa(\theta)$ is set along the negative real-axis.} 
\label{figF} 
\end{figure}

A \emph{rotation} operator in the $\theta$-metric space with the trivial bases is 
\begin{align}
\left[\bm\Lambda_\theta(\phi)\right]^a_{~b}=\left(
  \begin{array}{cccc}
\cosh{\left(\kappa(\theta)\hspace{.1em}\phi\right)}&
e^{ i \pi\theta/2}\sinh{\left(\kappa(\theta)\hspace{.1em}\phi\right)}&0&0\\
e^{-i \pi\theta/2}\sinh{\left(\kappa(\theta)\hspace{.1em}\phi\right)}&
\cosh{\left(\kappa(\theta)\hspace{.1em}\phi\right)}&0&0\\
0&0&1&0\\
0&0&0&1
  \end{array}
\right),\label{rotM}
\end{align}
where $\phi\in\R$ is a \emph{rotational} angle.
It is a unitary operator on both boundary manifolds such that:
\begin{align*}
\bm\Lambda_\theta(-\phi)\big|_{\theta\rightarrow+0}=
\bm\Lambda_\theta(\phi)^{-1}\big|_{\theta\rightarrow+0}&=
\bm\Lambda_\theta(\phi)^\dagger\big|_{\theta\rightarrow-0},\\&=
\left(
  \begin{array}{ccc}
\cosh{\left(\phi\right)}&\sinh{\left(\phi\right)}&\bm{0}\\
\sinh{\left(\phi\right)}&\cosh{\left(\phi\right)}&\bm{0}\\
\bm{0}&\bm{0}&\bm{1}
  \end{array}
\right),\\
\bm\Lambda_\theta(-\phi)\big|_{\theta\rightarrow1}=
\bm\Lambda_\theta(\phi)^{-1}\big|_{\theta\rightarrow1}&=
\bm\Lambda_\theta(\phi)^\dagger\big|_{\theta\rightarrow1},\\&=
\left(
  \begin{array}{ccc}
\hspace{.9em}\cos{\left(\phi\right)}&\sin{\left(\phi\right)}&\bm{0}\\
-\sin{\left(\phi\right)}&\cos{\left(\phi\right)}&\bm{0}\\
\bm{0}&\bm{0}&\bm{1}
  \end{array}
\right),
\end{align*}
and provides a rotation in a $(x^0$-$x^1)$-plain when $\theta=1$ and a boost along $x^1$-axis when $\theta=+0$.
Moreover, it is isometric for any fixed $\theta$ in $0\le\theta\le1$ that is confirmed by direct calculations for vector $\bm{v}\in{V^1(TM}_{\hspace{-.2em}\theta})$ such that:
\begin{align*}
|\bm{v}|^2&:=\eta^{~}_{\theta\hspace{.1em}\bcdots} v^\bcdot v^\bcdot=
\eta^{~}_{\theta\hspace{.1em}\bcdots} 
(\bm\Lambda_\theta.\bm{v})^\bcdot 
(\bm\Lambda_\theta.\bm{v})^\bcdot
=|\bm{v'}|^2
\end{align*}
for any $\phi$.
The single rotation operator is represented by operator (\ref{rotM}) without any loss of generality, and a rotation around the $x^0$-axis is trivially preserving a norm of vectors; thus, the rotation invariance of a vector norm is maintained.

%
%

\subsection{Duplex superspace}\label{HodgeZ2Dirac}
When a two-state discrete operator exists on a vector space, the $\Z_2$-grading structure, namely \emph{superstructure}, primarily appears.
Based on this primary space, the secondary $\Z_2$-grading superstructure is constructed using the one-dimensional Clifford algebra in the $(\TxT)$-matrix representation.

The Hodge-dual operator is an example of a two-state discrete operator.
The Yang--Mills theory utilises the Hodge-dual operator for its formulation.
General relativity also has the same structure concerning the Hodge-dual operator.
This section first provides a general construction of this duplex superspace in a vector space on a smooth manifold; then, introduces an example of the duplex superspace owing to the Hodge-dual operator.

%
%
\subsubsection{General structure}\label{Gen-Z2-structure}
Suppose $(E,\pi,M,G)$ is a principal bundle, where base space $M$ is a $n$-dimensional Riemannian manifold with a metric tensor $\bm{g}$.
Fibre $V_p=\pi^{-1}(p)$ is a vector space at $p\in M$ and trivial bundle $V:=\bigcup_pV_p$ is defined over $M$.
Involution $\hat{h}:V\rightarrow V$, namely the \emph{parity operator} in this study, acts on a vector $v\in{V}$ such that:
\begin{align}
\hat{h}\bcdot\hat{h}(v):=v;\label{hhv2sigmav}
\end{align}
thus, operator $\hat{h}$ has eigenvalue $\pm1$.

A projection operator and its eigenvectors are, respectively, provided as
\begin{align}
P^\pm_{v}:=\frac{1}{2}\left(\bm{1}
\pm\hat{h}\right)~~\textrm{and}~~
v^\pm:=P_v^\pm v\in V^\pm,\label{prohecth}
\end{align}
where $\bm{1}$ is an identity operator.
A dual object concerning the parity operator is denoted as $\hat{\bullet}:=\hat{h}({\bullet})$, e.g., $\hat{v}:=\hat{h}(v)$.
A relation between $(v,\hat{v})$ and $(v^+,v^-)$ is 
\begin{align*}
v=v^++v^-~~\textrm{and}~~\hat{v}=v^+-v^-.
\end{align*}

%
%
\paragraph{\textbf{Primary superspace:}}\label{PrimZ2}
Owing to an image of the projection operator, the vector space splits into two subspaces such that:
\begin{align*}
\textrm{Im}(P^+_v:V\rightarrow V)&=\textrm{Ker}(P^-_v:V\rightarrow V)~=:~V^+,\\
\textrm{Im}(P^-_v:V\rightarrow V)&=\textrm{Ker}(P^+_v:V\rightarrow V)~=:~V^-.
\end{align*}
Splitting $V=V^+\oplus V^-$ immediately follows from the definition (\ref{prohecth}).
Operator $\hat{h}$ acts on $v^\pm$ as 
\begin{align*}
\hat{h}(v^\pm)&=\frac{1}{2}\left(
\hat{h}\pm\hat{h}\bcdot\hat{h}
\right)v
=\pm\frac{1}{2}\left(
\bm{1}\pm\hat{h}
\right)v=\pm v^\pm;
\end{align*}
thus, $v^\pm$ are eigenvectors of the parity operator with eigenvalues $\pm1$.
Complementary operator $\bar{P}_v^\pm:=\bm{1}-P_v^\pm$
acts on $v$ such that $\bar{P}^\pm v^\pm=v^\mp$ since
\begin{align*}
P_v^\pm\hspace{.1em}\bar{P}_v^\pm&=P_v^\pm-P_v^\pm\hspace{.1em}P_v^\pm~=~0,\\
P_v^\mp\hspace{.1em}\bar{P}_v^\pm&=P_v^\mp-P_v^\mp\hspace{.1em}P_v^\pm~=~P_v^\mp.
\end{align*}
We note that endomorphisms on vector space $V$ induce the $\Z_2$-grading space such that:
\begin{align}
End^+(V)&=Hom(V^+,V^+)\oplus Hom(V^-,V^-),\nonumber\\
End^-(V)&=Hom(V^+,V^-)\oplus Hom(V^-,V^+).\label{endm}
\end{align}
E.g., $\hat{h},P^\pm\in End^+(V)$ and $\bar{P}^\pm\in End^-(V)$. 
A product of two endomorphisms yields an algebra, namely \emph{superalgebra}, such that:
\begin{align}
{\cal O}^\pm\bcdot\hspace{.1em}{\cal O}^\pm\cong{\cal O}^+~~\textrm{and}~~
{\cal O}^\pm\bcdot\hspace{.1em}{\cal O}^\mp\cong{\cal O}^-,\label{superalgOO}
\end{align}
where ${\cal O}^\pm\in{End}^\pm(V)$.
A $\Z_2$-grading space equipping superalgebra is referred to as the \emph{superspace} in this study.
The parity operator generally induces the superspace in bundle $E$, namely the \emph{primary superspace} of $E$ concerning parity operator $\hat{h}$.

For given $v\in\Omega^{n/2}$, a pair of vectors 
\begin{align}
\KP(v):=(v,\hat{v}) 
\end{align}
is referred to as the \emph{Kramers pair} in this study.
The Kramers pair of the principal curvature is introduced owing to $\FFF_G^{~}$ and its dual $\hat{\FFF}_G^{~}$ as $\KP(\FFF^{~}_G)=(\FFF_G^{~},\hat{\FFF}_G^{~})$ in four-dimension.
A dual connection of ${\AAA}_G^{~}$ is defined as a solution of the structure equation,
\begin{align}
\hat\FFF_G^{~}=d\hat\AAA_G^{~}+
\hat{c}^{~}_G\hspace{.2em}\hat\AAA_G^{~}\wedge\hat\AAA_G^{~},\label{steqAG}
\end{align}
for given $\hat\FFF_G^{~}$ and $\hat{c}^{~}_G$.
Thus, it is a connection of the bundle owing to \textbf{Remark \ref{remA3}}.
We assume the existence of $\hat\AAA_G^{~}$ here. 
An existence condition of the dual connection is discussed in \textbf{Appendix \ref{AppDC}}.
We use a short-hand representation of the parity operator on the connection one-form such that:
\begin{align*}
\hat{h}:\Omega^1\rightarrow\Omega^1:\AAA_G^{~}\mapsto
\hat{h}(\AAA_G^{~})=\hat\AAA_G^{~}.
\end{align*}
Generally, connections $\AAA_G^{~}$ and $\hat\AAA_G^{~}$ have different Chern classes to each other; thus, the fibre bundle with $\hat\AAA_G^{~}$ is different from the one with $\AAA_G^{~}$.
The bundle having the connection $\hat\AAA_G^{~}$ is denoted as $\hat{E}$.

We introduce complex scalar function $\phi$ as an example of a section with a unit eigenvalue for the squared parity operator.
A structure group operator acts on $\phi$ as the fundamental representation. 
Total space $E$ ($\hat{E}$) has section field $\phi$ ($\hat\phi$), which is  respectively given as a solution of an equation of motion:
\begin{subequations}
\begin{align}
(id_{\AAA}-\mu)\phi&:=
(id+i{c}^{~}_G\hspace{.1em}\AAA^{~}_G-\mu)\phi=0,\label{eomphi0}\\
(id_{\hat\AAA}-\mu)\hat\phi&:=
(id+i\hat{c}^{~}_G\hspace{.1em}\hat\AAA^{~}_G-\hat\mu)\hat\phi=0,\label{eomphihat0}
\end{align}
\end{subequations}
where $\mu$ and $\hat\mu$ are called  \emph{particle  mass} in physics. 

%
%
\paragraph{\textbf{Secondary superspace:}}\label{SecondZ2}
We introduce another superspace over the primary superspace utilising the $(\TxT)$-matrix space.
Kramers pairs for given $v\in V$ and $\phi\in\Phi$ are extended to the $(\TxT)$-matrix space as
\begin{align*}
\bm{v}_2:=\left(
\begin{array}{cc}
	v & 0\\
 	0 & \hat{v}
\end{array}
\right)~~\textrm{and}~~
\bm\Phi_2:=\left(
\begin{array}{c}
\phi\\ \hat\phi
\end{array}
\right).
\end{align*}
A $\Z_2$-grading unitary-representation of one-dimensional Clifford algebra $\gamma$ is introduced together with chiral operator $\Gamma$ in the ($\TxT$)-matrix representation such that:
\begin{align}
\gamma:=\left(
\begin{array}{cc}
	0 & \bm{1}\\
 	\bm{1} & 0
\end{array}
\right)~~\textrm{and}~~
\Gamma:=\left(
\begin{array}{cr}
	\bm{1} &   0\\
 	0  & -\bm{1}
\end{array}
\right),\label{2x2Clifford}
\end{align}
yielding $\gamma\Gamma=\Gamma\gamma$ and $\gamma^{2}=\Gamma^{2}=\bm{1}_2$, where $\bm{1}_2:=$diag$(\bm{1},\bm{1})$.
The parity operator in the ($\TxT$)-matrix representation is given as
\begin{align}
\hat{\bm{h}}_2:=
\gamma\Gamma\left(
\begin{array}{cc}
	\hat{h} & 0\\
 	0 & \hat{h}
\end{array}
\right)=
\left(
\begin{array}{cc}
	0& -\hat{h}\\
 	\hat{h} &0
\end{array}
\right)=-\hat{\bm{h}}_2^{-1},\label{2x2hhat}
\end{align}
and acts on vectors as
\begin{align}
\hat{\bm{h}}_2\bm{v}_2\hat{\bm{h}}_2^{-1}=
\bm{v}_2\big|_{v\leftrightarrow\hat{v}}
~~\textrm{and}~~
\hat{\bm{h}}_2\bm{\Phi}_2=
\left(
\begin{array}{r}
	-\hat\phi\\\phi
\end{array}
\right).\label{phihat}
\end{align}
Field $\bm{\Phi}_2$ is called a \emph{spinor} in the secondary superspace.

In the $(\TxT)$-matrix space, we introduce the $\Z_2$-grading superalgebra by assigning an even (odd) parity to diagonal (anti-diagonal) matrices, respectively.
Any $(\TxT)$-matrices can be represented as a sum of diagonal- and anti diagonal-matrices.
Therefore, a space of $(\TxT)$-matrix is split into two subspaces as ${\bm{M}}_2={\bm{M}}_2^+\oplus{\bm{M}}_2^-$, where ${\bm{M}}_2^+$ and ${\bm{M}}_2^-$ are spaces of diagonal and anti-diagonal matrices, respectively. 
A matrix multiplication operator provides the superalgebra for $\bm{m}^\pm,\bm{n}^\pm\in{\bm{M}}_2^\pm$ such that:
\begin{align}
{\bm{m}}^\pm\cdot\bm{n}^\pm\in{\bm{M}}_2^+ &~~\textrm{and }~~
{\bm{m}}^\pm\cdot{\bm{n}}^\mp\in{\bm{M}}_2^-.\label{superalgebra}
\end{align}
The $\Z_2$-grading superspace, namely the \emph{secondary superspace}, is induced in the ($\TxT$)-matrix representation as follows:
Clifford algebra $\gamma$ flips a parity of matrices such as
\begin{align*}
\gamma:{\bm{M}}_2^\pm\rightarrow{\bm{M}}_2^\mp:
\bm{m}^\pm\mapsto\gamma\bm{m}^\pm\in{\bm{M}}_2^\mp.
\end{align*}
In the secondary superspace, parity and projection operators are represented as
\begin{align*}
\hat{\bm{h}}_2:=\hat{h}\bm{1}_2~~\textrm{and}~~
{\bm{P}}^\pm_2:=\frac{1}{2}
\left(\bm{1}_2\pm\hat{\bm{h}}_2\right).
\end{align*}

The Dirac operator is generally induced in the secondary superspace using the one-dimensional Clifford algebra. 
In this study, we define the Dirac operator as follows\cite{berline2003heat}:
\begin{definition}[Dirac operator]\label{DiracOp}
Suppose $\bm{V}^\pm$ is a superspace and $\bm{v}^\pm\in\bm{V}^\pm$.
The Dirac operator $\D$ is a first-order differential operator flipping a parity such that:
\begin{align*}
\D:\bm{V}^\pm\rightarrow \bm{V}^\mp:
\bm{v}^\pm\mapsto\D\bm{v}^\pm\in\bm{V}^\mp.
\end{align*} 
$\D^2$ is referred to as a generalized Laplacian. 
\end{definition}
\noindent
E.g., a covariant differential in the secondary superspace induces the Dirac operator as follows:
The covariant differential is defined in the secondary superspace as
\begin{align*}
D^{~}_\AAA&:=\left(
\begin{array}{cc}
	d^{~}_{\AAA} &  0\\
 	    0  & \hat{d}^{~}_{\AAA}
\end{array}
\right)=
\left(
\begin{array}{cc}
	d+c^{~}_G\hspace{.1em}[\AAA^{~}_G,~]_\wedge &  0\\
 	    0  & d+\hat{c}^{~}_G\hspace{.1em}[\hat\AAA^{~}_G,~]_\wedge
\end{array}
\right).
\end{align*} 
The Dirac operator in the secondary superspace is defined as
\begin{align*}
\slash{D}^{~}_\AAA&:=\gamma D^{~}_\AAA=
\left(
\begin{array}{cc}
	0 & d^{~}_{\hat\AAA} \\
 d^{~}_{\AAA}&  0
\end{array}
\right),
\end{align*}
which fulfils the \textbf{Definition \ref{DiracOp}}.
The Bianchi identities are represented using the Dirac operator such that:
\begin{align*}
\slash{D}^T_\AAA\slash{D}^{~}_\AAA=D^{~}_\AAA D^{~}_\AAA=
\left(
\begin{array}{cc}
	c^{~}_G\hspace{.1em}\FFF_G^{~} &  0\\
 	    0  & \hat{c}^{~}_G\hspace{.1em}\hat\FFF_G^{~}
\end{array}
\right)~~\textrm{and}~~
\slash{D}^{~}_\AAA\left(
\begin{array}{cc}
	\FFF_G^{~} &  0\\
 	    0  & \hat\FFF_G^{~}
\end{array}
\right)=0.
\end{align*}
An equation of motion for the section field is also given as the Dirac operator as $\slash{D}^{~}_\AAA\bm\Phi_2=0$.
The Dirac conjugate of the section field is defined as 
\begin{align*}
\bar{\bm\Phi}^{~}_2:=\bm\Phi_2^\dagger\gamma=(\hat\phi^*,\phi^*),
~~&\textrm{yielding}~~
\bar{\bm\Phi}^{~}_2\slash{D}^{~}_\AAA\hspace{.1em}\bm\Phi^{~}_2=
{\phi}^*d^{~}_{\AAA}\phi+
{\hat\phi}^*\hat{d}^{~}_{\AAA}\hat\phi.
\end{align*}
 
A supertrace\cite{berline2003heat} is defined in the secondary superpsace as
\begin{align}
Str[\bm{m}]:=\mathrm{Tr}[\Gamma\bm{m}]=\left\{
\begin{array}{cl}
\mathrm{Tr}[m^{~}_{11}]-\mathrm{Tr}[m^{~}_{22}] & 
\bm{m}\in\bm{M}_2^+\\
0 & \bm{m}\in\bm{M}_2^-
\end{array}
\right.,\label{SupTr}
\end{align}
where 
\begin{align*}
\bm{m}:=
\left(
\begin{array}{cc}
	m^{~}_{11} &  0\\
 	    0  & m^{~}_{22}
\end{array}
\right)\in\bm{M}_2^+,&~~\textrm{or}~~
\bm{m}:=\left(
\begin{array}{cc}
	0 &  m^{~}_{12}\\
 	m^{~}_{21} & 0  
\end{array}
\right)\in\bm{M}_2^-.
\end{align*}
For the parity-odd curvature defined as 
\begin{align*}
\slash{\FFF}:=\gamma\hspace{.1em}\left(
\begin{array}{cc}
	\FFF_G^{~} &  0\\
 	    0  & \hat\FFF_G^{~}
\end{array}
\right)=\left(
\begin{array}{cc}
 0  & \hat\FFF_G^{~} \\
 	\FFF_G^{~} &  0
\end{array}
\right),
\end{align*}
and the supertrace of the curvature squared yields 
\begin{align*}
Str[\slash{\FFF}\hspace{.1em}\wedge\hspace{-.2em}\slash{\FFF}]&=
Str\left[\left(
\begin{array}{cc}
	\hat\FFF_G^{~}\wedge\FFF_G^{~} &  0\\
 	    0  & \FFF_G^{~}\wedge\hat\FFF_G^{~}
\end{array}
\right)\right],\\&=
\Tr\left[\hat\FFF_G^{~}\wedge\FFF_G^{~}\right]-\Tr\left[\FFF_G^{~}\wedge\hat\FFF_G^{~}\right]
=2\hspace{.1em}\Tr\left[\hat\FFF_G^{~}\wedge\FFF_G^{~}\right].
\end{align*}
We note that
\[
\left[\FFF^{~}_G\wedge\hat\FFF^{~}_G-\hat\FFF^{~}_G\wedge\FFF^{~}_G\right]^I:=
if^I_{~JK}\left[\FFF_G^J,\hat\FFF_G^K\right]=
2\left[\FFF_G\wedge\hat\FFF_G\right]^I,
\]
where $f^I_{~JK}$ is a structure constant\footnote{
Capital Roman letters represent indices of the structure group, and the Einstein convention is also applied to them.
} of structure group $G$.

%
%
\subsubsection{Hodge-dual as a parity operator}\label{242}
The Hodge-dual operator maps a $p$-form object to an $(n\hspace{-.1em}-\hspace{-.1em}p)$-form object in an $n$-dimensional smooth Riemannian manifold.
This section discusses the duplex superspace with respect to the Hodge-dual operator as the parity operator in the Lorentz bundle.
This section treats the Lorentz bundle with the $\theta$-metric.

%
%
\paragraph{\textbf{Primary superspace:}}\label{hodge}
\begin{definition}[Hodge-dual operator]\label{ctm}
Suppose $M$ is an $n$-dimensional oriented and smooth manifold and $\aaa,\bbb\in\Wedge^p(\Tsm)$ are $p$-form objects.
The Hodge-dual operator, denoted as $\HD$, is defined to give
\begin{align*}
\aaa\wedge\HD(\bbb):=g(\aaa,\bbb)\vvv.
\end{align*}
\end{definition}
\noindent
When $\aaa,\bbb$ are represented using tensor coefficients concerning orthogonal basis $\eee^\bullet$ such that: 
\begin{align*}
\aaa&:=\frac{1}{p!}{a}_{i_1\cdots i_p}\eee^{i_1}\wedge\cdots\wedge\eee^{i_p}
\hspace{.2em}\in\Wedge^p(\Tsm)~~\textrm{for}~~(0\leq{p\in\Z}\leq{n}),\\\
\bbb&:=\frac{1}{p!}{b}_{i_1\cdots i_p}\eee^{i_1}\wedge\cdots\wedge\eee^{i_p}
\hspace{.2em}\in\Wedge^p(\Tsm)~~\textrm{for}~~(0\leq{p\in\Z}\leq{n}),
\end{align*}
a bilinear form concerning the metric tensor is defined as
\begin{align*}
g_\theta(a,b)=\frac{\sqrt{\sigmath}}{p!}
\eta_\theta^{~i_1j_1}\cdots\eta_\theta^{~i_pj_p}
a^{~}_{i_1{\cdots}i_p}b^{~}_{j_{1}{\cdots}j_p}
\end{align*}
in this study.
The Hodge-dual operator fulfilling \textbf{Definition \ref{ctm}} has a component representation of
\begin{align*}
&~\HD:\Omega^p\rightarrow\Omega^{n-p}:\bbb\mapsto\hat{\bbb}
=\frac{\sqrt{\sigmath}}{p!(n-p)!}{b}_{i_1{\cdots}i_{p}}
\left[\bm{\epsilon}_\theta\right]^{i_1{\cdots}i_{p}}_{\hspace{2.4em}i_{p+1}{\cdots}i_{n}}
\eee^{i_{p+1}}\wedge\cdots\wedge\eee^{i_{n}},
\end{align*}
where 
\begin{align*}
\left[\bm{\epsilon}_\theta\right]^{i_1{\cdots}i_{p}}_{\hspace{2.4em}i_{p+1}{\cdots}i_{n}}
:=\frac{1}{p!}\eta_{\theta}^{~i_1j_1}\cdots\eta_{\theta}^{~i_pj_p}
\epsilon_{j_1\cdots j_{p}i_{p+1}{\cdots}i_{n}},
\end{align*}
and $\bm\epsilon_\theta$ is a $n$-dimensional completely anti-symmetric tensor with $[\bm\epsilon_\theta]_{01\cdots n-1}=+1$. 
The operator acts on tensor coefficients of $\bbb$  as
\begin{align*}
\HD:{b}_{i_1{\cdots}i_p}\mapsto\hat{b}_{i_{p+1}{\cdots}i_{n}}
:=&\frac{\sqrt{\sigmath}}{p!}{b}_{i_{1}{\cdots}i_{p}}
\left[\bm{\epsilon}_\theta\right]^{i_{1}{\cdots}i_{p}}_{\hspace{2.4em}i_{p+1}{\cdots}i_{n}},
\intertext{with}
\hat{\bbb}=&\frac{1}{(n-p)!}\hat{b}_{i_1{\cdots}i_{n-p}}
\eee^{i_1}\wedge\cdots\wedge\eee^{i_{n-p}}.
\end{align*}
E.g., the two-dimensional surface form in the four-dimensional space-time is expressed using the Hodge-dual operator such that 
\begin{align*}
\HD({\eee^a\wedge\eee^b})=\sqrt{\sigmath}\eta_{\theta}^{~a\bcdot}\eta_{\theta}^{~b\bcdot}\SSS_\bcdots.
\end{align*}
Another example of a form object obtained by the Hodge-dual is the volume form.
According to the definitions of the Hodge-dual operator and the volume form defined by (\ref{VolForm}), we obtain that
\begin{align}
\HD(1)=\sqrt{\sigmath}\eee^0\wedge\cdots\wedge\eee^{n-1}
=\sqrt{\sigmath}\hspace{.2em}\vvv=:\vvv_\theta,~~&\textrm{and}~~~
\HD\bcdot\HD(\aaa)=(-1)^{p(n-p)}\aaa\in\Omega^p.\label{HHa}
\end{align}
Therefore, the Hodge-dual operator is the parity operator when $p(n-p)$ is an even number.

An $L^2$ inner-product of $\aaa,\bbb$ is defined using the Hodge-dual operator as
\begin{align}
\langle{\aaa,\bbb}\rangle_\theta&:=\int_{M_\theta}\aaa\wedge\HD(\bbb)=
\int_{M_\theta}g_\theta(a,b)\hspace{.2em}\HD(1)=\langle{\bbb,\aaa}\rangle_\theta
.\label{bilinear}
\end{align}

A pseud-norm of a $p$-form object is defined as a square-root of 
\begin{align}
\|\aaa\|_\theta^2:=\langle\aaa,\aaa\rangle_\theta=
\int_{M_\theta}g_\theta(a,a)\hspace{.2em}\HD(1).\label{L2Norm}
\end{align} 
This pseud-norm squared is positive definite only at $\theta=0$ (the Eudlidean metric).

When a dimension of the manifold is an even number and $\aaa$ is an $(n/2)$-form, operator $\HD$ is endomorphism. (For the four-dimensional space-time, see, e.g., Refs.\cite{Atiyah425},\cite{Woit:2021bmb}.)
\begin{remark}\label{vpvm}
Operator $\HD$ has two-eigenvalues $\pm1$.
A space of two-form objects splits into two subspaces, namely a positive- and negative-parity spaces, concerning eigenvalues of operator $\HD$ like 
\begin{align}
V(\Omega^{n/2})=V^+(\Omega^{n/2})\oplus V^-(\Omega^{n/2}).\label{superspace} 
\end{align}
\end{remark}
\begin{proof}
Suppose $\aaa$ is an eigenvector of operator $\HD$. 
Due to (\ref{HHa}) with $p=n/2$, $\HD$ has two-eigenvalues of $\pm1$.

An orthonormal basis of $\Omega^1(\Tsm)$ is represented as $\{\eee^0,\cdots,\eee^{n-1}\}$.
An orthonormal basis of $(n/2)$-form objects is constructed using this basis such that
\begin{align*}
\tilde\SSS^I:=\eee^{i_{c(1}}\wedge\cdots\wedge\eee^{i_{n/2)}},~~\textrm{with}~~I=1,\cdots,m:=\frac{n!}{((n/2)!)^2},
\end{align*}
where $c(1\cdots n/2)$ runs over all possible combination of $(n/2)$-integers from $0$ to $n-1$.
We note that $m$ is even integer for any even number $n$.
Owing to the definition of the Hodge-dual operator, we can always take $(m/2)$-pairs of bases $(\tilde\SSS^I,\tilde\SSS^J)$ fulfiling
\begin{align*}
\HD(\tilde\SSS^I)=(-1)^{n/2}\hspace{.1em}\tilde\SSS^J~~\textrm{and}~~
\HD(\tilde\SSS^J)=\tilde\SSS^I.
\end{align*}
Owing to pairs $(\tilde\SSS^I,\tilde\SSS^J)$, new bases are constructed as
\begin{align}
\tilde\SSS^\pm:=\tilde\SSS^I\pm(-1)^{n/2}\hspace{.1em}\tilde\SSS^J,\label{SISJ}
\end{align}
yielding 
\begin{align*}
\HD\left(\tilde\SSS^+\right)&=
(-1)^{n/2}\left(\tilde\SSS^J+\tilde\SSS^I\right)=
+(-1)^{n/2}\tilde\SSS^+\in V^+(\Omega^{n/2}),\\
\HD\left(\tilde\SSS^-\right)&=
(-1)^{n/2}\left(\tilde\SSS^J-\tilde\SSS^I\right)=
-(-1)^{n/2}\tilde\SSS^-\in V^-(\Omega^{n/2}).
\end{align*}
The $(n/2)$-form objects $\tilde\SSS^\pm$ are linearly independent from one another; any two-form objects can be represented using a linear combination of these forms.
Therefore, the remark is maintained. 
\end{proof}
\noindent
A space of endomorphisms in $V(\Omega^{n/2})$ splits into $\Z_2$-grading subspaces like  (\ref{endm}), and superalgebra is induced as (\ref{superalgOO}).
E.g., the Hodge-dual operator on $(n/2)$-form objects in $n$-dimensional space belongs to $End^+(\Omega^{n/2})$.

So far in this section, we have discussed the superspace in $n$-dimensional manifold.
Hereafter, we restrict a space-time manifold to four dimensional; thus, $(-1)^{n/2}=1$.

Projection operator $P^\pm_\textrm{H}$ is defined as
\begin{align}
P^\pm_\textrm{H}:=\frac{1}{2}\left(1\pm\HD\right),\label{projHodge}
\end{align}
which projects $2$-form objects onto one of two eigenspaces such that $P^\pm_\textrm{H}\aaa=\aaa^\pm\in V^\pm(\Omega^{2})$.
The $2$-form object belonging to $V^+$ ($V^-$) is referred to as the self-dual (SD) (the anti-self-dual (ASD)) forms, respectively.
Two-form objects $\aaa^\pm\in V^\pm$ are eigenvectors of the Hodge-dual operator owing to (\ref{HHa}) and (\ref{projHodge}) such that:
\begin{align}
\HD(\aaa^\pm)=\pm\aaa^\pm.\label{Hapm}
\end{align}

In the four-dimensional $\theta$-metric space, the Hodge-dual operator induces the superspace in the space of curvature two-forms: SD and ASD curvatures are
\begin{align*}
\FFF_G^\pm:=P^\pm_\textrm{H}\FFF_G^{~}~~\textrm{and}~~
\HD(\FFF_G^\pm)=\pm\FFF_G^\pm
\in V^\pm\left(\Omega^2\right)\otimes\ggg.
\end{align*}
Owing to isomorphism (\ref{ismor}), a space of one-form objects in $\Tsm_{\hspace{-.2em}\theta}$ is also split into two subspaces:
\begin{align*}
V(\Omega^1)&=
V^+(\Omega^1)\oplus V^-(\Omega^1).
\end{align*}
The SD and ASD curvatures are represented using corresponding connections like
\begin{align}
\FFF_{G}^\pm&=d\AAA_{G}^\pm+c^\pm_G\hspace{.2em}
\AAA_{G}^\pm\wedge\AAA_{G}^\pm~\in~V^\pm(\Omega^2)\otimes\ggg.\label{A+A+}
\end{align}
One-form objects in $V^\pm(\Omega^1)$ are spanned by the basis 
\begin{align}
\eee^{03}_\pm:=\eee^0\pm \kappa(\theta)\hspace{.1em}\eee^3~~\textrm{and}~~
\eee^{12}_\pm:=\eee^1\pm i\hspace{.1em}\eee^2.\label{2bases}
\end{align}
Four base-vectors $\{\eee^{03}_\pm,\eee^{12}_\pm\}$ are independent from one another; any one-form objects can be expanded these basis like $\aaa=\alpha_1\eee^{03}_++\alpha_2\eee^{03}_-+\alpha_3\eee^{12}_++\alpha_4\eee^{12}_-$.
Connection one-form $\AAA_G^\pm$ with structure group $G$ is provided as 
\begin{align}
\AAA_G^\pm:=
[\AAA_G^\pm]^I\hspace{.2em}\tau^{~}_I=(a^I_{03}\hspace{.1em}\eee_+^{03}
+\hspace{.1em}a^I_{12}\hspace{.1em}\eee^{12}_\pm)\tau^{~}_I
\in V^\pm(\Omega^1)\otimes\ggg,\label{AAAdif}
\end{align}
where $\tau^{~}_I\in\ggg$ is a fundamental representation of $\ggg$. 
This connection provides
\begin{align*}
-ic^{\pm}_G\hspace{.1em}
\AAA_G^\pm\wedge\AAA_G^\pm&=
\frac{c^\pm_{G}}{2}f^I_{\hspace{.1em}JK}[\AAA_G^\pm]^J\wedge[\AAA_G^\pm]^K\hspace{.2em}\tau^{~}_I
\in V^\pm(\Omega^2)\otimes\ggg
\end{align*}
for any $\theta\in[0,1]$; thus, bases $\{\eee^{03}_\pm,\eee^{12}_\pm\}$ and  $\{\eee^{03}_\pm,\eee^{12}_\mp\}$ give $\AAA_\SU^+$ and $\AAA_\SU^-$, respectively.
We note that $\AAA_G^+\wedge\AAA_G^-=\AAA_G^-\wedge\AAA_G^+\neq0$.
When $\AAA_G^\pm$ fulfils holomorphic relations
\begin{align}
\left(\partial_0-\kappa(\theta)^{-1}\partial_3\right)a^I_\bullet=0,&~~\textrm{and}~~
\left(\partial_1\pm i\partial_2\right)a^I_\bullet=0,\label{holomorphism}
\end{align}
for both $a^I_\bullet=a^I_{03}$ and $a^I_{12}$, they provide $d\AAA_G^\pm\in V^\pm(\Omega^2)\otimes\ggg$, and (\ref{A+A+}) is maintained with any $\theta\in[0,1]$.

For structure groups $SO(1,3)$ and $SO(4)$, a space of the spin-connection is also split into two subspaces.  
A curvature is obtained owing to the structure equation (\ref{RRR}) in one of the subspaces of connections as a consequence of isomorphism (\ref{ismor}).
More precisely, each curvature is obtained owing to the structure equation:
\begin{align}
\left[\RRR^\pm\right]^{ab}&:=d\left[\www^\pm\right]^{ab}+\cG^{\hspace{-.6em}\pm}
\hspace{.2em}[\bm\eta^{~}_\theta]_\bcdots\left[\www^\pm\right]^{a\bcdot}\wedge\left[\www^\pm\right]^{\bcdot b},\label{StEqpm}
\intertext{where}
\left[\www^\pm\right]^{ab}&:=\left(
\begin{array}{cccc}
 0& \Omega^{1}\hspace{.1em}\eee^{03}_+
     +\Omega^{2}\hspace{.1em}\eee^{12}_\pm&
   -\Omega^{3}\hspace{.1em}\eee^{03}_+
     -\Omega^{4}\hspace{.1em}\eee^{12}_\pm&
    \Omega^{5}\hspace{.1em}\eee^{03}_+
      +\Omega^{6}\hspace{.1em}\eee^{12}_\pm\\
    ~&0&
     \hspace{.8em}\Omega^{5}\hspace{.1em}\eee^{03}_+
    +\Omega^{6}\hspace{.1em}\eee^{12}_\pm&
     \Omega^{3}\hspace{.2em}\eee^{03}_+
    +\Omega^{4}\hspace{.2em}\eee^{12}_\pm\\
    ~&~&0&
     \Omega^{1}\hspace{.2em}\eee^{03}_+
    +\Omega^{2}\hspace{.2em}\eee^{12}_\pm\\
    ~&~&~&0
\end{array}
\right).
\label{wwwpm}
\end{align}
$\Omega^i$ ($i=1,\cdots,6$) are functions of a space-time point independent from one another.
In (\ref{wwwpm}) an omitted part is obvious owing to anti-symmetry of the spin-connection.
In addition to that, when tensor coefficients fulfil holomorphic conditions (\ref{holomorphism}) for  all $\Omega^\bullet$, it provides $d[\www^\pm]\in V^\pm(\Omega^2)$ for any $\theta\in[0,1]$; thus, the $\Z_2$-grading structure-equation (\ref{StEqpm}) is maintained in the $\theta$-metric space.
We note that $\www^\pm\wedge\www^\pm\in V^\pm(\Omega^2)$ are anti-symmetric tensor.
On the other hand $\www^\pm\wedge\www^\mp$ are not anti-symmetric nor symmetric tensor.
After anti-symmetrise those tensors, it is provided that $AS[\www^+\wedge\www^-]=AS[\www^-\wedge\www^+]\neq0$, where $AS[M]=(M-M^T)/2$ for square matrix $M$. 

%
%
\paragraph{\textbf{Kramers pair of connection and curvature:}}
For the given principal bundle $(E,\pi,M,G)$, we introduced principal connection $\AAA_G^{~}$ and curvature $\FFF_G^{~}$.
The dual curvature is provided using the Hodge-dual operator.
On the other hand, dual connection $\hat\AAA_G^{~}$ is provided as a solution of the structure equation (\ref{steqAG}).
The SD- and ASD-curvatures are provided owing to the Kramers pair as
\begin{align*}
\FFF_G^\pm:=\frac{1}{2}\left(\FFF_G^{~}\pm\hat\FFF_G^{~}\right)
\in V^\pm(\Omega^2)\otimes\ggg.
\end{align*}
Corresponding connections are provided as solutions of structure equations (\ref{A+A+}); they are represented as (\ref{AAAdif}) owing to the bases (\ref{2bases}).
Due to non-linearity of the structure equation, the SD- and ASD-connections are not obtained using a linear combination of the Kramers pair of connections in general.
Therefore, a relation between $(c^{~}_G,\hat{c}^{~}_G)$ and $(c^+_{G},c^-_{G})$ is not trivial.

\begin{remark}\label{2bundles}
Suppose $\M$ is smooth oriented four-dimensional manifold with the structure group $SU(N)$, connection $\AAA^{~}_G$ and its dual $\hat\AAA^{~}_G$ belong to the same bundle.
\end{remark}
\begin{proof}
The principal curvature and its dual are represented using the trivial basis as
\begin{align*}
&{\FFF}^I_G\wedge{\FFF}^J_G=
\frac{1}{4}\f^I_\bcdots\f^J_\bcdots
\eee^\bcdot\wedge\eee^\bcdot\wedge\eee^\bcdot\wedge\eee^\bcdot=
\frac{\sigmath}{4}\left[\epsilon_\theta\right]^\bcdott
\f^I_\bcdots\f^J_\bcdots\hspace{.2em}\vvv,\\
&\hat\FFF^I_G\wedge\hat\FFF^J_G=
\frac{\sigmath}{4}\left[\epsilon_\theta\right]^\bcdott
\hat\f^I_\bcdots\hat\f^J_\bcdots\hspace{.2em}\vvv=
\frac{\sigmath}{4}\left[\epsilon_\theta\right]^\bcdott
\f^I_\bcdots\f^J_\bcdots\hspace{.2em}\vvv,\\
&\Rightarrow{\FFF}^I_G\wedge{\FFF}^J_G=\hat\FFF^I_G\wedge\hat\FFF^J_G.
\end{align*}
We note that this relation holds also in the $\theta$-metric space.
Thus, we obtain that
\begin{align*}
\Tr_G[\FFF^{~}_G]^2&=\left(\sum_I{\FFF}^I_G\hspace{.1em}\Tr\left[\tau^I\right]\right)\wedge
\left(\sum_J{\FFF}^J_G\hspace{.1em}\Tr\left[\tau^J\right]\right),\\&=
\sum_{I,J}{\FFF}^I_G\wedge{\FFF}^J_G
\hspace{.1em}\Tr\left[\tau^I\right]\hspace{.1em}\Tr\left[\tau^J\right]=\Tr_G[\hat\FFF^{~}_G]^2,\\
\Tr_G[\FFF^{~}_G\wedge\FFF^{~}_G]&=
\sum_{I,J}{\FFF}^I_G\wedge{\FFF}^J_G\hspace{.1em}\Tr\left[\tau^I\cdot\tau^J\right]=
\hspace{.1em}\Tr_G[\hat\FFF^{~}_G\wedge\hat\FFF^{~}_G].
\end{align*}
Therefore, two bundles have the same second Chern class as 
\[
c_2(\FFF^{~}_G)=\frac{1}{8\pi^2}\int(\Tr_G[\FFF^{~}_G\wedge\FFF^{~}_G]-
\Tr_G[\FFF^{~}_G]^2)=c_2(\hat\FFF^{~}_G).
\]

In a four-dimensional manifold, a fibre bundle with $SU(N)$ structure group is uniquely characterized by $c^{~}_2(\FFF^{~}_G)$.
When the structure group is $SU(N)$, a matrix trace of Lie-algebra $\bm\tau^{~}_\SU\in\sss\uuu(N)$ yields
\begin{align*}
\Tr\left[\tau_\SU^I\right]=0,~~&\textrm{and}~~~
\Tr\left[\tau_\SU^I\cdot\tau_\SU^J\right]=\left\{
\begin{array}{cc}
2&(I=J)\\
0&(I\neq J)
\end{array}
\right..
\end{align*}
Thus, we obtain that:
\begin{align*}
\begin{array}{ccccl}
\Tr^{~}_\SU[\FFF^{~}_\SU]^2&=&\Tr^{~}_\SU[\hat\FFF^{~}_\SU]^2&=&0,\\
\Tr^{~}_\SU[\FFF^{~}_\SU\wedge\FFF^{~}_\SU]&=&\Tr^{~}_\SU[\hat\FFF^{~}_\SU\wedge\hat\FFF^{~}_\SU]
&=&2\sum_{I}{\FFF_\SU}^I\hspace{-.1em}\wedge{\FFF_\SU}^I.
\end{array}
\end{align*}
Therefore, the remark is maintained. 
\end{proof}
\noindent
When the structure group is $U(1)$, we obtain that $\Tr^{~}_{\Uo}[\FFF^{~}_{\Uo}]^2=\Tr^{~}_{\Uo}[\FFF^{~}_{\Uo}\wedge\FFF^{~}_{\Uo}]$; thus, $c^{~}_2(\FFF^{~}_{\Uo})=c^{~}_2(\hat\FFF^{~}_{\Uo})=0$.
The Kramers pair of principal connections belongs to the same bundle for the $SU(N)$ structure group; thus, coupling constants $c^{~}_\SU$ and $\hat{c}^{~}_\SU$ must have the same value. 

On the other hand,  $\FFF_G^{+}\wedge\FFF_G^{+}\neq\FFF_G^{-}\wedge\FFF_G^{-}$ in general; thus, $\FFF_G^{+}$ and $\FFF_G^{-}$ belong to the different bundles since they have different second Chern classes: $c^{~}_2(\FFF^{+}_G)\neq c^{~}_2(\FFF^{-}_G)$.
Similar calculations for the Lorentz curvature show $c^{~}_2(\RRR)=c^{~}_2(\hat\RRR)$ and $c^{~}_2(\RRR^+){\neq}c^{~}_2(\RRR^-)$.  
When SD curvature $\FFF_G^+$ is provided owing to connection $\AAA^{+}_G$, dual curvature
$\hat\FFF_G^+=\FFF_G^+$ is also provided using the same connection $\hat\AAA^{+}_G=\AAA^{+}_G$ with coupling constants ${c}^{+}_{G}={c}^{-}_G$.
For the ASD case, connection $-\AAA^{-}_G$ does not provide the dual curvature except for the $U(1)$ structure group owing to non-linearity of the structure equation.
A simple  solution is setting ${c}^{-}_G=-{c}^{+}_G$ and $\hat\AAA^-_G=-\AAA^{-}_G$ simultaneously such that:
\begin{align*}
\hat\AAA^-_G&:=-\AAA^-_G,\\
\left[\hat\FFF^-_G\right]^I
&:=d\hat\AAA^-_G-\frac{{c}^{+}_G}{2}\hat\AAA^-_G\wedge\AAA^-_G.
\end{align*}
We note that coupling constants ${c}^{+}_G$ and ${c}^{-}_G$ can be different to each other because they belong to different bundles. 
Another solution is defining a dual connection and  a dual curvature for the given $\AAA^-_G$ as:
\begin{align*}
\left[\hat\AAA^-_G\right]^I\tau^{~}_I
&:=-\left[\AAA^-_G\right]^{N^{~}_G-I}\tau^{~}_I,\\
\left[\hat\FFF^-_G\right]^I\tau^{~}_I
&:=\left(d\left[\hat\AAA^-_G\right]^I+
\frac{{c}^{+}_G}{2}f^I_{~JK}
\left[\hat\AAA^-_G\right]^J\wedge\left[\hat\AAA^-_G\right]^K\right)
\tau^{~}_{N^{~}_G-I}.
\end{align*}
where $N^{~}_G:=N^2-1$ is a degree of freedom for the $SU(N)$ group.
Direct calculation shows $\HD(\FFF^-_G)=-\FFF^-_G$ for both definitions.

%
%
\paragraph{\textbf{Secondary superspace:}}
For the Kramaers pairs of a connection and a curvature, $(\TxT)$-matrix representations are provided as
\begin{align*}
\bm\AAA^{~}_2&:=\left(
\begin{array}{cc}
	\AAA^{~}_G &  0\\
 	    0  & \hat\AAA^{~}_G
\end{array}
\right)&\in&\hspace{.2em}{\bm{M}}^+,\\
\bm\FFF^{~}_2&:=
d{\bm\AAA^{~}_2}+c^{~}_G\hspace{.2em}{\bm\AAA^{~}_2}\wedge{\bm\AAA^{~}_2}=
\left(
\begin{array}{cc}
	\FFF^{~}_G &  0\\
 	    0  & \hat\FFF^{~}_G
\end{array}
\right)&\in&\hspace{.2em}{\bm{M}}^+.
\end{align*}
A ($\TxT$)-matrix representation of the covariant differential concerning the structure group $G$ is provided as
\begin{align}
D^{~}_{\hspace{-.1em}G}:=\left(
\begin{array}{cc}
	d_{G} &  0\\
 	    0  & \hat{d}_{G}
\end{array}
\right)=\left(
\begin{array}{cc}
	\bm{1}_G\hspace{.1em}d+c^{~}_G\hspace{.1em}[\AAA^{~}_G,~]_\wedge &  0\\
 	    0  & \bm{1}_G\hspace{.1em}d+{c}^{~}_G\hspace{.1em}[\hat\AAA^{~}_G,~]_\wedge
\end{array}
\right).\label{2x2Dirac}
\end{align}
The Dirac operator in the secondary superpsace is defined as
\begin{align}
\D^{~}_{\hspace{-.1em}G}&:=\gamma D^{~}_{\hspace{-.1em}G}=\left(
\begin{array}{cc}
	0 &\hat{d}^{~}_G \\
 	d_G^{~} & 0
\end{array}
\right)&\in{\bm{M}}^-.\label{Dirac2x2}
\end{align}
This operator is a first-order differential operator with an odd-parity; thus, it flips a parity of an operand matrix due to superalgebra (\ref{superalgebra}) and fulfils the definition of the Dirac operator. 
The Bianchi identities in this representation are given as
\begin{align*}
\D^{T}_{\hspace{-.1em}G}\D^{~}_{\hspace{-.1em}G}=
D^{~}_{\hspace{-.1em}G}D^{~}_{\hspace{-.1em}G}=
-i{{c}^{~}_G}\hspace{.2em}{\bm\FFF^{~}_2},~~\textrm{and}~~
\D^{~}_{\hspace{-.1em}G}\hspace{.1em}{\bm\FFF^{~}_2}=0.
\end{align*}

The Hodge-dual operator also has a ($\TxT$)-matrix representation:
\begin{align*}
\bm\HD_2&:=\Gamma\gamma\HD
=\left(
\begin{array}{cc}
	0 &  -\HD\\
 	\HD  & 0
\end{array}
\right)=-\bm\HD_2^{-1}&\in{\bm{M}}^-.
\end{align*}
This operator acts on the connection and the curvature as
\begin{align*}
\bm\HD_2\hspace{.2em}\bm\XXX_2\hspace{.2em}\bm\HD_2^{-1}
=\bm\XXX_2\big|_{\XXX^{~}_G\leftrightarrow\hat\XXX^{~}_G},
\end{align*}
where $\XXX^{~}_G$ is $\AAA^{~}_G$ or $\FFF^{~}_G$.
We note that ${\AAA}^\pm_G$ is not eigenvector of the Hodge-dual operator because the Hodge-dual of a one-form object is a three-form object in a four-dimensional manifold.
Here, we denote the action of operator $\HD$ for a connection one-form with short-hand notation of
\begin{align}
\hat\AAA^{~}_G:\AAA^{~}_G\mapsto\FFF^{~}_G\mapsto\hat\FFF^{~}_G\mapsto\hat\AAA^{~}_G
\label{HAAA}
\end{align}
Operator $\HD$ in $\bm\HD_2$ for $\KP(\AAA^{~}_G)$ is a short-hand notation understood as through (\ref{HAAA}).
The Hodge-dual operator is homomorphic concerning the Dirac operator; e.g., the second Bianchi identity fulfils
\begin{align*}
\HD\left(\D^{~}_{\hspace{-.1em}G}\hspace{.1em}\bm\FFF_2\right)=
\bm\HD^{~}_2\left(\D^{~}_{\hspace{-.1em}G}\hspace{.1em}\bm\FFF^{~}_2\right)\bm\HD_2^{-1}=
\D^{~}_{\hspace{-.1em}G}\hspace{.1em}
\bm\FFF^{~}_2\big|_{\FFF^{~}_G\leftrightarrow\hat\FFF^{~}_G}=0.
\end{align*}
The structure equation is also homomorphic such that:  
\begin{align*}
\HD\left(
d\bm\AAA^{~}_2+{c}^{~}_G\hspace{.2em}\bm\AAA^{~}_2\wedge\bm\AAA^{~}_2\right)=
\left(d\bm\AAA^{~}_2+{c}^{~}_G\hspace{.2em}\bm\AAA^{~}_2\wedge\bm\AAA^{~}_2
\right)\big|_{\AAA^{~}_G\leftrightarrow\hat\AAA^{~}_G}=\HD\left(\bm\FFF^{~}_2\right).
\end{align*}

We introduced the section field in primary- and secondary superspace.
The section field is provided as a solution to the equation of motion and belongs to a fundamental representation of the structure group.
We investigate the $\Z_2$-grading structure of the equation.
Suppose the equation of motion consists of the Dirac operator in the second superspace; thus, the equation is the first-order differential operator in the primary superspace.
In addition to that, the equation respects the structure group of the principal bundle.
We introduce the following simple equation as an example:
\begin{align}
i\hspace{.1em}d^{~}_G\phi-\mu\phi=0~~\textrm{and}~~
i\hspace{.1em}\hat{d}^{~}_G\hat\phi-\hat\mu\hat\phi=0.\label{eomphi}
\end{align}
The Hodge-dual operator acts on scalar fields as (\ref{phihat}).
Constants $\mu$ and $\hat\mu\in\R$ are called \emph{mass} in physics.
In the secondary superspace, equations (\ref{eomphi}) has a representation such that:
\begin{align*}
\left(i\D^{~}_{\hspace{-.1em}G}-\gamma\bm\mu_2\right)\bm\Phi_2=0,~~\textrm{where}~~
\bm\mu_2:=\left(
\begin{array}{cc}
 \mu&   0\\
	0      &  \hat\mu
\end{array}
\right).
\end{align*}
The Klein--Gordon equation is provided in the secondary superspace as 
\begin{align*}
\left(i\D^{~}_{\hspace{-.1em}G}-\gamma\bm\mu_2\right)^T
\left(i\D^{~}_{\hspace{-.1em}G}-\gamma\bm\mu_2\right)\bm\Phi_2=0
\end{align*}

In summary, we showed a duplex $\Z_2$-grading structure is generally induced by the parity operator in a four-dimensional manifold:
The primary structure is induced utilizing the Hodge-dual operator on two-form objects and extended to one-form objects owing to isomorphism (\ref{ismor}).
The secondary structure is constructed using the $(\TxT)$-matrix representation and superalgebra using matrix multiplication.
The Dirac operator is defined owing to a superalgebra of $(\TxT)$-matrices in the secondary superspace.

%
%
\subsection{Index theorem and cobordism}\label{IndexCobordism}
The Atiyah-Singer index theorem for a closed manifold without boundaries appeared in 1968, and it was extended to compact manifolds with boundaries by Atiyah, Patodi and Singer in 1975.
In theoretical physics, the Atiyah--Singer index theorem of the Dirac operator provides a deep insight into anomalies of the gauge field theory.
The Atiyah--Patodi--Singer index theorem was recently applied to the domain-wall fermion after appropriately modifying the boundary condition\cite{PhysRevD.96.125004,Fukaya:2019ytk,Fukaya:2019qlf}.
Originally the Atiyah-Singer index theorem has been provided in a Euclidean manifold for the elliptic type Dirac operator: B\"{a}r and Strohmaier\cite{wrro123059, bar2018} extended the Atiyah--Patodi--Singer type index theorem to globally hyperbolic space-times in 2017.

The Atiyah--Singer index theorem asserts that characteristic (analytic) indices owing to the Dirac operator are equivalent to topological indices.
More precisely, the cohomology of the manifold determines an index of the Dirac operator in a compact and oriented even-dimensional Riemannian manifold.
We note that the Dirac operator must be an elliptic operator for the theorem's validity; thus, the Minkowski manifold does not maintain the theorem as it is.
We propose to define a topological index in the Minkowski manifold using that in Eudlidean space cobordant to the Minkowski manifold.
The Minkowski space is homotopy equivalence to Eudlidean space owing to the $\theta$-metric.
%
%
\subsubsection{Index of Dirac operator}\label{AppIndex}
\paragraph{\textbf{Characteristics:}}
First, we introduced the Chern characteristic and $\hat{A}$-genus appearing in the Atiyah--Singer index theorem.
They are defined in several contexts; definitions in linear algebra and Chern--Weil theory are provided here.
For the Chern characteristic, a definition in a superbundle is also given.
\begin{definition}[Chern characteristic]\label{ChChSer}
The Chern characteristics are defined as follows:
\begin{enumerate}
\item \textbf{Chern characteristic in Lie algebra:}
Suppose $\{\lambda_1,\cdots,\lambda_n\}$ are eigenvalues of $F\in\ggg\lll(n,\C)$ with $F+F^\dagger=0$; namely a skew-adjoint matrix.
Chern characteristic $ch(F)$ is defined such that:
\begin{align*}
ch(F):=\sum_{j=1}^n e^{i\lambda_j/2\pi}.
\end{align*}
\item \textbf{Chern characteristic in fibre-bundle:}
When $\FFF\in\Omega^2\otimes\ggg$ is a curvature two-form of a fibre-bundle with structure group $G$ and its Lie algebra $\ggg$, Chern characteristic $ch(\FFF)\in\WWW(\FFF)$ is defined such that:
\begin{align*}
ch(\FFF):=e^{i\FFF/2\pi},
\end{align*}
where $\WWW(\FFF)$ is an invariant polynomial of the curvature two-forms in Weil algebra.
\item \textbf{Chern characteristic in super bundle:}
When $\FFF=\FFF^++\FFF^-\in\Omega^2\otimes\ggg$ is a curvature two-form of a $\Z_2$-grading superbundle $\S$ with structure group $G$ and its Lie algebra $\ggg$, Chern characteristic $ch(\S)\in\WWW(\FFF)$ is defined such that:
\begin{align*}
ch(\S):=ch(\FFF^+)-ch(\FFF^-)=e^{i\FFF^+/2\pi}-e^{i\FFF^-/2\pi}.
\end{align*}
\end{enumerate}
\end{definition}
\noindent
The first several terms of the Chern characteristic is provided such that:
\begin{align*}
ch(\FFF)&=N^{~}_G+c_1(\FFF)+\frac{1}{2}\left(c_1^2(\FFF)-2c^{~}_2(\FFF)\right)+
\frac{1}{6}\left(c_1^3(\FFF)-3c_1(\FFF)\hspace{.2em}c^{~}_2(\FFF)+3c_3(\FFF)\right)+\cdots,
\end{align*}
where $c_j(\FFF)$ is the $j$'th Chern class, and $N^{~}_G$ is a dimension of the structure group.
The first three terms may appear for the four-dimensional space.

\begin{definition}[$\hat{A}$-genus]\label{hAgen}
$\hat{A}$-genera are defined as follows:
\begin{enumerate}
\item \textbf{$\hat{A}$-genus in Lie algebra:}
Suppose $\{{\pm}i\lambda_1,\cdots,{\pm}i\lambda_k\}$ are eigenvalues of $F\in\ggg\lll(2k,\R)$ with $F+F^T=0$; namely askew-symmetric matrix.
$\hat{A}$-genus is defined such that:
\begin{align*}
\hat{A}(F):=\prod_{j=1}^k\frac{-\lambda_j/4\pi}{\sinh{\left(-\lambda_j/4\pi\right)}}.
\end{align*}
\item \textbf{$\hat{A}$-genus in fibre-bundle:}
When $\FFF\in\Omega^2\otimes\ggg$ is a curvature two-form of a bundle with structure group $G$ and its Lie algebra $\ggg$, $\hat{A}$-genus $\hat{A}(\FFF)\in\WWW(\FFF)$ is defined such that:
\begin{align*}
\hat{A}(M):=\mathrm{det}^{1/2}\left(\frac{-i\FFF/4\pi}{\sinh{\left(-i\FFF/4\pi\right)}}\right).
\end{align*}
\end{enumerate}
\end{definition}
\noindent
$\hat{A}$-genus is sometimes referred to as the Dirac-genus in physics\cite{bertlmann2000anomalies}.
The first several terms of $\hat{A}$-genus in manifold $M$ is provided such that:
\begin{align*}
\hat{A}(M)=1-\frac{1}{24}p_1(M)&+\frac{1}{5760}
\left(7\hspace{.1em}p^2_1(M)-4\hspace{.1em}p_2(M)\right)\\
&+\frac{1}{967680}\left(
-31\hspace{.1em}p_1^3(M)+44\hspace{.1em}p_1(M)\hspace{.1em}p_2(M)
-16\hspace{.1em}p_3(M)\right)
+\cdots,
\end{align*}
where $p_j(M)$ is the $j^\textrm{th}$ Pontrjagin class in an even-dimensional manifold $M$.
The first three terms may appear for the four-dimensional space.

\paragraph{\textbf{Spin manifold:}}
Suppose a spin structure is defined in $n$-dimensional smooth and oriented manifold $M$, where $n$ is even integer. 
It is denoted by  $Spin(M)$ and  Clifford algebra of $Spin(M)$ is denoted as $\Cl(M)$.
Manifold $M$ is lifted to a spin manifold owing to Clifford module $\S:=\Cl(M)\otimes\C$, and spinor group is doubly covering $M$ and induces spinor bundle of $\left(\S,\pi_\Sp,M, Spin(M)\right)$.
Clifford algebra is introduced in base manifold $M$:
A representation of $\Cl(M)$ is denoted as $\gamma_\Sp^a$ $(a=0,\cdots,n-1)$ yielding
\begin{align}
\left\{\gamma_\Sp^a,\gamma_\Sp^b\right\}&:=
\gamma_\Sp^a\times_\Sp\gamma_\Sp^b+\gamma_\Sp^b\times_\Sp\gamma_\Sp^a~=~
2\eta^{ab}_{~}\hspace{.2em}{\bm{1}}_\Sp,\label{cf1}
\end{align}
where ${\bm{1}}_\Sp$ is a unit Clifford algebra, and $\times_\Sp$ represents Clifford product.
Hereafter. Clifford product $\times_\Sp$ is omitted for simplicity.
Although each component can be set as Hermitian self-conjugate or anti-self-conjugate, all of them cannot be set entirely self-conjugate nor anti-self-conjugate, when manifold  $M$ is non-compact.
This study utilizes a representation such that 
\begin{align}
(\gamma_\Sp^0)^\dagger=\gamma_\Sp^0,
&~~(\gamma_\Sp^i)^\dagger=-\gamma_\Sp^i,~\textrm{for}~~i=1,\cdots,n-1\label{cf2}.
\end{align}
We note that  a relation  $(\gamma_\Sp^a)^\dagger=\gamma_\Sp^0\gamma_\Sp^a\gamma_\Sp^0$ follows from (\ref{cf1}) and (\ref{cf2}) for any $a{\leq}n-1$.
Representation \textrm{(\ref{cf2})} is always possible independent from the chosen basis.
The chiral operator and the projection operator are defined, respectively, as 
\[
\Gamma_{\hspace{-.1em}\Sp}:=i
\gamma_{\hspace{-.1em}\Sp}^0\cdots
\gamma_{\hspace{-.1em}\Sp}^{n-1},~~\textrm{and}~~
P^\pm_\Sp:=\frac{\bm{1}_\Sp\pm\Gamma_\Sp}{2};
\]
they are independent from chosen basis owing to their definition.
The chiral operator fulfils relations:
\begin{align}
(\GammaSp)^2=\bm{1}_\Sp,&~~~\GammaSp\gamma_\Sp^a=
-\gamma_\Sp^a\GammaSp,\label{g5ga}
\end{align}
for any $a{\leq}n-1$.

\paragraph{\textbf{Index theorem:}}
The elliptic-type Dirac-operator in a compact space has an analytic index equivalent to the topological index owing to the Atiyah-- Singer theorem.
We introduce the index of the Dirac operator as follows:
Suppose $\D:(M,\S)\rightarrow(M,\S)$ is a self-adjoint Dirac operator.
Clifford algebra $\gamma_\Sp^a$ induces a $\Z_2$-grading space in the spinor module as $\S=\S^+\oplus\S^-$ owing the chiral operator $\Gamma_\Sp$ such that:
\begin{align*}
\S^\pm:=\{\psi\in\S|\Gamma_\Sp\psi=\pm\psi\},
\end{align*}
where $\psi\in\S$ is a spinor field defined as a section in $\Tm$.
Restriction of $\D$ acting on $\S^\pm$ are defined as $\D^\pm:=\D P_\Sp^\pm$.
Spinor fields are divided into two parts as $\psi^\pm:=P_\Sp^\pm\psi\in\S^\pm$.
An index of the Dirac operator and a super trace are introduced as follows: 
\begin{definition}[Index of the Dirac operator\cite{berline2003heat}]\label{IDO}
An index of Dirac operator $\d$ is defined as
\begin{align*}
ind(\D)&:=\mathrm{dim}\hspace{.2em}\mathrm{Ker}\left(\D^+\right)
-\mathrm{dim}\hspace{.2em}\mathrm{Ker}\left(\D^-\right),\\&=
\mathrm{dim}\hspace{.2em}\mathrm{Ker}\left(\D^+\right)
+\mathrm{dim}\hspace{.2em}\mathrm{Coker}\left(\D^+\right).
\end{align*}
\end{definition}
\noindent
The heat kernel, denoted as $\mathcal{H}_s$, is the unique operator that satisfies the equation such that:
\begin{align*}
\left(\frac{\partial}{\partial s}+\D^2\right)\mathcal{H}_s&=0.
\end{align*}
The McKean--Singer theorem\cite{10.4310/jdg/1214427880} states a relation between a heat kernel and an index of the Dirac operator as follows:
\begin{remark}[McKean--Singer]\label{McS}
Suppose $\D$ is a Dirac operator on a compact space and $e^{-s\hspace{.1em}\D^\mp\D^\pm}\in\mathcal{H}_s$ . 
The index of $\D$ is provided as
\begin{align*}
ind(\D)=Str\left(e^{-s\hspace{.1em}\D^2}\right)&=
\Tr\left[e^{-s\hspace{.1em}\D^-\D^+}\right]-\Tr\left[e^{-s\hspace{.1em}\D^+\D^-}\right]
=\Tr\left[\Gamma\hspace{.2em}e^{-s\hspace{.1em}\D^2}\right].
\end{align*}
\end{remark}
\begin{proof}
When $H^\pm_\lambda$ are eigenspaces of $\D^\mp\D^\pm$ with an eigenvalue $\lambda$, and $n^\pm_\lambda=\mathrm{dim}H^\pm_\lambda$, it is provided that:
\begin{align}
\Tr\left[e^{-s\hspace{.1em}\D^-\D^+}\right]-\Tr\left[e^{-s\hspace{.1em}\D^+\D^-}\right]
=\sum_{\lambda\geq0}\left(n^+_\lambda-n^-_\lambda\right)e^{-s\lambda}.\label{nmn}
\end{align}
When $\lambda\neq0$, operator $\lambda^{-1}\circ\D^-\circ\D^+:H^+_\lambda\rightarrow H^+_\lambda$ is an identity map; thus, $n^+_\lambda-n^-_\lambda=0$.
Therefore, it is  is maintained that 
\[(\ref{nmn})=n^+_0-n^-_0=ind(\D).
\]

Moreover, we obtain that 
\begin{align*}
\Gamma_\Sp\D\D&=\D\D\Gamma_\Sp=\d\d(P_\Sp^+-P_\Sp^-)=
(\D^++\D^-)(\D^+-\D^-),\\
&=\D^-\D^+ - \D^+\D^-,
\end{align*}
using  relations $\Gamma_\Sp=P_\Sp^+-P_\Sp^-$, $\D=\D^++\D^- $ and 
$
\D^\pm\D^\pm=\D P_\Sp^\pm\D P_\Sp^\pm=\D\D P_\Sp^\mp P_\Sp^\pm=0.
$
Thus, it is provided that
\begin{align*}
\Gamma(\D\D)^n&=
\left(\D^-\D^+\right)^n - \left(\D^+\D^-\right)^n.
\end{align*}
Therefore, the second equality is maintained.
\end{proof}

\begin{remark}[Atiyah--Singer\cite{Atiyah:1968mp}]\label{ASI}
Suppose $M$ is a compact and oriented even-dimensional spin manifold with no boundary, $\S$ is a Clifford module.
\begin{enumerate}
\item For Dirac operator $\D$ in $\Gamma(M,\S)$, an index of $\D$ is provided such that:
\begin{align}
ind(\D)=\int_M\hat{A}(M),\label{ASIT1}
\end{align}
when $\D$ is the Fredholm operator.
\item Suppose $\W$ is a vector bundle in $M$.
For twisting Dirac operator $\D_\W$ in $\Gamma(M,\W\otimes\S)$, an index of $\D_\W$ is provided such that:
\begin{align}
ind(\D_\W)=\int_M\hat{A}(M)ch(\W),\label{ASIT2}
\end{align}
when $\D_\W$ is the Fredholm operator.
\end{enumerate}
\end{remark}
\noindent
A proof of the  theorem is given in literatures, e.g., chapter 4 in Ref.\cite{doi:10.1063/1.1705276}.

Suppose $\RRR$ is a curvature of four-dimensional manifold $M_4$ and  $\FFF^I\tau^{~}_I$ is a curvature of principal bundle $\W$ with structure group $G$  whose Lie algebra is $\tau^{~}_I\in\ggg$. 
The index is provided as
\begin{align}
ind(\D_\W)&=\frac{1}{(2\pi)^2}\int_{M_4}\left(
\frac{n}{48}\Tr_{\Tm}[\RRR\wedge\RRR]
-\frac{1}{2}\Tr_G[\FFF\wedge\FFF]
\right),\nonumber\\
&=-\frac{1}{(4\pi)^2}\int_{M_4}\left(
\frac{n}{12}\eta^{~}_\bcdots\eta^{~}_\stars\RRR^{\bcdot\star}\wedge\RRR^{\bcdot\star}
+i\sum_{I=1}^{n}f^I_{~JK}\FFF^J\wedge\FFF^K
\right).
\label{ASIT1in4}
\end{align}
We note that $\Tr_{\Tm}[\RRR]=\eta_\bcdots\RRR^\bcdots=0$ due to anti-symmetry of $\RRR$ with respect to its tensor indices.
The index in the $\theta$-metric space is provided by replacing $\bm\eta\rightarrow\bm\eta^{~}_\theta$.

%
%
\subsubsection{Cobordism}\label{Appcob}
Ren\'{e} Tom has proposed a method to categorise compact and orientable $C^\infty$-manifolds into the equivalence classes named ``\emph{cobordant}\cite{tom1,tom2}'' in 1953.
When a disjoint union of two $n$-dimensional manifolds is the boundary of a compact $(n+1)$-dimensional manifold, they are cobordant to each other; the cobordant is an equivalence relation and quotient space 
\begin{align*}
\Omega_C^n:=\{n\textrm{-dimensional~closed~manifolds}\}/\hspace{-.2em}\sim 
\end{align*}
forms an abelian group, where $\sim$ is an equivalent relation with respect to cobordism.
We present a definition and elementary remarks of cobordism in this section.
For details of the cobordism theory, see, e.g., Chapter 7 in Ref.\cite{hirsch1976differential}.
\begin{definition}[cobordant]\label{cob}
Suppose $M$ and $N$ are compact and oriented $n$-dimensional smooth manifolds, and $\omega_\bullet$ denotes an orientation of oriented manifold $\bullet$.
$(M,\omega_M)$ and $(N,\omega_N)$ are called ``cobordant'' to each other, if there exists a compact and oriented $(n\hspace{-.1em}+\hspace{-.2em}1)$-dimensional smooth manifold $(W,\omega_W)$ such that
\begin{align*}
(\partial{W},\hspace{.2em}\partial{\omega_W})&=(M,-\omega_M)\amalg(N,\omega_N),
\end{align*}
and $W$ is called a cobordism from $M$ to $N$.
When $M$ and $N$ are cobordant to each other owing to $W$, it is denoted as $M\sim_WN$ in this study.
Cobordant induces an equivalent class among compact and oriented $n$-dimensional smooth manifolds, namely, a cobordant class denoted by $\Omega_{C}^n$.
\end{definition}

\begin{remark}
Cobordant class $\Omega_{C}^n$ forms the abelian group concerning the disjoint-union operation.
\end{remark}
\begin{proof}
When map $M_0\rightarrow M_1$ is diffeomorphic, $M_0$ and $M
_1$ are cobordant; thus, $M_0$ and $M_1$ are cobordant with respect to closed period $\theta=[0,1]$; $M_0\sim_{M_0\otimes[0,1]}M_1$ such that
$
\partial{(M_\theta\otimes[0,1])}=(M_0\otimes0)\amalg(M_1\otimes1)
$
with appropriate orientations.
It fulfils the associative and commutative laws trivially.
Any compact manifold $V=\partial{W}$ is the zero element owing to $(M\amalg V)\sim_{W}M$.
The inverse of $(M,\omega_M)$ is given by $(M,-\omega_M)$.
Therefore, the remark is maintained.
\end{proof}

\begin{remark}[Pontrjagin]\label{prg}~\\
Characteristic numbers are cobordism invariant.
\end{remark}
\begin{proof}
A proof is provided for the Pontrjagin number; it is sufficient to show that the Pontrjagin numbers on $M$ vanish when $M=\partial W$.
There exists one-dimensional trivial bundle $\xi$ such that  $TW\bigl|_M=TM\oplus\xi$; thus, for any inclusion map $\iota:M\hookrightarrow{W}$, relation $p_i(M)=\iota^\#(p_i(W))$ is fulfilled.
Therefore, for any invariant polynomial $f\in\WWW(M)$ to define characteristic numbers, it is provided such that:
\begin{align*}
f\left(p_1,p_2,\cdots\right)[M]=\int_Mf\left(p\left(M\right)\right)=
\int_{\partial M}\iota^\#\left(f\left(p\left(W\right)\right)\right)
=\int_Wdf\left(p\left(W\right)\right)=0,
\end{align*}
owing to the Stokes theorem; thus, the lemma is maintained.
Here, $[\bullet]$ denotes a cobordism class of manifold $M$.
\end{proof}

%
%
\subsection{$\theta$-metric space and index theorem}\label{themetric}
The Atiyah--Singer topological index is defined for the elliptic-type Dirac operator.
On the other hand, the Dirac operator in the Minkowski space has a hyperbolic type; thus, the theorem is not fulfilled as its original shape.
Gibbons has discussed an analytic continuation of functions defined in Eudlidean space to that in the Minkowski space in a study of a Fermion number violation in curved space-time\cite{GIBBONS1979431,GIBBONS198098}; he pointed out in Ref.\cite{GIBBONS1979431} that passing from Eudlidean to the Minkowski space is problematic as only non-singular manifolds admit an analytic continuation to real Riemannian space.
On the other hand, B\"{a}r and Strohmaier have provided the index theorem for a globally hyperbolic manifold with boundaries in a mathematically rigorous manner and applied the theorem to the total charge generation due to the gravitational and gauge field background\cite{bar2016}.

This section discusses the Atiyah--Singer topological index in the space-time manifold with no boundary using the cobordism theory with the $\theta$-metric.
Suppose $\M_\theta$ is a four-dimensional compact and oriented manifold with no boundary associating a standard Euclidean topological phase.
Two manifolds $\M_0:=\M_\theta\big|_{\theta\rightarrow0}$ and $\M_1:=\M_\theta\big|_{\theta\rightarrow1}$ are cobordant with respect to five-dimensional manifold $W_\theta:={\M_\theta}\otimes(\theta\in[0,1])$; manifold $W_\theta$ is cobordism from Lorentzian manifold $\M_0$ to Euclidean manifold $\M_1$ (see the upper half of Figure \ref{figCobor}).
\begin{figure}[tb] 
\centering
\includegraphics[width={11cm}]{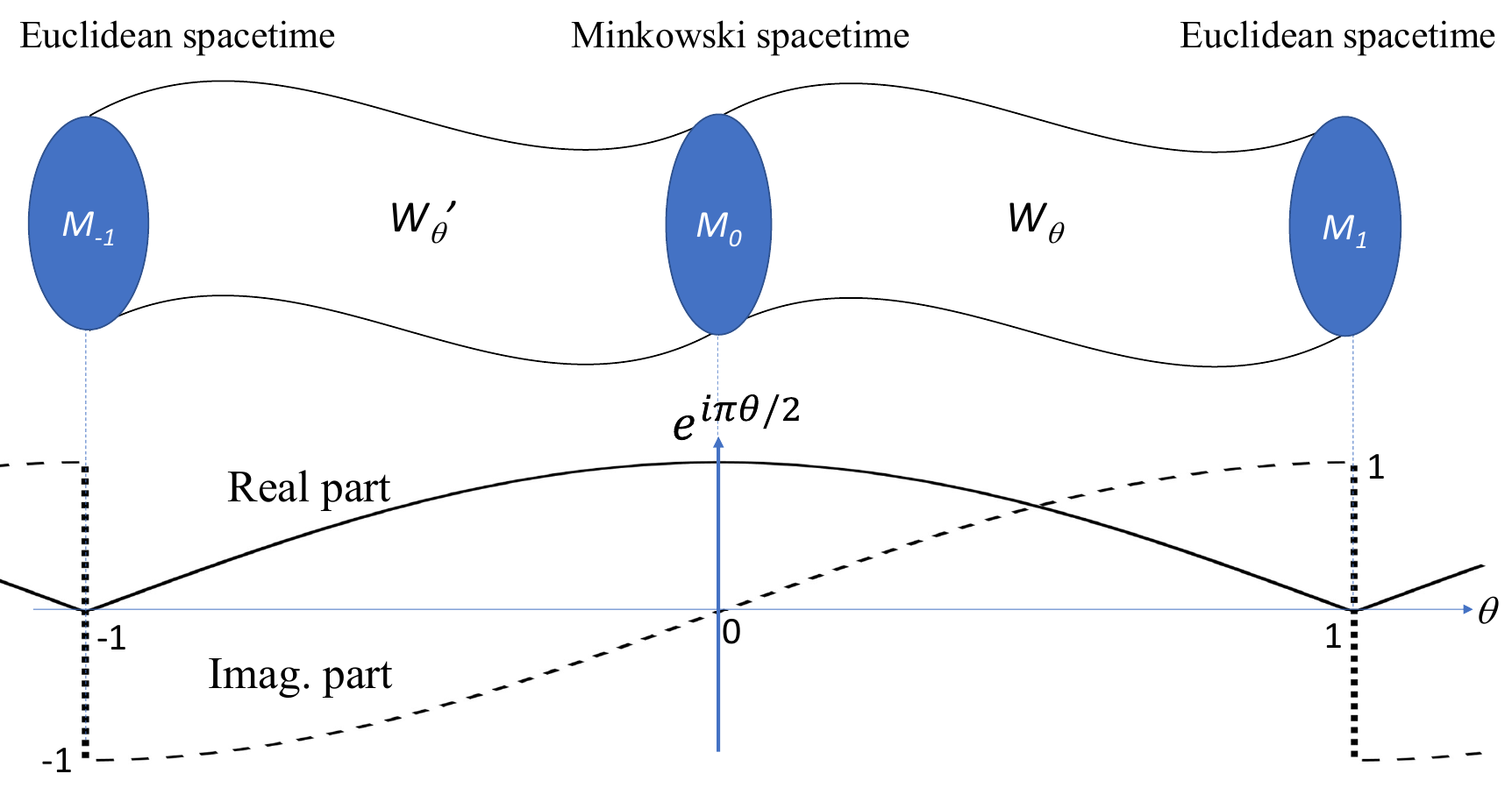}%
 \caption{
Schematic drowning of five-dimensional manifolds $W_\theta$ and $W'_\theta$, and its boundary manifolds.
Four-dimensional boundary manifolds $\M_1$ and $\M_{-1}$ have a Euclidean metric, and $\M_{-1}$ has a Minkowski metric.
Real (solid line) and imaginary (dashed line) parts of $e^{i\pi\theta/2}$ are show in the figure.
} 
\label{figCobor} 
\end{figure}

A representation of Clifford algebra in $Spin(\M_\theta)$ is defined to fulfil
algebra 
\begin{align}
\{\gamma_\theta^a,\gamma_\theta^b\}=2\eta_\theta^{ab}\bm{1}_\Sp\label{CliffordTheta}
\end{align}
with any values of $\theta$ as follows:
\begin{align}
\gamma^0_\theta:=e^{i\pi\theta/2}\hspace{.1em}\gamma^0_\Sp,&~~\textrm{and}~~\gamma^j_\theta:=\gamma^j_\Sp
~\textrm{for}~~j=1,2,3.\label{gamma-theta}
\end{align}
Here, $\bm{\gamma}_\theta$ is equivalent to $\bm{\gamma}_\Sp$ at $\theta=0$, and $\bm{\gamma}_E:=\bm{\gamma}_\theta|_{\theta\rightarrow1-0}$ is the Clifford algebra in Eudlidean space
(See lower half of Figure \ref{figCobor}).
Clifford algebra $\bm{\gamma}_E$ is Hermitian anti-self-conjugate owing to (\ref{cf2}).
Accordingly, the chiral- and projection-operators are defined as 
\begin{align*}
\Gamma_{\hspace{-.1em}\theta}:=\Gamma_{\hspace{-.1em}\Sp}
\big|_{\gamma_\Sp\rightarrow\gamma_\theta}~~\textrm{and}~~
P^\pm_\theta:=P^\pm_\Sp\big|_{\Gamma_\Sp\rightarrow\Gamma_\theta}, 
\end{align*}
respectively.

A $(4\hspace{-.1em}\times\hspace{-.1em}4)$-matrix representation using Clifford algebra $\bm{\gamma}_\theta$ is introduced in the $\theta$-metric space.
A point ${\bm\xi}:=(x^0,x^1,x^2,x^3)\in\M_\theta$ is represented using a $(4\hspace{-.1em}\times\hspace{-.1em}4)$-matrix representation as
\begin{align*}
{\bm{q}}&:={\bm\xi}\cdot\bm{\gamma}_{\theta}=
\sum_{a=0}^3\xi^a\gamma_{\theta}^a,\\
&=
\left(
   \begin{array}{cccc}
0&0&e^{i\pi\theta/2}\hspace{.1em}x^0-x^3 & -x^1+ix^2\\ 
0&0& -x^1-ix^2&e^{i\pi\theta/2}\hspace{.1em}x^0+x^3 \\
e^{i\pi\theta/2}\hspace{.1em}x^0+x^3 & x^1-ix^2&0&0\\
 x^1+ix^2&e^{i\pi\theta/2}\hspace{.1em}x^0-x^3 &0&0\\
  \end{array}
  \right),
\end{align*}
where the chiral representation of $\bm{\gamma}_\Sp$ is used.
A generator of a \emph{rotational} transformation in the $\theta$-metric space is provided as
\begin{align*}
\Lambda^{~}_\theta:=\sum^{~}_{a<b}c^{~}_{ab}\hspace{.1em}\gamma^a_\theta\gamma^b_\theta,
\end{align*}
where $c^{~}_{ab}\in\R$ is six real-parameters of \emph{rotational angles}; determinant of $\Lambda_\theta$ normalized to unity. 
This \emph{rotation} preserves a determinant as
\begin{align*}
\mathrm{det}\left[\Lambda^{~}_\theta{\bm{q}}\Lambda_\theta^{-1}\right]&=\mathrm{det}[{\bm{q}}]
=\left(e^{i\pi\theta}t^2-x^2-y^2-z^2\right)^2
=\left(\eta^{~}_{\theta\hspace{.1em}\bcdots}\xi^\bcdot\xi^\bcdot\right)^2,
\end{align*}
and it is real valued at $\theta=0$ and $\theta=1$.

An index of the Dirac operator in the Minkowski space is defined owing to cobordism in the $\theta$-metric space:
\begin{definition}\label{EMcobor}
Suppose $\D_{\W_\theta}$ is the twisting Dirac operator in $\Gamma(\M_\theta,\W_\theta\otimes\S_\theta)$ and the spin-connection $\www$ and principal connection $\AAA^{~}_G=[\Aa^I_G]_\bcdot\eee^\bcdot\tau^{~}_I$ with structure group $G$ exist in $\W_\theta\big|_{\theta\neq0}$.
$\DSp$ has topological indices provided by a cohomology formula such as
\begin{align*}
ind(i\D_{0}):=ind(i\D_{1})=\int_{\M_1}\hat{A}(\M_1)ch(\W_1),
\end{align*}
where $\W_1:=\left.\W_\theta\right|_{\theta\rightarrow1-0}$.
\end{definition}
\begin{theorem}
\textbf{Definition \ref{EMcobor}} is well-defined.
\end{theorem}
\begin{proof}
Owing to the assumption, $\M_\theta$ is a four-dimensional compact and oriented manifold with no boundary, and $\M_1$ is Eudlidean manifold.
Differential operator $i\D_1$ is the Hermitian anti-self-conjugate Dirac operator concerning Clifford algebra $\bm{\gamma}^{~}_E$ given above.
A Laplacian concerning ${\D}_{\W_\theta}$ is provided such that:
\begin{align*}
\Delta_\theta:={\D}_{\W_\theta}{\D}_{\W_\theta}&=
\gamma^a_\theta\gamma^b_\theta\nabla_a\nabla_b
=\left(2\eta_\theta^{ab}-\gamma^b_\theta\gamma^a_\theta\right)\nabla_a\nabla_b,\\
\Longrightarrow
{\D}_{\W_\theta}{\D}_{\W_\theta}&=
e^{i\pi\theta}(\nabla_0)^2-(\nabla_1)^2-(\nabla_2)^2-(\nabla_3)^2,
\intertext{where }
\nabla_a&:=\partial_a-i\frac{1}{2}\cG
\eta_{\theta\hspace{.1em}\bcdots}
\eta_{\theta\hspace{.1em}\stars}\hspace{.1em}
\Varepsilon^\mu_a\omega^{~\bcdot\star}_{\mu}\hspace{.1em}{\Ss}^{\bcdot\star}_{\hspace{-.1em}\theta}
-i{c^{~}_G}\hspace{.1em}
\left[\Aa^I_G\right]_a\tau^{~}_I,
\intertext{
where $\Ss^\bullets_\theta$ is a generator of a $Spin(1,3)$ group in the $\theta$-metric space and represented as
}
\Ss^{ab}_\theta&:=i[\gamma_\theta^a,\gamma_\theta^b]/4 
\end{align*}
using the trivial basis.
Thus, a Laplacian is elliptic operator when $\theta=1$ and Atiyah--Singer index is well-defined in $\M_1$.
On the other hand, $\M_0$ and $\M_1$ are cobordant such that $\M_0\sim_{[0,1]\otimes\M_\theta}\M_1$ and connections exists in $\W_\theta$ with $\theta\in(0,1]$; thus, $\M_{+0}$ and $\M_1$ are homotopically equivalent.
Therefore, Pontrjagin and Chern classes are preserved owing to \textbf{Remark \ref{prg}}.
The Atiyah--Singer index in hyperbolic space is defined as a corresponding index in $\M_1$; $ind(i\D_0):=ind(i\D_1)$.
Therefore, the Atiyah--Singer index in $\TsM_1$ is well-defined.
\end{proof}
\noindent
The same result is obtained owing to cobordism $\M_{-1}\sim_{W'_\theta}\M_0$ where $W'_\theta:=[-1,0]\otimes\M_\theta$ and gives $ind(i\D_{0}):=ind(i\D_{-1})=ind(i\D_{1})$.
When $\gamma_E^0=\gamma_\theta^0|_{\theta\rightarrow1+0}$ is replaced by $-\gamma_E^0=\gamma_\theta^0|_{\theta\rightarrow1-0}$ in Eudlidean space-time, the Clifford algebra (\ref{CliffordTheta}) does not change.

\section{Principal bundles}
This section introduces several principal bundles necessary to construct the Yang--Mills theory.
Structure groups, namely the gauge-group, of principal bundles govern phenomena induced by a connection and a curvature of the Yang--Mills theory through an equation of motion.
The theory of principal bundles with the structural gauge group provides the Yang--Mills theory's backbone.
Physically, matter fields are spinor sections of the principal gauge bundle, and forces, including a gravitational force, are represented owing to connections and curvatures.

This section discusses principal bundles mainly with the Lorentzian metric for simplicity; the determinant of the metric is explicitly indicated if necessary.
Extension to the $\theta$-metric is realized by replacing $\bm\eta\mapsto\bm\eta_\theta$,  $\bm\epsilon\mapsto\bm\epsilon_\theta$ and $\sigmaeta\mapsto\sigmath$ according to (\ref{kappag}).
We note that the metric tensor and its inverse are hidden in the index-raising and -lowering operations.

%
%
\subsection{Principal bundles of space-time manifold}
A principal bundle is a tuple such that $(E,\pi,M,G)$, where $E$ and $M$ are total and base spaces, respectively, and $\pi:E\rightarrow M$ is a projection (bundle) map, and $G$ is an associating structure group.
A group operator acts simply transitively  from the right\footnote{A group operator of the co-Poincar\'{e} group is defined exceptionally acting from the left.}.
A duplex superspace appears in the Yang--Mills theory and general relativity in four-dimensions as introduced in section \ref{242}.

%
%
\subsubsection{Co-Poincar\'{e} bundle}
The Yang--Mills theory has the Poincar\'{e} symmetry in addition to the gauge-symmetry.
On the other hand, general relativity is not invariant under the four-dimensional translation.
The author introduces a modified translation operator to preserve in general relativity\cite{doi:10.1063/1.4990708,Kurihara_2020}.
This section discusses a principal bundle in the space-time manifold with the structure co-Poincar\'{e} group.

We extend a structure group of the inertial bundle to Poincar\'{e} group:
\begin{align*}
I\hspace{-.1em}S\hspace{-.1em}O(1,3)=SO(1,3)\ltimes T^4.
\end{align*}
Representations of Lie algebra $\iii\sss\ooo(1,3)$ of Poincar\'{e} group is obtained using the trivial basis as follows\cite{fre2012gravity}:
\begin{align}
\left[P_{a},P_{b}\right]&=0,\nonumber\\
\left[J_{ab},P_{c}\right]&=-\eta_{ac}P_{b}+\eta_{bc}P_{a},\label{JJ}\\
\left[J_{ab},J_{cd}\right]&=
-\eta_{ac}J_{bd}+\eta_{bc}J_{ad}
-\eta_{bd}J_{ac}+\eta_{ad}J_{bc}.\nonumber
\end{align}
where $P_a$ and $J_{ab}$ are generators of the $T^4$ group and the $SO(1,3)$ group, respectively.

The Einstein--Hilbert gravitational Lagrangian does not respect the Poinca\'{e} symmetry\cite{0264-9381-29-13-133001}.
We introduce the co-Poincar\'{e} symmetry extending the Poinca\'{e} symmetry here.
The co-Poincar\'{e} symmetry, which is denoted as $G_\cP$, is the symmetry in which the translation operator  is replaced by the co-translation operator.
A generator of the co-translation is defined as $P_{ab}:=P_a\iota_b$, where $\iota_b$ is a contraction with respect to trivial frame field $\partial_b$.
Lie algebra of the co-Poincar\'{e} group are provided as
\begin{align*}
\left[P_{ab},P_{cd}\right]&=0,\\
\left[J_{ab},P_{cd}\right]&=-\eta_{ac}P_{bd}+\eta_{bc}P_{ad},
\end{align*}
and $\left[J_{ab},J_{cd}\right]$ in (\ref{JJ}).
The structure constants of the co-Poincar\'{e} group can be obtained from above Lie algebra through a relation, $[\Theta_I,\Theta_J]:=\FF_{~IJ}^{K}\hspace{.1em}\Theta_K$, where $\left[\Theta_I\right]_{ab;cd}:=\left(\left[\Theta_{1}\right]_{ab},\left[\Theta_{2}\right]_{cd}\right)=\left(P_{ab},J_{cd}\right)$.
Lie algebra of the co-Poincar\'{e} symmetry is denoted as $\ggg_\cP$. 
Each component of the structure constant is provided by direct calculation using the trivial basis as
\begin{align}
\left\{
\begin{array}{l}
\FF_{~11}^{1}=\FF_{~11}^{2}=\FF_{~12}^{2}=
\FF_{~21}^{2}=\FF_{~22}^{1}=0,\\
\left[\FF_{~12}^{1}\right]_{ab;cd}^{ef}=
-\left[\FF_{~21}^{1}\right]_{cd;ab}^{ef}=
\eta_{ac}\delta^e_b\delta^f_d-\eta_{bc}\delta^e_a\delta^f_d,\\
\left[\FF_{~22}^{2}\right]_{ab;cd}^{ef}=
-\eta_{ac}\delta_b^e\delta_d^f+\eta_{bc}\delta_a^e\delta_d^f
-\eta_{bd}\delta_a^e\delta_c^f+\eta_{ad}\delta_b^e\delta_c^f.
\end{array}
\right.\label{FgrLie}
\end{align}

Connection form $\AAA_\cP$ and curvature form $\FFF_\cP$ are, respectively, introduced as Lie-algebra valued one- and two-forms concerning co-Poincar\'{e} group, and they are expressed using the trivial basis as
\begin{align*}
\AAA_\cP&=J_\bcdots\otimes\www^\bcdots+P_\bcdots\otimes\hspace{1.2em}\SSS^\bcdots
&\in\ggg_\cP\otimes\Wedge^1(\TsM),\\
\FFF_\cP&=
J_\bcdots\otimes\RRR^\bcdots+P_\bcdots\otimes d_\www\SSS^\bcdots
&\in\ggg_\cP\otimes\Wedge^2(\TsM).
\end{align*}
We note that $P^\bcdots\otimes d_\www\SSS_\bcdots\in\Omega^2$ due to $\iota_\bullet:\Omega^p\rightarrow\Omega^{p-1}$, e.g., $\iota_a\eee^b=\delta^b_a$.
A principal co-Poincar\'{e} bundle is defined as follows:
\begin{definition}[principal co-Poincar\'{e} bundle]\label{CoPoinBndl}
A principal co-Poincar\'{e} bundle is a tuple $\left(\M,\pi_{\mathrm I},\MM,G_\cP\right)$,
where a co-Poincar\'{e} group operator acts on the the total space from the left.
\end{definition}

%
%
\subsubsection{Duplex superspace in general relativity}\label{CPbundle}
The Hodge-dual operator works as the parity operator and induces a duplex superspace in $\Omega^2(\TsM)$ over the four-dimensional manifold as shown in section \ref{242}.
The superspace consisting of curvature two-forms of the inertial bundle is crucial in physics because it is related to the structure of the space-time itself.
Various representations of general relativity in four-dimensions are summarized in Ref.\cite{krasnov2020formulations}.
\begin{definition}[Hodge-dual Lorentz curvature]
A Hodge-dual of the curvature of the inertial bundle is defined owing to the Hodge-dual operator  as
\begin{align*}
\HD:\RRR^{ab}\mapsto\hat\RRR^{ab}
:=\frac{1}{2}\hat{R}^{ab}_{\hspace{.8em}\bcdots}
\hspace{.1em}\eee^\bcdot\wedge\eee^\bcdot,
\end{align*}
where
\begin{align*}
\hat{R}^{ab}_{\hspace{.8em}cd}&:=\frac{\sqrt{\sigmaeta}}{2}\hspace{.1em}{R}^{ab}_{\hspace{.8em}\stars}
\hspace{.1em}\eta^{\bcdot\star}\eta^{\bcdot\star}\hspace{.1em}\epsilon_{\bcdots cd}.
\end{align*}
\end{definition}
\noindent
The Hodge-dual operator squared yields eigenvalue equation $\HD\bcdot\HD(\RRR^{ab})=\RRR^{ab}$; thus, the operator has eigenvalues of $\pm1$.
Owing to \textbf{Remark \ref{vpvm}}, a space of the two-form objects is split into two subspaces as 
\begin{align*}
\RRR^\pm:=P_\textrm{H}^\pm(\RRR)\in V^\pm ~~\textrm{yielding}~~ 
\HD(\RRR^\pm)=\pm\RRR^\pm.
\end{align*}
Surface form $\SSS\in V^2(\TM)\otimes\Omega^2(\TsM)\otimes\sss\ooo(1,3)$ is expressed using tensor coefficients as 
\begin{align*}
\SSS^{ab}=\frac{1}{2}S^{ab}_{\hspace{.7em}\bcdots}\hspace{.2em}
\eee^\bcdot\wedge\eee^\bcdot,~~\textrm{where}~~
S^{ab}_{\hspace{.7em}cd}=\epsilon^{ab}_{\hspace{.7em}cd}.
\end{align*}
Any two-form objects are represented using the parity bases as shown in (\ref{SISJ}); thus, the surface form can be expressed using a sum of two eigenstates of the Hodge-dual operator as $\SSS=\SSS^++\SSS^-$.

Curvature and surface forms in the inertial bundle make the Kramers pairs $\KP(\RRR)$ and $\KP(\SSS)$.
For the Kramers pair of the spin-connection $\KP(\www)$, $\hat\www$ is provided as a solution of the structure equation:
\begin{align*}
\hat\RRR&=
d\hat\www_\Sp+\cG\hspace{.1em}\hat\www\wedge\hat\www,
\end{align*}
for given $\hat\RRR$.

We construct the secondary superspace in the inertial bundle.
The covariant differential in the Lorentzian manifold induces the Dirac operator such that:
\begin{align*}
\D^{~}_\www&:=\gamma\d^{~}_\www
=\left(
\begin{array}{cc}
	0      &  d+{\cGo}[\hat\www,]_\wedge\\
 	d+{\cGz}[\www,]_\wedge &   0
\end{array}
\right),
\end{align*}
where $\d^{~}_\www:=$diag$(d^{~}_{\www},\hat{d}^{~}_{\www})$.
The Lorentz curvature is represented in the secondary superspace as
\begin{align*}
\bm\RRR^{~}_2&:=\left(
\begin{array}{cc}
	\RRR &   0\\
    0    & \hat\RRR
\end{array}
\right)=\left(
\begin{array}{cc}
 	d\left[\www\right]^{ab}+{\cGz}
\left[\www\right]^a_{~\bcdot}\wedge
\left[\www\right]^{\bcdot b} &   0\\
	0      &  d\left[\hat\www\right]^{ab}+{\cGo}
\left[\hat\www\right]^a_{~\bcdot}\wedge
\left[\hat\www\right]^{\bcdot b}
\end{array}
\right).
\end{align*}
The second Bianchi identity is represented as
\begin{align*}
\D^{~}_\www\bm\RRR^{~}_2&=\left(
\begin{array}{cc}
	0      &  d\hat\RRR+{\cGo}[\hat\www,\hat\RRR]_\wedge\\
 	d\RRR+{\cGz}[\www,\RRR]_\wedge &   0
\end{array}
\right)=\bm{0}
\end{align*}
in the secondary superspace.

Dual connection and curvature are also provided in the co-Poincar\'{e} bundle as
\begin{align*}
\hat\AAA^{~}_\cP&=J_\bcdots\otimes\hat\www^\bcdots+P_\bcdots\otimes\hspace{1.2em}\hat\SSS^\bcdots
&\in\hspace{.2em}\ggg^{~}_\cP\otimes\Wedge^1(\TsM),\\
\hat\FFF^{~}_\cP&=
J_\bcdots\otimes\hat\RRR^\bcdots+P_\bcdots\otimes 
\hat{d}^{~}_{\www}\hat\SSS^\bcdots
&\in\hspace{.2em}\ggg^{~}_\cP\otimes\Wedge^2(\TsM).
\end{align*}
The gravitational form is introduced in the secondary superspace as follows:
\begin{definition}[gravitational form]\label{Gform}
The gravitational form is defined owing to the co-Poincar\'{e} curvature in the secondary superspace such that:
\begin{align*}
\bm\FFF^{~}_{Gr}&:=
\gamma\left(
\begin{array}{cc}
	\FFF^{~}_\cP & 0\\
 	0 & \hat\FFF^{~}_\cP
\end{array}
\right)=
\left(
\begin{array}{cc}
 	0 & \hat\FFF^{~}_\cP \\
	\FFF^{~}_\cP & 0
\end{array}
\right).
\end{align*}
\end{definition}
\noindent
The Einstein--Hilbert gravitational Lagrangian is represented using the gravitational form. 

%
%
\subsection{Principal bundles in Yang--Mills theory}
The Yang--Mills theory is constructed on the inertial manifold.
A spinor section describing a physical matter field is introduced in the spin manifold over $\M$.
The matter field is an element of a compact Lie group, namely a gauge group.
A connection and curvature of the principal bundle are also physical objects which mediate a \emph{force} among matter fields.

Two structure groups, $Spin(M)$ and $SU(N)$, are introduced; the former concerns symmetry in the Lorentzian manifold, the latter is an internal symmetry group and governing property of a gauge interaction ({gauge force}).
These bundles induce a superspace, and they are twisted into each other.
This section introduces principal bundles, their connections and curvatures appearing in the Yang--Mills theory.

%
%
\subsubsection{Spinor bundle}\label{spinerbundle}
A spin structure is introduced in $\M$ through spinor group $Spin(V)$, where $V$ is a vector space over $\M$.
$Spin(V)$ is also denoted as $Spin(1,3)$ when $V$ has the $SO(1,3)$ symmetry.
It is known that a $Spin(1,3)$ group doubly covers a $SO(1,3)$ group such that $Spin(1,3)\cong SO(1,3)\times_{\Z_2}U(1)$ under covering map $\tau_\cov:Spin(1,3)\rightarrow SO(1,3)\otimes\{\pm1\}$.
A Clifford algebra of $Spin(1,3)$ is denoted as $\Cl(1,3)$.
When a spin structure is defined globally in $\M$, it is denoted $Spin(\M)$, and the inertial manifold is extended to a spin manifold.
\begin{definition}[Principal spinor bundle]
The principal spinor bundle is a tuple as follows: 
\begin{align*}
\left(\S_\Sp,\pi_\Sp,\M, Spin(1,3)\right).
\end{align*}
A total space is Clifford module $\S_\Sp=\Cl(1,3)\otimes\C$.
Projection map $\pi_\Sp$ is provided owing to the covering map as follows:
\begin{align*}
\pi_\Sp:=\tau_\cov/\{\pm1\}:
\S_\Sp\otimes \M\rightarrow\M/\{\pm1\}:\left.\psi\right|_p\hspace{.1em}\mapsto p\in\M,
\end{align*}
where $\psi\in\Gamma\left(\M,\S_\Sp\right)$.
The structure group of the inertial bundle, $SO(1,3)$, is lifted to $Spin(1,3)$ owing to projection map $\pi_\Sp$; a spin structure is induced globally in $\M$.
\end{definition}
\noindent
We assume the existence of a spin structure globally; a spinor is a section in $\S_\Sp$.
The Lorentz transformation of a spinor is provided as follows:
\begin{align*}
\Gso:\S_\Sp\otimes\M\rightarrow \S_\Sp:
\psi(x)\mapsto S(\LLambda)\psi(x).
\end{align*}
Vector $\bm{\gamma}_\Sp=\gamma_\Sp^a\partial_a\in\TM\otimes\S_\Sp$ is a Clifford algebra valued vector in $\TM$ transformed as an adjoint representation under the Lorentz transformation such that:
\begin{align*}
\Gso(\bm{\gamma}_\Sp)&:=
S(\LLambda)\bm{\gamma}_\Sp S(\LLambda)^{-1}=\bm{\gamma}_\Sp\LLambda^{-1}.
\end{align*}
A spinor product of spinor and Clifford algebra are compatible with the Lorentz transformation as follows:
\begin{align*} 
\Gso:\bm{\gamma}_\Sp\psi(p)\mapsto
\left(S(\LLambda)\bm{\gamma}_\Sp S(\LLambda)^{-1}\right)
\left(S(\LLambda)\psi(\Lambda p)\right)
=S(\LLambda)\left(\bm{\gamma}_\Sp\psi(\Lambda p)\right)\cong\bm{\gamma}_\Sp\psi(p).
\end{align*}
Vector space $V_\psi$ has $\Z_2$-grading structure $V_\psi=V_\psi^+\oplus V_\psi^-$, where 
\begin{align*}
V_\psi^\pm&:=\left\{\psi^\pm\in V_\psi | \GammaSp\psi^\pm=\pm\psi^\pm\right\}.
\end{align*}
Any spinors belong to one of two chiral sets $V_\psi^\pm$ according to an eigenvalue of the chiral operator.
Chiral projection operator $P_\Sp^\pm:V_\psi\rightarrow V_\psi$ acts on a spinor as $P_\Sp^\pm\psi=\psi^\pm\in V_\psi^\pm$.

A representation of Clifford module $\S_\Sp(\Cl(1,3)\otimes\C)$ can be expressed using a direct sum of four isomorphic representations.
Suppose $\Cl_0$ is a subspace spanned by Clifford products of an even number of elements.
In this case, a representation of $\Cl_0$ is provided using a direct sum of two irreducible representations, $\S_\Sp(\Cl_0(1,3)\otimes\C)\cong\S_\Sp^+(\Cl_0(1,3)\otimes\C)\oplus\S_\Sp^-(\Cl_0(1,3)\otimes\C)$.
Two representations $\S_\Sp^\pm(\Cl_0(1,3)\otimes\C)$ are not isomorphic to each other.
Clifford module $\S_\Sp^\pm(\Cl_0(1,3)\otimes\C)$ can be defined globally in $\M$ owing to the covering map, which is simply denoted by $\S_\Sp^\pm(\M)$.

A connection form of the spinor bundle is obtained by equating it to the spin-connection form.
Moreover, a spinor is lifted to a section in $\TM$ owing to the coverage map.
Covariant differential $d_\Sp$ concerning $Spin(1,3)$ with connection $\AAA_\Sp$, namely a Clifford connection, acts on the spinor as\cite{fre2012gravity, moore1996lectures,Nakajima:2015rhw}
\begin{align}
d_\Sp\psi&:=d\psi+\cG\AAA_\Sp\psi,
\intertext{where}
\AAA_\Sp&:=-i\hspace{.1em}\www\cdot\Ss\hspace{.1em}
=-\frac{i}{2}\hspace{.1em}\eta_\bcdots\eta_\stars
\www^{\bcdot\star}\Ss^{\bcdot\star},~~\textrm{and}~~\Ss:=\Ss_\theta\big|_{\theta\rightarrow+0}.\label{Asp}
\end{align}
A coupling constant in the spinor bundle is set to the same as that of the inertial bundle.
Corresponding curvature two-form $\FFF_\Sp$ is defined as
\begin{align}
\FFF_\Sp&:=d\AAA_\Sp+\cG\hspace{.1em}\AAA_\Sp\wedge\AAA_\Sp.\label{Fsp}
\end{align}
The Bianchi identities are provided as
\begin{align}
\left(d_\Sp\right)^2=\cG\hspace{.1em}\FFF_\Sp=i\cG\hspace{.1em}\RRR\cdot\Ss~\mathrm{and}~~d_\Sp\FFF_\Sp=0,\label{spcurv}
\end{align}
where $\RRR\cdot{\Ss}=\RRR^\bcdots{\Ss}_\bcdots/2$ using the trivial basis.
Dirac operator 
\[
\dSp:\Gamma\left(\M,\Wedge^0(\S_\Sp)\right)\rightarrow
\Gamma\left(\M,\Wedge^0(\S_\Sp)\right)
\]
is defined as
\begin{subequations}
\begin{align}
\dSp(\psi)&:=\iota_{{\gamma}_\Sp}d_\Sp:
\Gamma\left(\M,\Wedge^0(\S_\Sp)\right)
\longrightarrow^{\hspace{-1.3em}^{d_\Sp}}\hspace{.3em}
\Gamma\left(\M,\Wedge^1(\S_\Sp)\right)
\longrightarrow^{\hspace{-1.5em}^{\iota_{{\gamma}_\Sp}}}\hspace{.4em}
\Gamma\left(\M,\Wedge^0(\S_\Sp)\right),\label{DSG}
\intertext{
which is expressed using the trivial basis as
}
\dSp(\psi)&=\left(\gamma_\Sp^\bcdot\partial_\bcdot
-\frac{i}{2}\cG\hspace{.1em}\gamma_\Sp^\bcdot
\Varepsilon_\bcdot^\mu\omega_\mu^{~\stars}{\Ss}_\stars
\right)\psi.
\end{align}
\end{subequations}
We note that
$
\iota_{{\gamma}_\Sp}d\psi=
(\iota_{{\gamma}_\Sp}\eee^\bcdot)(\partial_\bcdot\psi)
=\gamma_\Sp^\bcdot\partial_\bcdot\psi.
$
A Lorentz transformation of the Dirac operator is
\begin{align*}
\Gso\hspace{-.2em}\left(\dSp\right)&=
\left(S(\LLambda)\bm{\gamma}_\Sp S(\LLambda)^{-1}\right)
\left(S(\LLambda) d_\Sp S(\LLambda)^{-1}\right)
=S(\LLambda)\dSp S(\LLambda)^{-1}.
\end{align*}

For the Dirac spinor, the Dirac conjugate is defined as a map
\begin{align*}
\tau_{\Dc}:\Gamma\left(\M,\Wedge^0(\S_\Sp)\right)
\rightarrow\Gamma\left(\M,\Wedge^0(\S_\Sp)\right):
\psi\mapsto\bar\psi:=\psi^\dagger\gamma_\Sp^0.
\end{align*}
Hereafter, $\Gamma\left(\M,\Wedge^0(\S^{~}_\Sp)\right)$ and $\Gamma\left(\M,\Wedge^0(\S^\pm_\Sp)\right)$ are simply denoted as $\S^{~}_\Sp$ and $\S^\pm_\Sp$, respectively.
Chiral operator $\GammaSp$ acts on the Dirac conjugate spinor as $\bar\psi^\pm\GammaSp=\mp\bar\psi^\pm$ because
\begin{align*}
(\psi^{\pm})^\dagger=\left(\GammaSp\psi^\pm\right)^\dagger=
i\hspace{.1em}(\pm\psi^{\pm})^\dagger\gamma_\Sp^3\gamma_\Sp^2\gamma_\Sp^1\gamma_\Sp^0
=\mp i\hspace{.1em}\bar\psi^\pm\gamma_\Sp^3\gamma_\Sp^2\gamma_\Sp^1;
\end{align*}
yielding
\begin{align}
(\psi^{\pm})^\dagger\gamma_\Sp^0=
\left(\mp i\hspace{.1em}\bar\psi^\pm\gamma_\Sp^3\gamma_\Sp^2\gamma_\Sp^1\right)\gamma_\Sp^0
&\Rightarrow
\bar\psi^\pm=\mp\bar\psi^\pm\GammaSp.\label{barGamma}
\end{align}
Therefore, we obtain that $\bar\psi\in\S_\Sp^+$ for $\psi\in\S_\Sp^-$.
Terms $i\hspace{.1em}\bar\psi(\iota_{{\gamma}_\Sp}d)\psi$ and $\bar\psi\psi$ are $SO(1,3)$ invariant as follows: 
\begin{align*}
\Gso\hspace{-.2em}\left(i\hspace{.1em}\bar\psi(\iota_{{\gamma}_\Sp}d)\psi\right)
&=i\hspace{.1em}\psi^\dagger S(\LLambda)^{-1}
\left(S(\LLambda)\gamma_\Sp^0 S(\LLambda)_{~}^{-1}\right)
\left(S(\LLambda)(\iota_{{\gamma}_\Sp}d)S(\LLambda)^{-1}\right)
S(\LLambda)\psi,\\
&=i\hspace{.1em}\bar\psi(\iota_{{\gamma}_\Sp}d)\psi,
\end{align*}
where $\Gso\hspace{-.2em}\left(\iota_{{\gamma}_\Sp}d\right)=S(\LLambda)(\iota_{{\gamma}_\Sp}d)S(\LLambda)^{-1}$ is used.
Similar calculations provide the $SO(1,3)$ invariance for $\bar\psi\psi$.

\begin{remark}\label{lemDOpm}
There exists super algebra in $End\left(\S_\Sp\right)=End^+\left(\S_\Sp\right)\oplus End^-\left(\S_\Sp\right)$.
\end{remark}
\begin{proof}
Suppose operator $\mathcal{O}_0$ ($\mathcal{O}_1$) includes an even (odd) number of $\gamma_\Sp$\hspace{-.15em}'s.
For $\psi^\pm\in\S_\Sp^\pm$, 
\begin{align*}
\GammaSp\psi^\pm=\pm\psi^\pm, ~~&\textrm{and}~~
\left\{
\begin{array}{rrl}
\GammaSp\left(\mathcal{O}_0\psi^\pm\right)
=&\mathcal{O}_0\GammaSp\psi^\pm
=&\pm\left(\mathcal{O}_0\psi^\pm\right),\\
\GammaSp\left(\mathcal{O}_1\psi^\pm\right)
=&-\mathcal{O}_1\GammaSp\psi^\pm
=&\mp\left(\mathcal{O}_1\psi^\pm\right)
\end{array}
\right.
\end{align*}
owing to (\ref{g5ga}). 
Thus, we obtain $\mathcal{O}_0\in{End}^+\left(\S_\Sp\right)$ and $\mathcal{O}_1\in{End}^-\left(\S_\Sp\right)$.
Moreover, it is obvious that
\begin{align*}
\mathcal{O}_0\mathcal{O}_0, \mathcal{O}_1\mathcal{O}_1\in{End}^+\left(\S_\Sp\right)
~~&\textrm{and}~~ 
\mathcal{O}_0\mathcal{O}_1, \mathcal{O}_1\mathcal{O}_0\in{End}^-\left(\S_\Sp\right).
\end{align*}
Therefore, the remark is maintained. 
\end{proof}
The Dirac operator $\dSp$ belongs to $\mathcal{O}_1$ and flips a parity of $\psi^\pm$.
We note that
\begin{align*}
\begin{array}{ccccc}
\bar\psi\left(i\hspace{.1em}\dSp\right)\psi&=&
\left(\bar\psi\GammaSp\right)
\left(\GammaSp\left(i\hspace{.1em}\dSp\right)\psi\right)&=&
\bar\psi^+\left(i\hspace{.1em}\dSp\right)\psi^+ +
\bar\psi^-\left(i\hspace{.1em}\dSp\right)\psi^-,\\
\bar\psi\psi&=&\left(\bar\psi\GammaSp\right)
\left(\GammaSp\psi\right)&=&
\bar\psi^+\psi^- + \bar\psi^-\psi^+.
\end{array}
\end{align*}

%
%
\subsubsection{Gauge bundle}
This section treats a principal bundle with a compact Lie group, namely the principal gauge bundle,  in the inertial manifold.
We exploit the $SU(N)$ group (including $U(1)$) as a structure group, that is called  the gauge group.
A scalar function (field) appearing in the gauge bundle as a section is called the Higgs field in physics.

\begin{definition}[Principal gauge bundle]
A principal gauge bundle is defined as a tuple such that:
\begin{align*}
\left(\TsM^{\otimes N},\pi_{\SU},\M,SO(1,3)\otimes{SU(N)}\right),~&\textrm{where}~
\TsM^{\otimes N}:=\overbrace{\TsM\otimes\cdots\otimes\TsM}^{N}
\end{align*}
A space of complex-valued scalar fields, namely the \emph{Higgs field}, $\bm\phi=(\phi^1,\phi^2,\cdots,\phi^N)^T$ is introduced as a section $\bm\phi\in\bm\Phi=\Gamma\left(\M,\Omega^0(\TsM^{\otimes N})\otimes{SU(N)}\right)$ belonging to the fundamental representation of the $SU(N)$ symmetry.
\end{definition}
\noindent
$SU(N)$ group operator $\Gsu$  acts on the section as
\begin{align*}
\Gsu:\bm\Phi\rightarrow\bm\Phi:
\bm\phi\mapsto\Gsu\hspace{-.2em}\left(\bm\phi\right):=
\bm{g}_\SU^{~}\hspace{.1em}\bm\phi,
\end{align*}
that has an $(N\hspace{-.2em}\times\hspace{-.2em}N)$-matrix representation such as
\begin{align*}
\left[\Gsu\left(\bm\phi\right)\right]^I=
\left[\bm{g}_\SU^{~}\right]^I_{J}\phi^J,
\end{align*}
where $\bm{g}_\SU^{~}$ is a unitary matrix with det$[\bm{g}_\SU^{~}]=1$.
Lie algebra of $SU(N)$ group is 
\begin{align*}
\left[\tau^{~}_I,\tau^{~}_J\right]
:=i\sum_K\hspace{.1em}f^{~}_{IJK}\hspace{.2em}\tau^{~}_K,
\end{align*}
where $f_{\bullets\bullet}$ is a structure constant of $SU(N)$, and $\bm{\tau}=(\tau_1,\tau_2,\cdots,\tau_{N^2-1})\in\sss\uuu(N)$.
A gauge connection $\AAA^{~}_{\SU}:=\AAA_\SU^I\hspace{.2em}\tau^{~}_I\in\Omega^1\otimes\sss\uuu(N)$ is a Lie-algebra valued one-form object.
Covariant differential $d^{~}_\SUO$ on $p$-form object $\aaa\in\Omega^p(\TsM)$ concerning $SU(N)$ and $SO(3,1)$ is defined as
\begin{align}
d^{~}_\SUO\hspace{.2em}\aaa\hspace{.1em}:=
\bm{1}_\SU\hspace{.1em}d_\www\aaa-i\hspace{.1em}{{c^{~}_\SU}}\hspace{.1em}
[\AAA^{~}_\SU,\aaa]_\wedge.\label{cdSU}
\end{align}
Real constant ${c^{~}_\SU}\in\R$ is a coupling constant of the gauge interaction in physics.
Connection $\AAA^{~}_\SU$ belongs to an adjoint representation of the gauge group as
\begin{align}
\Gsu\hspace{-.2em}\left(\AAA^{~}_\SU\right)&=
\bm{g}^{-1}_\SU\hspace{.2em}\AAA^{~}_\SU\hspace{.2em}\bm{g}_\SU^{~}
+i\hspace{.1em}{c^{-1}_\SU}\gdg,\label{pisu}
\end{align}
which ensures $SU(N)$ covariance of the covariant differential.
At the same time, $\AAA^{~}_\SU$ is a vector in $\TsM$, which is $SO(1,3)$-transformed as
\begin{align}
\Gso:\AAA^{~}_\SU\mapsto\Gso(\AAA^{~}_\SU)&=
\LLambda\AAA^{~}_\SU.\label{SUcd}
\end{align}
Corresponding gauge curvature two-form $\FFF_\SU$ is defined through a structure equation such that:
\begin{align}
\FFF_\SU=\FFF_\SU^I\hspace{.2em}\tau^{~}_I&:=d_\www\AAA^{~}_\SU
-i\hspace{.1em}{{c^{~}_\SU}}\hspace{.1em}\AAA^{~}_\SU\wedge\AAA^{~}_\SU,\nonumber \\
&=\left(d\AAA_\SU^I+\cG\hspace{.1em}\www\wedge\AAA_\SU^I
+\frac{{c^{~}_\SU}}{2}f^I_{~JK}\hspace{.1em}\AAA_\SU^J\wedge\AAA_\SU^K
\right)\tau^{~}_I,\label{SUCRV}
\end{align}
where $f^I_{~JK}=f^{~}_{IJK}$.
Curvature $\FFF^{~}_\SU$ is $SO(1,3)$ covariant  and is $SU(N)$-transformed as an adjoint representation such that:
\begin{align*}
\Gsu\hspace{-.2em}\left(\FFF^I_{\SU}\right)&=\bm{g}^{~}_\SU\hspace{.1em}\FFF^I_{\SU}\hspace{.1em}\bm{g}_\SU^{-1}+i\hspace{.1em}{c^{-1}_\SU}\gdg.
\end{align*}
The first and second Bianchi identities are
\begin{align}
d_{\SUO}^2\hspace{.2em}\aaa^a=
\cG\eta_{\bcdots}\RRR^{a\bcdot}\wedge\aaa^\bcdot
-i{c^{~}_\SU}\hspace{.1em}\FFF_{\SU}\wedge\aaa^a,
~~\textrm{and}~~
d^{~}_{\SUO}\hspace{.2em}\FFF_{\SU}=0,\label{BianchiSU}
\end{align}
where $\aaa\in\Omega^p(\TsM)$.
The gauge connection and curvature are represented using the trivial bases in $\TsM$, respectively, as
\begin{align}
\AAA^{~}_\SU&=\AAA_\SU^I\hspace{.2em}\tau^{~}_I=:\Aa^I_{a}\eee^a\hspace{.2em}\tau^{~}_I,~~\textrm{and}~~
\FFF_\SU=\FFF_\SU^I\hspace{.2em}\tau^{~}_I=:\frac{1}{2}\f^I_{ab}\hspace{.1em}\eee^a\hspace{-.1em}\wedge\eee^b\hspace{.2em}\tau^{~}_I;\label{Fan}
\end{align}
and the second Bianchi identity is represented as
\begin{align*}
\epsilon^{a\bcdots\bcdot}\left(
\partial_\bcdot\f^I_\bcdots+{c^{~}_\SU}\hspace{.1em}f^I_{~JK}\Aa^J_\bcdot\f^K_\bcdots
\right)&=0.
\end{align*}
In this expression, tensor coefficients of the gauge curvature is provided using those of the gauge connection such that: 
\begin{align}
\f^I_{ab}&=\partial_a\Aa^I_b-\partial_b\Aa^I_a+
{{c^{~}_\SU}}\hspace{.1em}f^I_{~JK}\Aa^J_a\Aa^K_b+\Aa^I_\bcdot\TT^\bcdot_{~ab},\label{FabISU}
\end{align}
where $\TT$ is a tensor coefficient of the torsion form defined as  $\TTT^\bullet=:\TT^\bullet_{\hspace{.5em}\bcdots}\eee^\bcdot\wedge\eee^\bcdot/2$.
When the space-time manifold is torsion-less, the gauge curvature has the same representation as it in the flat space-time.

%
%
\subsubsection{Spinor-gauge bundle}\label{sgb}
The gauge group introduced in the previous section also acts on the spinor section (field), which also has the $SU(N)$ symmetry.
The spinor field creates matter particles from the quantum mechanical point of view; the gauge curvature acts on them as forces among matters, and the gauge connection mediates forces between the spinor fields.
A spinor-gauge bundle is a Whitney sum of Spinor- and gauge bundles.

\begin{definition}[Principal spinor-gauge bundle]\label{spinor-gauge-bundle}
A principal spinor-gauge bundle is a tuple such as
\begin{align*}
\left(\S_\Sp^{\otimes N},\pi_{\Sp}\oplus\pi_{\SU},\M, Spin(1,3)\otimes{SU(N)}\right)
~&\textrm{where}~
\S_\Sp^{\otimes N}:=\overbrace{\S_\Sp\otimes\cdots\otimes\S_\Sp}^{N}.
\end{align*}
The total space of the gauge bundle is lifted to the spin manifold.
\end{definition}
\noindent
A connection and a curvature are provided, respectively, as
\begin{align}
\AAA_\SG=\AAA_\Sp\otimes\bm{1}_\SU+\bm{1}_\Sp\otimes\AAA^{~}_\SU,~~\textrm{and}~~
\FFF_\SG=\FFF_\Sp\otimes\bm{1}_\SU+\bm{1}_\Sp\otimes\FFF_\SU,\label{SGcc}
\end{align}
Corresponding Bianchi identities are obtained by direct calculations as
\begin{align}
\left({d}_\SG\right)^2
=-i\cG\hspace{.1em}\FFF_\Sp\cdot{\Ss}\otimes\hspace{.2em}\bm{1}_\SU
+\bm{1}_\Sp\otimes\left(-i{{c^{~}_\SU}}\hspace{.1em}\hspace{.1em}\FFF_\SU\right),~
{d}_\Sp\FFF_\Sp={d}_\SU \FFF_\SU=0.\label{BianchiSG}
\end{align}

We introduce $N$ spinors such that $\bm{\psi}=(\psi^1,\cdots,\psi^N)^T$ with the $SU(N)$ symmetry.
A covariant differential on $\bm{\psi}\in\S_\Sp^{\otimes N}$ in the spinor-gauge bundle is defined as
\begin{align*}
d_\SG\bm{\psi}&:=
\bm{1}_\SU\otimes{d}\bm{\psi}+
\cG\bm{1}_\SU\otimes\AAA_\Sp\bm{\psi}
-i\hspace{.1em}{{c^{~}_\SU}}\hspace{.1em}\AAA^{~}_\SU\otimes\bm{\psi}.
\end{align*}
A Dirac operator concerning the spinor-gauge bundle is provided as
\begin{align*}
\ds_{\SG}&:\Gamma\left(\M,\Wedge^0\left(\S_\Sp^{\otimes{N}}\otimes{SU(N)}\right)\right)\rightarrow
\Gamma\left(\M,\Wedge^0\left(\S_\Sp^{\otimes{N}}\otimes{SU(N)}\right)\right)\\&:
\bm{\psi}\mapsto
\ds_{\SG}\bm{\psi}:=\iota_{{\gamma}_\Sp}(d-i\cG\hspace{.1em}\wcs)
\hspace{.2em}\bm{\psi}\otimes\bm{1}_\SU
-{i}{c^{~}_\SU}\hspace{.1em}\iota_{{\gamma}_\Sp}\AAA_{\SU}\otimes\bm{\psi},\\
&\hspace{5.5em}=\left(
\gamma_\Sp^\bcdot\partial_\bcdot
-\frac{i}{2}\cG\gamma_\Sp^\bcdot\Varepsilon_\bcdot^\mu
\omega_\mu^{~\stars}{\Ss}_\stars\right)\bm{\psi}
-i{c^{~}_\SU}\gamma_\Sp^\bcdot\Aa^I_\bcdot \tau^{~}_I\hspace{.2em}\bm{\psi}.
\end{align*}

%
%
\subsubsection{Hodge-dual connection and curvature}\label{cg-b}
This section introduces dual connections and curvatures concerning the Hodge-dual operator in the gauge, spinor and  spinor-gauge bundles.

%
%
\paragraph{\textbf{Dual curvature in gauge bundle:}}
We defined dual curvature in the gauge bundle as follows:
\begin{definition}[Hodge-dual geuge-curvature]
A dual curvature of the $SU(N)$ gauge-curvature is defined owing to the Hodge-dual operator  as
\begin{align}
\HD:\FFF^I_\SU\mapsto\hat\FFF^I_{\SU}
:=\frac{1}{2}\hat\f^{I}_\bcdots\hspace{.1em}\eee^\bcdot\wedge\eee^\bcdot,
~~&\textrm{where}~~
\hat\f^{I}_{ab}:=
\frac{\sqrt{\sigmaeta}}{2}\f^I_{\bcdots}\hspace{.1em}
\eta^{\bcdot\star}\eta^{\bcdot\star}\hspace{.1em}\epsilon_{\stars{ab}}.\label{Fhat}
\end{align}
\end{definition}
\noindent
We note that two-group operators, ${G}^{~}_\SO$ and ${G}^{~}_\SU$, are homomorphic to each other:
\begin{remark}\label{pihatpiSO}
A following commutative diagram is maintained:
\begin{align*}
\begin{tabular}{ccc}
~&$\Gso$&~\\
$\hat\FFF_\SU$ &${\longrightarrow}$&$\hat\FFF_\SU'$\\ 
&~\\
$\hspace{2em}\downarrow$\hspace{.5em}$\Gsu$&~&$\hspace{2em}\downarrow$\hspace{.5em}$\Gsu$\\ 
&~\\
$\hat\FFF_{\SU'}$ &${\longrightarrow}$&$\hat\FFF_{\SU'}'$\\
~&$\Gso$&~
\end{tabular}
\end{align*}
\end{remark}
\begin{proof}
The $SO(1,3)$-transformation of a dual curvature is given as
\begin{align*}
\Gso\left(\hat\FFF^I_{\SU}\right)&=
\LLambda\hat\FFF^I_{\SU}\LLambda^{-1}=
\LLambda\bm\epsilon\bm\eta^{-1}\hspace{.1em}
\FFF^I_{\SU}\hspace{.1em}\bm\eta^{-1}\LLambda^{-1},\\
&=
\left(\LLambda\bm\epsilon\LLambda^{-1}\right)
\left(\LLambda\bm\eta^{-1}\LLambda^{-1}\right)
\left(\LLambda\FFF^I_{\SU}\LLambda^{-1}\right)
\left(\LLambda\bm\eta^{-1}\LLambda^{-1}\right),\\
&=
\bm\epsilon\bm\eta^{-1}
\left(\LLambda\FFF^I_{\SU}\LLambda^{-1}\right)
\bm\eta^{-1}.
\end{align*}
Therefore, it is maintained that 
\[
\Gso\circ\HD\left(\FFF^I_{\SU}\right)=\HD\circ\Gso\left(\FFF^I_{\SU}\right).
\] 
Similar calculations show that a dual curvature is transformed under the $SU(N)$ 
transformation as 
\[
\Gsu\hspace{-.2em}(\hat\FFF^I_{\SU})=\bm{g}^{-1}_\SU\hspace{.1em}\hat\FFF^I_{\SU}\hspace{.1em}\bm{g}_\SU^{~}+
i\hspace{.1em}c^{-1}_\SU\hspace{.1em}\gdg.
\]
Therefore, the remark is maintained.
\end{proof}
When the dual gauge-curvature is given, the dual gauge connection is provided as a solution of the structure equation.
Though the existence of the dual connection is not trivial, we assume it in this section.
\textbf{Appendix \ref{AppDC}} discusses a necessary condition to exist the dual connection. 

Dual gauge connection $\hat\AAA^I_{\SU}$ is defined as a solution of the structure equation:
\begin{align}
\hat\FFF^{~}_{\SU}&=
d_\www\hat\AAA^{~}_{\SU}
-i\hspace{.1em}{{c}^{~}_\SU}\hspace{.1em}\hat\AAA^{~}_{\SU}\wedge\hat\AAA^{~}_{\SU},\label{eq1}
\end{align}
which consists of $6N$ independent first-order differential equations.
We write the dual gauge connection by means of expression (\ref{HAAA}),
which is compatible with the definition of the dual connection as
\begin{align*}
\HD\left(\hat\FFF^I_{\SU}\right)&=\HD\left(d_\www\hat\AAA^I_{\SU}
+\frac{{c}^{~}_\SU}{2}f^I_{~JK}\hspace{.1em}
\hat\AAA^J_{\SU}\wedge\hat\AAA^K_{\SU}\right)=\FFF^I_{\SU}.
\end{align*}
A solution of equations (\ref{eq1}) for given $\hat\FFF^{~}_{\SU}$ is Lie algebra-valued one-form object such that $\hat\AAA^{~}_{\SU}=\hat\AAA^I_{\SU}\hspace{.2em}\tau^{~}_I=\hat\Aa^I_\bcdot \eee^\bcdot\hspace{.2em}\tau^{~}_I\in\Omega^1\otimes \sss\uuu(N)$.
A following remark ensures that $\hat\AAA^{~}_{\SU}$ is the $SU(N)$ connection in the inertial manifold $\M$. 
\begin{remark}\label{rem29}
One-form $\hat\AAA^{~}_{\SU}$ is transformed under the structure group as
\begin{align*}
\Gsu(\hat\AAA^I_{\SU})&=
\bm{g}^{~}_\SU\hspace{.2em}\hat\AAA^{~}_{\SU}\hspace{.2em}\bm{g}_\SU^{-1}+
i\hspace{.1em}{c}^{-1}_\SU\hspace{.1em}\gdg;
\end{align*} 
thus, it is a connection of the dual gauge bundle.
\end{remark}
\begin{proof}
Tensor coefficients of the dual curvature defined in (\ref{Fhat}) are transformed as 
\[
\Gsu(\hat\f^I_{ab})=\bm{g}^{~}_\SU\hspace{.1em}\hat\f^I_{ab}\hspace{.2em}\bm{g}_\SU^{-1}+i\hspace{.1em}{c}^{-1}_\SU\hspace{.1em}\gdg,
\] 
owing to its definition.
One-form object $\hat\AAA^I_{\SU}$ is defined as a solution of equation (\ref{eq1}).
Hence, $\hat\AAA^I_{\SU}$ is a connection due to Remark \ref{remA3}.
\end{proof}
\noindent
A dual map and its adjoint maps  on a $SU(N)$ connection are  introduced through structure equations and dual maps in the curvature as follows:
\begin{align*}
\begin{tabular}{cccr}
$\AAA^{~}_\SU$ &$\hspace{.5em}\left({\longleftrightarrow}^{\hspace{-1.1em}\HD}\hspace{.55em}\right)$&$\hat\AAA^{~}_{\SU}$&\\ 
&~\\
$\hspace{2em}\updownarrow$\hspace{.5em}(\ref{SUCRV})&~&$\hspace{2em}
\updownarrow$\hspace{.5em}(\ref{eq1})&\textrm{(structure~equations).}\\ 
&~\\
$\FFF^{~}_\SU$ &$\hspace{.1em}{\longleftrightarrow}^{\hspace{-1.1em}\HD}
\hspace{.55em}$&$\hat\FFF^{~}_\SU$
\end{tabular}
\end{align*}
We use short-hand  representations such that  $\HD(\AAA^{~}_\SU)=\hat\AAA^{~}_\SU$ and $\HD\bcdot\HD(\AAA^{~}_\SU)=\AAA^{~}_\SU$.
A covariant differential in the dual bundle is provided as
\begin{align*}
\hat{d}^{~}_\SU\aaa&:=
\bm{1}^{~}_\SU\hspace{.1em}d\aaa-i\hspace{.1em}{{{c}^{~}_\SU}}\hspace{.1em}
[\hat\AAA^{~}_{\SU}\aaa]_\wedge
\end{align*}
for $\aaa\in\Omega^p$, and Bianchi identities in the dual bundle are
\begin{align}
\left(\hat{d}^{~}_{\SU}\right)^2=
-i{{c}^{~}_\SU}\hspace{.1em}\hat\FFF^{~}_{\SU},
~~\textrm{and}~~
\hat{d}^{~}_{\SU}\hat\FFF^{~}_{\SU}=0.\label{BianchihSU}
\end{align}

%
%
\paragraph{\textbf{Dual curvature in spnior bundle:}}
We defined the curvature and the connection in the dual inertial bundle in section \ref{CPbundle}; the dual spinor-connection is obtained using the dual spin-connection and the $Spin(3,1)$ generator.
The dual spinor connection is defined as 
\begin{align*}
\hat\AAA_\Sp:=-\frac{i}{2}\hspace{.1em}\eta_\bcdots\eta_\stars
[\hat\www]^{\bcdot\star}\Ss^{\bcdot\star}.
\end{align*}
The dual spinor-curvature is provided for given $\hat\AAA_\Sp$ using the structure equation as
\begin{align*}
\hat\FFF_\Sp^{~}&:=d\hat\AAA_\Sp^{~}+
\hcG\hspace{.1em}\hat\AAA_\Sp^{~}\wedge\hat\AAA_\Sp^{~},
\end{align*}
where the dual curvature fulfils $\HD(\FFF_\Sp^{~})=\hat\FFF_\Sp^{~}$.

Spinor valued  section $\hat\psi\in\Gamma(\M,\S_\Sp\otimes{SU(N)})$ is introduced also in the dual space.
We define an action of the Hodge-dual operator on sections as 
\begin{align*}
\HD(\bm{\psi})={\bm{\hat\psi}}~~\textrm{and}~~\HD({\bm{\hat\psi}})={\bm{\psi}}.
\end{align*}
\textbf{Remark~\ref{lemDOpm}} is maintained also for ${\bm{\hat\psi}}$.
The Dirac conjugate of $\bm{\psi}$ and $\bm{\hat\psi}$ is denoted as $\bm{\bar{\hat{\psi}}}$ and $\bm{\bar{\hat{\psi}}}$, whose components are a Dirac conjugate of each element and is homomorphic with respect to dual and its adjoint operators as follows:
\begin{align*}
\HD(\bm{\bar{\hat{\psi}}})=\overline{\HD({\bm{\psi}})}=\bm{\bar{\hat{\psi}}},
~~\textrm{and}~~
\HD(\bm{\bar{\hat{\psi}}})=\overline{\HD(\bm{\hat{\psi}})}=\bm{\bar{\psi}}.
\end{align*}
Kramers pairs for the spinor sections are provided as
\begin{align*}
\KP\left(\psi\right):=
\left(\psi,\hat\psi\right)
~~\textrm{and}~~
\KP\left(\bar\psi\right):=
\left(\bar\psi,\bar{\hat\psi}\right).
\end{align*}

%
\paragraph{\textbf{Dual curvature in spnior-gauge bundle:}}
The spinor- and gauge connections and their Hodge-dual  belong to the same bundle as \textbf{remark \ref{2bundles}}.
They are provided as  
\begin{align*}
\hat\AAA_\SG:=\hat\AAA_\Sp\otimes\bm{1}_\SU+\bm{1}_\Sp\otimes\hat\AAA_{\SU},~~\textrm{and}~~
\hat\FFF_\SG:=\hat\FFF_\Sp\otimes\bm{1}_\SU+\bm{1}_\Sp\otimes\hat\FFF_{\SU},
\end{align*}
and construct Kramers pairs $\KP(\AAA_\SG^{~})$ and $\KP(\FFF_\SG^{~})$. 

The covariant differential and the dual Dirac operator are provided as
\begin{align*}
\dSg\aaa&:=
d\aaa\hspace{.2em}\bm{1}_\Sp\otimes\bm{1}_\SU
+\hcG\left[\hat\AAA_\Sp,\aaa\right]_\wedge\otimes\bm{1}_\SU+
\bm{1}_\Sp\otimes\left(-i\hspace{.1em}{{c}^{~}_\SU}\hspace{.1em}[\hat\AAA^{~}_{\SU},\aaa]_\wedge\right),\\
\hdSg{\bm{\hat\psi}}&:=
\iota_{{\gamma}_\Sp}
(d-i\hcG\hspace{.1em}\hat\www\cdot{\Ss}){\bm{\hat\psi}}\otimes\bm{1}_\SU
-{i}{{c}^{~}_\SU}\hspace{.1em}\iota^{~}_{\gamma_\Sp}\hat\AAA^{~}_{\SU}\otimes{\bm{\hat\psi}},
\end{align*}
where ${\bm{\hat\psi}}=(\hat\psi^1,\hat\psi^2,\cdots,\hat\psi^N)^T$, and $\aaa\in\Omega^p$ and ${\bm{\hat\psi}}\in\S_\Sp^{\otimes N}\otimes{SU(N)}$.
The dual- and  adjoint operators are homomorphic as follows:
\begin{equation*}
\begin{array}{cclccl}
\HD\left({\dSg}\bm{\psi}\right)&=\hdSg{\bm{\hat\psi}},&
\HD\left(\hdSg{\bm{\hat\psi}}\right)&=\dSg\bm{\psi}.
\end{array}
\end{equation*}
The existence of the dual spinor that provides the dual curvature as a solution of the equation of motion is not trivial and is discussed in \textbf{Appendix \ref{AppDC}}.

%
%
\subsection{Yang--Mills bundle}\label{cb-b}
The gauge-spinor bundle is referred to as the \emph{Yang--Mills bundle} when it is represented using the $(\TxT)$-matrix.
The connection and the curvature of the Yang--Mills bundle are defined in the secondary superspace as
\begin{align*}
\bm\AAA^{~}_\ym:=\bm\AAA^{~}_\Sp+\bm\AAA^{~}_\SU,~~\textrm{where}~~
\bm\AAA^{~}_\Sp:=
\left(
    \begin{array}{cc}
      \AAA^{~}_{\Sp} & {0}\\
            {0}    &\hat\AAA^{~}_{\Sp}
    \end{array}
  \right),~~
\bm\AAA^{~}_\SU:=
\left(
    \begin{array}{cc}
      \AAA^{~}_{\SU} & {0}\\
            {0}    &\hat\AAA^{~}_{\SU}
    \end{array}
  \right).
\end{align*}
The Yang--Mills curvature is provided using the structure equation such that:
\begin{align}
\bm\FFF^{~}_\ym&:=
d\AAA^{~}_\ym
+\cG\hspace{.1em}\bm\AAA^{~}_{\Sp}\wedge\bm\AAA^{~}_{\Sp}\otimes\bm{1}_{\SU}+
\bm{1}_{\Sp\hspace{-.1em}}\otimes(-ic^{~}_\SU\hspace{.1em}
\bm\AAA^{~}_{\SU}\wedge\bm\AAA^{~}_{\SU}),\nonumber\\&=
\left(
    \begin{array}{cc}
      \FFF^{~}_\SG &          0\\
         0       & \hat\FFF^{~}_\SG
    \end{array}
  \right).\label{FYM}
\end{align}
The dual operator in the Yang--Mills bundle is introduced as 
\begin{align*}
\HD^{~}_{\hspace{-.1em}\ym}=
{\bm{1}_\SU\otimes\bm{1}_\Sp}\hspace{.1em}\Gamma\gamma\HD=
{\bm{1}_\SU\otimes\bm{1}_\Sp}
\left(
    \begin{array}{cc}
      0&-\HD \\
      \HD & 0
    \end{array}
  \right)=-\HD^{-1}_{\hspace{-.1em}\ym},
\end{align*}
that acts on the curvature as
\begin{align*}
\HD^{~}_{\hspace{-.1em}\ym}\left(\FFF^{~}_\ym\right):=
\HD^{~}_{\hspace{-.1em}\ym}\hspace{.1em}\FFF^{~}_\ym\hspace{.1em}
\HD^{-1}_{\hspace{-.1em}\ym}
=
\left(
    \begin{array}{cc}
      \hat\FFF^{~}_{\SG} & 0\\
      0& \FFF^{~}_{\SG} \\
    \end{array}
  \right).
\end{align*}
Dual operator $\HD^{~}_{\hspace{-.1em}\ym}$ acts on the Yang--Mills connection formally as defined in section \ref{242}.
A structure equation are homomorphic such that:
\begin{align*}
\begin{tabular}{cccr}
$\AAA^{~}_\ym$ &$\hspace{.5em}\left(\overset{\HD^{~}_{\hspace{-.1em}\ym}}{\longleftrightarrow}\right)\hspace{.55em}$&$\HD^{~}_{\hspace{-.1em}\ym}\left(\AAA^{~}_\ym\right)$&\\ 
&~&\\
$\hspace{2em}\updownarrow$\hspace{.5em}(\ref{FYM})&~&$
\hspace{1em}
\updownarrow$\hspace{.5em}&\textrm{(structure~equations).}\\ 
&~\\
$\FFF^{~}_\ym$ &$\overset{\HD^{~}_{\hspace{-.1em}\ym}}{\longleftrightarrow}
\hspace{.55em}$&$\HD^{~}_{\hspace{-.1em}\ym}\FFF^{~}_\ym\HD^{-1}_{\hspace{-.1em}\ym}$
\end{tabular}
\end{align*}
that is apparent owing to relations 
\begin{align*}
\HD^{~}_{\hspace{-.1em}\ym}\hspace{.1em}\bm{\XXX}\hspace{.1em}\HD^{-1}_{\hspace{-.1em}\ym}=
\bm\XXX\bigl|_{\bullet\leftrightarrow\hspace{.1em}\hat\bullet},&~~\textrm{for}~~ 
\bm\XXX=\bm\AAA^{~}_\ym~\textrm{or}~~\bm\FFF^{~}_\ym.
\end{align*}
A covariant differential in the Yang--Mills bundle is
\begin{align*}
{\d}^{~}_\ym&:=
\left(
    \begin{array}{cc}
     \dSg & 0 \\
      0 & \hdSg\\
    \end{array}
\right).
\end{align*}
The covariant differential is compatible to the dual operator such as 
\[
\HD^{~}_{\hspace{-.1em}\ym}({\d}^{~}_\ym)=
{\HD^{~}_{\hspace{-.1em}\ym}}\hspace{.1em}{\d}^{~}_\ym\hspace{.1em}{\HD^{-1}_{\hspace{-.1em}\ym}}=
{\d}^{~}_\ym\bigl|_{d_\SG\leftrightarrow \hat{d}_{\SG}},
\]
and the dual operator preserves the Bianchi identities.

The Clifford algebra and Dirac operator are introduced to the Yang--Mills bundle as follows: 
One-dimensional Clifford algebra represented as 
\begin{align*}
\gamma^{~}_\ym:=\bm{1}_\ym\hspace{.2em}\gamma,~~&\textrm{where}~~
\bm{1}_\ym:=\bm{1}_{\Sp}\otimes\bm{1}_{\SU}. 
\end{align*}
and corresponding chiral and projection operators are, respectively, provided as
\begin{align}
\Gamma^{~}_\ym:=\bm{1}_\ym\hspace{.2em}\Gamma~~\textrm{and}~~
P^\pm_\ym:=\frac{\bm{1}^{~}_\ym\pm\gamma^{~}_\ym}{2}.\label{ClalSU}
\end{align}
They fulfil relations $\gamma^{~}_\ym\Gamma^{~}_\ym+\Gamma^{~}_\ym\gamma^{~}_\ym=0$ and $\Gamma^{2}_\ym=\gamma^{2}_\ym=\bm{1}_\ym$.
The Yang--Mills form is defined using the Clifford algebra as follows:
\begin{definition}[Yang--Mills form]\label{YMform}
\begin{align*}
\Fym&:=
\gamma^{~}_\ym\hspace{.2em}\FFF^{~}_\ym=\left(
    \begin{array}{cc}
      0&\hat\FFF_{\SG}^{~}\\
      \FFF_{\SG}^{~}& 0
    \end{array}
  \right)\otimes\bm{1}_\Sp.
\end{align*}
\end{definition}
The secondary superspace is induced on a vector space $\{\FFF^{~}_{\SU}\}/G_\SU\otimes\{\hat\FFF^{~}_{\SU}\}/G_\SU$ owing to chiral operator $\Gamma^{~}_\ym$.
Clifford algebra $\gamma^{~}_\ym$ indices  the Dirac operator such that: 
\begin{align*}
\D^{~}_\ym&:=\gamma^{~}_\ym{\d}^{~}_\ym=
\left(
    \begin{array}{cc}
      0& \dSg\\
      \hdSg& 0\\
    \end{array}
\right).
\end{align*}
 The Dirac spinor defined over the Yang--Mills bundle is defined as
\begin{align*}
\bm\Psi^{~}_\ym:=
\left(
    \begin{array}{c}
     \bm{\psi}  \\
     {\bm{\hat\psi}}     
    \end{array}
  \right)
~&\textrm{yielding}~~
\HD\left(\bm\Psi^{~}_\ym\right)=\left(
    \begin{array}{r}
    -{\bm{\hat\psi}}  \\
     \bm{\psi}
    \end{array}
  \right).
\end{align*}
The Dirac conjugate of $\bm\Psi^{~}_\ym$ and its dual in this representation are provided as
\begin{align*}
\overline{\bm\Psi}^{~}_\ym~:=~
\bm\Psi^{\dagger}_\ym\left(\gamma^0_\Sp\gamma^{~}_\ym\right)
&=\left(\bm{\psi}^\dagger,{\bm{\hat\psi}}^\dagger\right)
\left(
    \begin{array}{cc}
      \bm{0} & \gamma^0_\Sp\otimes\bm{1}_{\SU}\\
      \gamma^0_\Sp\otimes\bm{1}_{\SU}& \bm{0}  \\
    \end{array}
  \right)
=\left(\bar{{{\bm{\hat\psi}}}},\bar{{\bm{\psi}}}\right),
\end{align*}
and 
\begin{align*}
\HD\left(\overline{\bm\Psi}^{~}_\ym\right)=
\left(\bar{{{\bm{\hat\psi}}}},-\bar{{\bm{\psi}}}\right).
\end{align*}

\section{Lagrangian and equation of motion}
In classical mechanics, equations of motion are extracted from the action integral using the variational method with appropriate boundary conditions.
The Lagrangian is a four-form object consisting of connections, curvatures, sections, and invariant differentials in principal bundles.
This section introduces three fundamental Lagrangian, i.e., Einstein--Hilbert Lagrangian for gravity, matter-field and gauge-filed Lagrangians for the Yang--Mills theory with the Lorentzian metric for simplicity.
Then,we extract equations of motion from the Lagrangian.
Extension to the $\theta$-metric is realized by replacing $\bm\eta\mapsto\bm\eta_\theta$,  $\bm\epsilon\mapsto\bm\epsilon_\theta$, $\sigmaeta\mapsto\sigmath$, and $\vvv\mapsto\vvv_\theta$. 
%
%
\subsection{Einstein--Hilbert action}
The gravitational Lagrangian four-form consists of curvature $\RRR$ and section $\eee$ in the co-Poincar\'{e} bundle; the definition is given as follows\cite{doi:10.1063/1.4990708,Kurihara_2020}:
\begin{definition}[Gravitational Lagrngian]\label{EHLdef}
The gravitational Lagrangian and the corresponding action integral are defined as
\begin{align}
\LLL_{Gr}&:=\frac{1}{\hbar\kappa}\left(
\frac{1}{2}
Str\left[\bm\FFF_{Gr}\wedge\bm\FFF_{Gr}\right]-
\Lambda_\mathrm{c}\vvv_\theta\right),\label{EHLformula}
\end{align}
where $\Lambda_\mathrm{c}$ is the cosmological constant.
Integration of the Lagrangian \[\I_{Gr}:=\int_{\Sigma_4}\LLL_{Gr}\] is referred to as the gravitational action, where $\Sigma_4\subseteq\M$ is appropriate closed and oriented subset of the inertial manifold.
\end{definition}
\noindent
The cosmological constant has a physical dimension of $[\Lambda_\mathrm{c}]=L^{-2}$ in our units.
This study sets the Lagrangian form and action integral to null physical dimension $[\LLL]=[\I]=1$.
Here, fundamental constants $\hbar\kappa$ appear associated with the gravitational Lagrangian to keep null physical dimension. 
Although the Planck constant appears in the Lagrangian,  the theory is still classical.
\begin{remark}
The gravitational Lagrangian defined by \textbf{Definition \ref{EHLdef}} is equivalent to the standard Einstein--Hilbert Lagrangian
\begin{align*}
\LLL_\textrm{EH}:=\frac{1}{\hbar\kappa}\left(\frac{1}{2}\RRR^\bcdots\wedge\SSS_\bcdots-\Lambda_\mathrm{c}\hspace{.1em}\vvv
\right).
\end{align*}
\end{remark}
\begin{proof}
The supertrace of the gravitational form squared is provided as
\begin{align*}
Str[\FFF^{~}_{Gr}\wedge\FFF^{~}_{Gr}]&=
\Tr_{\hspace{.1em}\TM}\left[\hat\FFF^{~}_{\cP}\wedge\FFF^{~}_{\cP}\right]-
\Tr_{\hspace{.1em}\TM}\left[\FFF^{~}_{\cP}\wedge\hat\FFF^{~}_{\cP}\right]=
-2\hspace{.1em}\Tr_{\hspace{.1em}\TM}\left[\FFF^{~}_{\cP}\wedge\hat\FFF^{~}_{\cP}\right].
\end{align*}
\textbf{Theorem 4.2} in Ref.\cite{doi:10.1063/1.4990708} and \textbf{Remark 2.1} in Ref.\cite{Kurihara_2020} show that
\begin{align*}
\Tr_{\hspace{.1em}\TM}\left[\FFF^{~}_{\cP}\wedge\hat\FFF^{~}_{\cP}\right]=
\frac{\sqrt{\sigmaeta}}{2}\RRR^{\bcdot\star}\wedge\SSS_{\star\bcdot}=
-\frac{\sqrt{\sigmaeta}}{2}\RRR^{\bcdot\star}\wedge\SSS_{\bcdot\star},
\end{align*}
Therefore, we obtain
\begin{align*}
\frac{1}{2}
Str\left[\FFF_{Gr}\wedge\FFF_{Gr}\right]=
\frac{\sqrt{\sigmaeta}}{2}\RRR^\bcdots\wedge\SSS_\bcdots;
\end{align*}
thus, $\LLL_{Gr}=\sqrt{\sigmaeta}\hspace{.1em}\LLL_\textrm{EH}$ is maintained.
\end{proof}
\noindent
A space of the two-form objects is split into two subspaces owing to the Hodge-dual operator as mentioned in \textbf{section \ref{242}}; thus, four-form object $\Tr_\TM[\RRR\wedge\SSS]$ is split into two parts such as 
\begin{align*}
\Tr_\TM[\RRR\wedge\SSS]=\Tr[\RRR^+\wedge\SSS^+]+
\Tr_\TM[\RRR^-\wedge\SSS^-].
\end{align*}
More precisely, it is provided that 
\begin{align*}
\Tr_\TM[\RRR^\pm\wedge\SSS^\pm]&=
\frac{1}{2}\left(R\pm 
i \epsilon^{\hspace{.7em}\stars}_{\bcdots}R^{\bcdots}_{\hspace{.7em}\stars}
\right)\vvv,\\
\Tr_\TM[\RRR\wedge\SSS]&=R\hspace{.2em}\vvv,
\end{align*}
where $R$ is scalar curvature (\ref{RicciR}). 
Coordinate expressions using the trivial basis are
\begin{align}
\left[\RRR\wedge\SSS\right]^{ab}&=\frac{1}{4}
\eta_\stars\hspace{.1em}
R^{a\star}_{~~\bcdots}\hspace{.1em}
S^{\star b}_{~~\bcdots}\hspace{.2em}
\eee^\bcdot\wedge\eee^\bcdot\wedge\eee^\bcdot\wedge\eee^\bcdot
,\nonumber\\&=
\frac{1}{4}\epsilon^\bcdott
\eta^{~}_{\stars}\hspace{.2em}
R^{a\star}_{~~\bcdots}\hspace{.1em}
S^{\star b}_{~~\bcdots}\hspace{.2em}\vvv.\label{RSv}
\end{align}

The torsion-less Riemann-curvature tensor has additional symmetry such that:
\begin{align*}
\eta_{a\bcdot}\eta_{b\bcdot}R^{\bcdots}_{~~cd}&=
\eta_{c\bcdot}\eta_{d\bcdot}R^{\bcdots}_{~~ab},&\textrm{when}~~\TTT^a=0,
\intertext{
and thus, one of $\RRR^\pm$ yields the scalar curvature like
}
\eta_\bcdots[\RRR^+\wedge\SSS]^\bcdots&=
\eta_\bcdots[\RRR^-\wedge\SSS]^\bcdots=\frac{1}{2}R\hspace{.1em}\vvv,
&\textrm{when}~~\TTT^a=0.
\end{align*}
The Euler--Lagrange equation of motion requires a torsion-less condition for the Einstein--Hilbert Lagrangian; thus, the Einstein--Hilbert Lagrangian can be written using only one of the subspaces $V^\pm(\Omega^2)$.
When we choose SD curvature $\RRR^+$ (accordingly with $\www^+$) to construct the Lagrangian, the gravitational Lagrangian is represented as
\begin{align*}
\LLL_{Gr}&=\frac{\sqrt{\sigmaeta}}{\hbar\kappa}\left(\RRR^+\wedge\SSS^+-\Lambda_\mathrm{c}\vvv
\right),&\textrm{when}~~\TTT^a=0.
\end{align*}
We note that the volume form can be written using one of $\SSS^\pm$ such that: 
\begin{align*}
\frac{1}{3!}\vvv&=-\eta_\bcdots
\left[\SSS^+\wedge\SSS^+\right]^\bcdots=\eta_\bcdots
\left[\SSS^-\wedge\SSS^-\right]^\bcdots.
\end{align*}
We note that $\Omega^4\ni\aaa\propto\bbb^\pm\wedge\bbb^\pm$ for $\bbb\in\Omega^2$ for any four-form objects due to $\bbb^\pm\wedge\bbb^\mp=0$.

The gravitational Lagrangian is invariant under the general coordinate transformation and the co-Poincar\'{e} transformation\cite{doi:10.1063/1.4990708}.
In the inertial manifold, the number of independent components of the Lagrangian form is 10. Six components are from the curvature two-form, and four are from the vierbein form.
These degrees of freedom correspond to the total degree of freedom for the Poincar\'{e} group.
In reality, the Einstein--Hilbert Lagrangian is given using only one of the SD- or ASD curvature; thus, only three components of the Lorentz connection are utilised in the Lagrangian.
The Lagrangian form is now treated as a function with two independent forms $(\vomega^{ab}, \eee^c)$ and is applied to a variational operator independently.
This method is commonly known as the Palatini method\cite{Palatini2008,ferreris}.

Equations of motion in a vacuum can be obtained by requiring a stationary condition of the action for the connection and vierbein forms, separately.
From the variation concerning connection form $\www$, an equation of motion is provided as
\begin{align}
\TTT^a\wedge\eee^b&=\left(d\eee^a+
\cG\hspace{.1em}\www^{a}_{~\bcdot}\wedge\eee^\bcdot
\right)\wedge\eee^b=0,\label{torsion}
\end{align}
which is referred to as the torsion-less equation.
The torsion-less property is obtained from the solution of the equation of motion rather than being an independent constraint.
This equation of motion includes six independent equations, which is the same as the number of the independent components of $\www$; thus, the connection form is uniquely determined from (\ref{torsion}) when the vierbein form is given.

Next, taking the variation with respect to the vierbein form, one can obtain an equation of motion as
\begin{align}
\frac{1}{2}\epsilon_{a\bcdots\bcdot}
\RRR^\bcdots\wedge\eee^\bcdot
-\Lambda_{\mathrm c}\VVV_a&=0,\label{EoM2}
\end{align}
where $\VVV_a:=\epsilon_{a\bcdot\bcdots}\eee^\bcdot\wedge\eee^\bcdot\wedge\eee^\bcdot/3!$ is a three-dimensional volume form.
Equation (\ref{EoM2}) is referred to as the Einstein equation in a vacuum (those with the Yang--Mills fields are provided in section \ref{3.3}).
The curvature- and vierbein-forms can be uniquely determined up to $GL(4,\R)$ symmetry by solving equations (\ref{torsion}) and (\ref{EoM2}) simultaneously.

We note that component representations of  (\ref{RSv}) and (\ref{EoM2}) are provided in the $\theta$-metric space as  
\begin{align*}
(\ref{RSv})&\rightarrow
\frac{\sqrt{\sigmath}}{4}\left[\bm\epsilon_\theta\right]^\bcdott
\eta_{\theta\hspace{.1em}\stars}\hspace{.2em}
R^{a\star}_{~~\bcdots}\hspace{.1em}
S^{\star b}_{~~\bcdots}\hspace{.2em}\vvv,\\
(\ref{EoM2})&\rightarrow\left(
R_{ab}-\frac{1}{2}R\hspace{.2em}\eta_{\theta\hspace{.1em}ab}+
\Lambda_{\mathrm c}\hspace{.2em}\eta_{\theta\hspace{.1em}ab}
\right)\vvv.
\end{align*}

%
%
\subsection{Yang--Mills action}
The Yang--Mills Lagrangian consists of spinor field $\bm{\psi}$, gauge connection $\AAA^{~}_\SU$, and dual curvature $\hat\FFF^{~}_{\SU}$.
The spinor field represents a charge distribution of fermionic matter fields in classical mechanics. 
The gauge field represents a potential function as a source of force fields such as electromagnetic, weak, or strong forces.
Their equations of motion govern the dynamics of fields.
Gauge and matter fields contribute to the structure of space-time through their stress-energy tensor.

%
%
\subsubsection{Matter field}
The matter-field Lagrangian-form is defined in the spinor-gauge bundle introduced in section \ref{cb-b} using the Dirac operator given in (\ref{DSG}).
A mass term is introduced in the secondary superpsace of the Yang--Mills bundle as follows:
\begin{align*}
\bm{\mu}^{~}_{\ym}&:=
\gamma\left(
    \begin{array}{cc}
    \mu  & 0\\
    0 &   \hat\mu
    \end{array}
  \right)\bm{1}_\Sp\otimes\bm{1}_\SU.
\end{align*}
Real constants $0\leq \mu,\hat\mu\in\R$ are particle masses in physics, and they have a mass dimension $[\mu]=[\hat\mu]=M$.
The matter-field Lagrangian density is defined as
\begin{align}
\LL_M&:=
\overline{\bm\Psi}_{\ym}\left(i\hspace{.1em}\D^{~}_\ym-
\frac{1}{\hbar}\bm{\mu}^{~}_{\ym}
\right)\bm\Psi_{\ym}.\label{LagrangianM}
\end{align}
Physical constant $\hbar$ is inserted to keep a physical dimension of Lagrangian densities to $[\LL]=L^{-4}$.  
\begin{definition}[Matter-field Lagrangian and action]\label{mfl}
The Lagrangian and action integral for the matter field are defined as
\begin{align*}
\LLL_M:=\LL_M
\hspace{.1em}\vvv~~\mathrm{and}~~
\I_M:=\int_{\Sigma_4}\LLL_M.
\end{align*}
\end{definition}
\noindent
The matter-field Lagrangian is Hermitian self-adjoint.

%
%
\subsubsection{Gauge field}\label{Gaugefield}
The gauge-field Lagrangian is defined by means of a super trace as follows:
\begin{definition}[Gauge-field Lagrangian and action]
The gauge-field Lagrangian and corresponding gauge action are defined  as follows:
\begin{align}
\LLL_{\ym}:=-\frac{1}{2}
Str[\Fym\wedge\Fym],&~~~
\I_\ym:=\int_{\Sigma_4}\LLL_{\YM},\label{LagrangianSU}
\end{align}
which is equivalent to the standard definition of the Yang-Mills Lagrangian.
\end{definition}
\noindent
Supertrace of the Yang--mills form squared is provided as
\begin{align*}
Str[\Fym\wedge\Fym]&=
\Tr_{\hspace{.1em}\SU}\left[\hat\FFF^{~}_\SG\wedge\FFF^{~}_\SG\right]-
\Tr_{\hspace{.1em}\SU}\left[\FFF^{~}_\SG\wedge\hat\FFF^{~}_\SG\right]=
-2\hspace{.1em}\Tr_{\hspace{.1em}\SU}\left[\FFF^{~}_\SG\wedge\hat\FFF^{~}_\SG\right],
\end{align*}
thus, the gauge-field action is represented using the trivial basis as
\begin{align*}
\I_{\ym}=
\int_{\Sigma_4}\|\FFF^{~}_\SG\|^2=\frac{1}{4}\hspace{.1em}\int_{\Sigma_4}
\left(\sum_{I=1}^N\eta^{\bcdot\star}\eta^{\bcdot\star}
\f^I_\bcdots\f^I_\stars
\hspace{.2em}\Tr\left[\tau^{I}_\SU\cdot\tau^{I}_\SU\right]
\right)\vvv,
\end{align*}
which can be verified by direct calculations from the definition (\ref{Fhat}).
This Lagrangian is the same as the standard definition of the Yang--Mills Lagrangian\cite{kobayashi1996foundations, donaldson1990geometry}.

\begin{remark}\label{SDorASD}
The Yang--Mills action with the ${SU(N)}$ principal group in the Eudlidean metric space has extremal at the SD- or ASD curvature.
\end{remark}
\begin{proof}
The Yang--Mills action has the representation owing to SD and ASD curvatures as
\begin{align*}
\I_\ym&=\int_{\Sigma_4}\|\FFF^{~}_\SG\|^2=
\frac{1}{2}\int_{\Sigma_4}\left(
\|\FFF^{+}_\SG\|^2+\|\FFF^{-}_\SG\|^2\right).
\end{align*}
Here, the trace concerning $SU(N)$ indices is not explicitly written for simplicity.
On the other hand, the second Chern class is provided as
\begin{align*}
\sqrt{\sigmaeta}\hspace{.1em}c^{~}_2\left(\FFF^{~}_\SG\right)&=\frac{1}{8\pi^2}
\int_{\Sigma_4}\left(\Tr_{\hspace{.1em}\SU}\left[
\FFF^{~}_\SG\wedge\FFF^{~}_\SG\right]
-\Tr_{\hspace{.1em}\SU}\left[\FFF^{~}_\SG\right]^2
\right),\\&=
\frac{1}{8\pi^2}
\int_{\Sigma_4}\left(
\|\FFF^{+}_\SG\|^2-\|\FFF^{-}_\SG\|^2\right).
\end{align*}
From the first line to the second line, we use $\Tr_{\hspace{.1em}\SU}\left[\FFF^{~}_\SG\right]=0$ due to $\Tr\left[\tau_\SU^I\right]=0$ for a $SU(N)$ gauge group.
When $\|\FFF\|^2$ is positive definite, equivalently, when the space-time manifold has the \emph{Euclidean} metric, they yield $\I_\ym\geq4\pi^2\left|{c}_2\left(\FFF^{~}_\SG\right)\right|$; thus, the remark is maintained. 
\end{proof}
\noindent
In the $\theta$-metric space, it is true only when $\theta=1$.
We note that there are solutions to the Yang--Mills equation other than the SD- nor ASD connection.
Sibner, Sibner and Uhlenbeck\cite{doi:10.1073/pnas.86.22.8610} reported the $SU(2)$ Yang--Mills connections which are not the SD- nor ASD connection in $S^4$.

%
%
\subsection{Full Lagrangian and equation of motion in Yang--Mills theory}\label{s323}
In summary, the Yang--Mills Lagrangian and action integral is obtained as
\begin{align*}
\LLL_\MYM:=\LLL_M+\LLL_\ym~~\textrm{and}~~
\I_\MYM:=\int\LLL_\MYM,
\end{align*}
and it is expressed using the trivial basis as
\begin{align}
\LLL_\MYM
&=\left\{
\bar\psi\left(i\hspace{.1em}\gamma_\Sp^\bcdot\partial_\bcdot-\frac{i}{2}
\cG\hspace{.1em}\gamma_\Sp^\bcdot
\Varepsilon_\bcdot^\mu\omega_\mu^{~\stars}\Ss_\stars+
{c^{~}_\SU}\hspace{.1em}\gamma_\Sp^\bcdot
\Aa^I_\bcdot
 \tau^{~}_I\right)\psi-\frac{1}{\hbar}{\mu}\bar\psi\psi\right\}\vvv \nonumber\\
&~+\left\{(\psi,{c^{~}_\SU},\mu)\rightarrow(\hat\psi,{{c}^{~}_\SU},\hat\mu)\right\}\vvv
+\frac{1}{4}\f^I_\bcdots\f_I^\bcdots\hspace{.1em}\vvv.
\label{LYM2}
\end{align}
Terms with dual field $\hat\psi$ in formula (\ref{LYM2}) represents the magnetic monopole in physics, which is not confirmed experimentally to date.
This section treats only principal filed $\psi$.
Section \ref{EMsYM} discusses a magnetic monopole in detail.

\begin{remark}[Yang--Mills equations of motion]\label{DiracEq}
Two equations of motion is obtained from the Yang--Mills action such that
\begin{subequations}
\begin{align}
\left(i\ds_{\SG}-\frac{\mu}{\hbar}\right)\psi&=0,\label{DiracEq1}\\
d_\SG\hat\FFF^{~}_{\SU}
-{c^{~}_\SU}\hspace{.1em}\bar\psi\gamma_\Sp^\bcdot\psi\hspace{.2em}\VVV_\bcdot&=0.\label{DiracEq2}
\end{align}
\end{subequations}
The first equation is known as the Dirac equation, and the second equation is the Yang--Mills equation.
\end{remark}
\begin{proof}
The Dirac equation can be easily obtained by taking a variational operation on the Yang--Mills Lagrangian concerning  conjugate field $\bar\psi$ as 
\begin{align*}
\hbar\hspace{.1em}\delta_{\bar\psi}\I_\YM=\int_{\Sigma_4}\delta\bar\psi
\left(i\hbar\hspace{.1em}\dSg\psi-{\mu}\psi\right)
\hspace{.1em}\HD(1)=0&\Rightarrow i\hbar\hspace{.1em}\dSg\psi-{\mu}\psi
=0.
\end{align*}
For gauge-field $\AAA^{~}_\SU$, an equation of motion is provided owing to a variational operation concerning connection $\AAA^{~}_\SU$ such that:
\begin{align*}
\delta_{\AAA^{~}_\SU}\I_\YM&=-\frac{1}{2}\int_{\Sigma_4}
\sum_I\left(
(d\hat\FFF^I_{\SU}
-{c^{~}_\SU}\hspace{.1em}\bar\psi\hspace{.1em}\iota_{{\gamma}_\Sp}\vvv\psi)\delta_K^I
+{c^{~}_\SU}f^I_{~JK}\hat\FFF^I_{\SU}\wedge\AAA_\SU^J
\right)\wedge\delta\AAA_\SU^K\\&~
-\frac{1}{2}\int_{\partial\Sigma_4}
\Tr\left[\hat\FFF^{~}_{\SU}\wedge\delta\AAA^{~}_\SU\right]=0,\\
&\Rightarrow
d_\SG\hat\FFF^{~}_{\SU}
={c^{~}_\SU}\hspace{.1em}\bar\psi\gamma_\Sp^\bcdot\psi\hspace{.2em}\VVV_\bcdot,
\end{align*}
where $\partial\Sigma_4$ is the integration boundary in which variation $\delta\AAA^{~}_\SU$ vanishes.
In above calculations, a relation
\begin{align*}
\delta_{\AAA^{~}_\SU}
\left(\bar\psi\hspace{.1em}\iota_{{\gamma}_\Sp}
\AAA_\SU^I\hspace{.1em}\psi\hspace{.1em}\vvv\right)&=
\frac{\delta}{\delta\AAA_\SU^K}
\left(\bar\psi\hspace{.1em}\iota_{{\gamma}_\Sp}\AAA_\SU^I\psi\hspace{.1em}\vvv\right)
\delta\AAA_\SU^K,\\&
=\bar\psi\left(\iota_{{\gamma}_\Sp}\vvv\right)
\psi\hspace{.1em}\delta\AAA_\SU^I
\end{align*}
with 
\[
0=\iota_{{\gamma}_\Sp}(\AAA_\SU^I\wedge\vvv)=
(\iota_{{\gamma}_\Sp}\AAA_\SU^I)\hspace{.2em}\vvv-
\AAA_\SU^I\wedge\left(\iota_{{\gamma}_\Sp}\vvv\right)
\] 
is used.
\end{proof}
\noindent

An expression of equations (\ref{DiracEq2}) using the trivial basis in $\TM$ is provided as
\begin{subequations}
\begin{align}
&~\hbar\left(i\gamma_\Sp^\bcdot\partial_\bcdot
-\frac{i}{2}\cG\hspace{.1em}\gamma_\Sp^\bcdot\Varepsilon_\bcdot^\mu\omega_\mu^{~\stars}{\Ss}_\stars
+{c^{~}_\SU}\hspace{.1em}\gamma_\Sp^\bcdot
\Aa^I_\bcdot
\tau_I\right)\psi={\mu}\hspace{.1em}\psi,\label{YMeqcom1}\\
&~\left(\eta^{\bcdots}\partial_\bcdot\f^I_{~\bcdot a}
+{c^{~}_\SU}\hspace{.1em}\eta^{\bcdots}f^I_{~JK}\Aa^J_{~\bcdot}\f^K_{~\bcdot a}
\right)\tau^{~}_I=
-{c^{~}_\SU}\eta_{a\bcdot}\hspace{.1em}\bar\psi\gamma_\Sp^\bcdot\psi.\label{YMeqcom2}
\end{align}
\end{subequations}
In equation (\ref{YMeqcom2}), information of the curved space-time is included in the component representation of $\f^I$ provided in (\ref{FabISU}) through a torsion of the space-time manifold.  

The Noether's theorem ensures vanishing divergence such that:
\begin{align*}
d\left(\bar\psi\hspace{.1em}\iota_{{\gamma}_\Sp}\psi\right)=
\partial_\bcdot\left(\bar\psi\gamma^\bcdot\psi\right)=0,
\end{align*}
which is referred to as a current conservation in physics.

\subsection{Scalar action}\label{scalarfield}
This section provides a geometrical construction of the Higgs field.
A scalar field belonging to the fundamental representation of  the gauge field is referred to as the Higgs filed in physics.
Higgs field $\bm\phi\in C^\infty(\TM)$ is introduced as a complex-valued section in the gauge bundle.
The equation of motion for the Higgs field is known as the Klein--Gordon equation and is provided from an action integral using the trivial basis in $\M$ such as
\begin{align*}
\tilde\I_{\textrm{Higgs}}:=-\frac{1}{4}\int_{\Sigma_4}
\left[\eta^{\bcdots}\nabla_\bcdot\bm\phi^\dagger\nabla_\bcdot\bm\phi
-V\left(\bigl|\bm\phi\bigr|\right)
\right]\vvv,
\end{align*}
where $\bigl|\bm\phi\bigr|^2=\bm\phi^\dagger\bm\phi$, $\nabla_\mu=\partial_\mu-i\hspace{.1em}c^{~}_\SU\Aa^I_\mu t_I$ is a $SU(N)$-covariant differential in $\TMM$, and $V\left(\bigl|\bm\phi\bigr|\right)$ is a $SU(N)$-invariant potential term.

The Dirac equation was originated from a genius idea by Dirac to realize a square-root of Laplace operator $\Box:=\eta^{\bcdots}\partial_\bcdot\partial_\bcdot$.
In this study, the Laplacian is defined as a square of the Dirac operator; thus, an action integral of the Higgs field is defined as
\begin{align*}
\I_{\textrm{Higgs}}&:=-\frac{1}{4}\int_{\Sigma_4}
\Tr\left[(\dSg\bm\phi^\dagger)\dSg\bm\phi
-{\bm{1}}_\Sp V\left(\bigl|\bm\phi\bigr|\right)\right]\vvv.
\end{align*}
\begin{remark}\label{remB1}
An action integral of the Higgs field is expressed using the trivial basis as
\begin{align*}
\I_{\textrm{Higgs}}&=\int_{\Sigma_4}\hspace{-.1em}\vvv\hspace{.1em}\bm\phi^\dagger
\left\{\Box
+\frac{R}{4}
+c_s\partial^\bcdot\Aa_\bcdot+c_s^2\Aa^\bcdot\Aa_\bcdot
+\frac{1}{2}\left(\partial_\bcdot\Varepsilon^\mu_\bcdot+
c_sA_\bcdot\Varepsilon^\mu_\bcdot\right)\omega_\mu^{~\bcdots}
\right\}\bm\phi\\&\hspace{.5em}
+\int_{\Sigma_4}\hspace{-.1em}\vvv\hspace{.1em}V\left(\bigl|\bm\phi\bigr|\right)
-\frac{1}{4}\int_{\Sigma_4}\hspace{-.1em}\vvv\hspace{.1em}\partial_\bcdot
\left(\Tr\left[
\gamma_\Sp^\bcdot\dSg\right]\bigl|\bm\phi\bigr|^2\right).
\end{align*}
An Euler--Lagrange equation of motion is provided by requiring a stationary condition on a variational operation concerning the field $\bm\phi^\dagger$.
The last term does not contribute to an equation of motion owing to the boundary condition.
\end{remark}
\begin{proof}
Direct calculations provide the result similar with the Lichnerowicz formula\footnote{See, e.g., chapter 1.3 of Ref.\cite{markvorsen2003global}.} such that:
\begin{align*}
\Tr\left[(\dSg\bm\phi^\dagger)(\dSg\bm\phi)\right]
&=
\Tr\left[
\left(\gamma_\Sp^a\partial_a\bm\phi^\dagger
+i\hspace{.1em}\iota_{{\gamma}_\Sp}\OOO\bm\phi^\dagger\right)
\left(\gamma_\Sp^b\partial_b\bm\phi
+i\hspace{.1em}\iota_{{\gamma}_\Sp}\OOO\bm\phi\right)
\right],\\
&=-\Tr\left[
\bm\phi^\dagger\gamma_\Sp^a\left(\partial_a\bm\phi
\gamma_\Sp^b\partial_b\bm\phi\right)
+i\hspace{.1em}\bm\phi^\dagger\gamma_\Sp^a
\left(\partial_a\iota_{{\gamma}_\Sp}\OOO\bm\phi\right)
\right]\\&~
+\Tr\left[
\gamma_\Sp^a\partial_a\bm\phi^\dagger\left(
\gamma_\Sp^b\partial_b\bm\phi
+i\hspace{.1em}\iota_{{\gamma}_\Sp}\OOO\bm\phi
\right)\right]\\&~
+\Tr\left[
i\hspace{.1em}\iota_{{\gamma}_\Sp}\OOO\bm\phi^\dagger
\gamma_\Sp^a\partial_a\bm\phi
\right]
-\Tr\left[\left(\iota_{{\gamma}_\Sp}\OOO\right)^2
\right],\\
&=-4\bm\phi^\dagger\partial^a\partial_a\bm\phi-i\bm\phi^\dagger
\Tr\left[\gamma_\Sp^a\left(\partial_a\iota_{{\gamma}_\Sp}\OOO\right)\bm\phi\right]
\\&~
-\Tr\left[\left(\iota_{{\gamma}_\Sp}\OOO\right)^2\right]\bigl|\bm\phi\bigr|^2
+\partial_\bcdot
\left(\Tr\left[
\gamma_\Sp^\bcdot\dSg\right]\bigl|\bm\phi\bigr|^2\right).
\end{align*}
The second and third terms, respectively, give formulae such that:
\begin{align*}
&\bm\phi^\dagger
\Tr\left[\gamma_\Sp^\bcdot\left(\partial^{~}_\bcdot\iota_{{\gamma}_\Sp}\OOO\right)\bm\phi\right]\\
&=\frac{1}{2} \Tr\left[\gamma_\Sp^\diamond\gamma_\Sp^\diamond\Ss^\bcdots\right]
\eta_{\bcdot\star}\eta_{\bcdot\star}\left(\partial^{~}_\diamond\Varepsilon^\mu_\diamond\omega_\mu^{~\stars}
\right)\bigl|\bm\phi\bigr|^2
+4i\left(c_s\partial^{~}_\bcdot\Aa^\bcdot\right)\bigl|\bm\phi\bigr|^2,\\
&=-2i\left(\partial^{~}_\bcdot\Varepsilon^\mu_\bcdot\omega_\mu^{~\bcdots}\right)
\bigl|\bm\phi\bigr|^2
+4ic_s\left(\partial^{~}_\bcdot\Aa^\bcdot\right)\bigl|\bm\phi\bigr|^2,
\end{align*}
and
\begin{align*}
-\Tr\left[\left(\iota_{{\gamma}_\Sp}\OOO\right)^2\right]\bigl|\bm\phi\bigr|^2=
-2\left(\Varepsilon^\mu_\bcdot\omega_{\mu~\star}^{~\bcdot}\hspace{.2em}
\Varepsilon^\nu_\diamond\omega_{\nu}^{~\star\diamond}
+c^{~}_s\Aa^{~}_\bcdot\Varepsilon^\mu_\bcdot\omega_\mu^{~\bcdots}+2c_s^2\Aa^{~}_\bcdot\Aa^\bcdot\right)
\bigl|\bm\phi\bigr|^2,
\end{align*}
where a $SU(N)$ index of $\Aa$ is omitted for simplicity.
On the other hand, from the definition of the Lorentz curvature,
\begin{align*}
\RRR^{ab}&=d\www^{ab}+\cG\hspace{.1em}\www^a_\bcdot\wedge\www^{\bcdot b},\\
&=\left(\partial_\mu\omega_\nu^{~ab}
+\cG\hspace{.1em}\omega_{\mu~\bcdot}^{~a}\hspace{.2em}\omega_{\nu}^{~\bcdot b}
\right)dx^\mu\wedge dx^\nu,\\
&=\left\{
\left(\partial_\bcdot\Varepsilon^\mu_\bcdot\omega_\mu^{~ab}\right)
-\left(\partial_\bcdot\Varepsilon^\mu_\bcdot\right)\omega_\mu^{~ab}
+\cG\hspace{.1em}\Varepsilon^\mu_\bcdot\omega_{\mu~\star}^{~a}\hspace{.2em}
\Varepsilon^\nu_\bcdot\omega_{\nu}^{~\star b}
\right\}\eee^\bcdot\wedge\eee^\bcdot,\\
&=:\frac{1}{2}R^{ab}_{\hspace{.6em}\bcdots}\hspace{.2em}\eee^\bcdot\wedge\eee^\bcdot,\\
&\Longrightarrow R:=R^{\bcdot\star}_{\hspace{.6em}\bcdot\star}=2\left(
\partial^{~}_\bcdot\omega_\bcdot^{~\bcdots}
+\cG\hspace{.1em}\omega_{\star}^{~\star\bcdot}\omega_{\diamond}^{~\bcdot \diamond}\eta^{~}_{\bcdots}
-\left(\partial^{~}_\bcdot\Varepsilon^\mu_\bcdot\right)\omega_\mu^{~\bcdots}
\right),
\end{align*}
where $\omega_{a}^{~bc}:=\Varepsilon^\mu_a\omega_\mu^{~bc}$.
Finally, by gathering up above results into $\I_{\textrm{Higgs}}$, the remark is maintained.
\end{proof}
\noindent
Consequently, the equation of motion for the Higgs field is provided as the Euler--Lagrange equation such that:
\begin{multline*}
\left(\Box
+\frac{R}{4}
+\frac{\partial V\left(\bigl|\bm\phi\bigr|\right)}
{\partial\bm\phi}
\right)\bm\phi
+\eta^\bcdots_{~}\sum_I\left(c^{~}_s\partial^{~}_\bcdot\Aa^I_\bcdot
+c_s^2f^I_{~JK}\Aa^J_\bcdot\Aa^K_\bcdot\right)\bm\phi\\
\hspace{.5em}+\frac{\cG}{2}\left(\partial^{~}_\bcdot\Varepsilon^\mu_\bcdot+
c^{~}_s\sum_I\Aa^{I}_\bcdot\Varepsilon^\mu_\bcdot\right)\omega_\mu^{~\bcdots}
\bm\phi=0,
\end{multline*}
where $SU(N)$ indices are indicated explicitly.
It is clear that this equation is $GL(4,\R)$ and $SU(N)$ invariant owing to its construction.
%
%
\subsection{Torsion and stress-energy forms of Yang--Mills theory}\label{3.3}
Yang--Mills field is a source of torsion and curvature of space-time.
The right-hand side of the torsion equation (\ref{torsion}) and the Einstein equation (\ref{EoM2}) is given by torsion and a stress-energy tensor of the Yang--Mills field, respectively.
They are defined as follows:
\begin{definition}[torsion and stress-energy forms]\label{tse}
Torsion two-form $\TTT_{\YM}$ and stress-energy three-form $\EEE_{\YM}$ of the Yang--Mills Lagrangian are respectively defined as
\begin{align*}
\TTT_{\YM}\wedge\eee:=\frac{\delta\LLL_{\YM}}{\delta\www}
-d\left(\frac{\delta\LLL_{\YM}}{\delta(d\www)}\right)~~\mathrm{and}~~
\EEE_{\YM}:=\frac{\delta\LLL_{\YM}}{\delta\eee}
-d\left(\frac{\delta\LLL_{\YM}}{\delta(d\eee)}\right).
\end{align*}
\end{definition}

\begin{remark}[torsion and stress-energy forms of Yang--Mills Lagrangian]\label{YMtse}
For the Yang--Mills Lagrangian, a torsion form is provided as
\begin{align}
\TTT_{\YM}^a&=\left\{K^a_{~b_1b_2}+\frac{1}{2}\left(
\delta^a_{b_1}K^\bcdot_{~b_2\bcdot}-
\delta^a_{b_2}K^\bcdot_{~b_1\bcdot}\right)
\right\}\eee^{b_1}\wedge\eee^{b_2},\label{tform}
\end{align}
where
\begin{align}
K^a_{~bc}&:=-i\hbar
\left(\bar\psi\gamma_\Sp^a{\Ss}^{~}_{bc}\psi\right),\label{kform}
\end{align}
and the stress-energy form is
\begin{align}
[\EEE_\YM]_a&=\left\{\hbar
\bar\psi\left(i\hspace{.1em}\gamma_\Sp^\bcdot\partial_\bcdot-\frac{i}{2}\cG\hspace{.1em}
\gamma_\Sp^\bcdot\Varepsilon_\bcdot^\mu\omega_\mu^{~\stars}{\Ss}^{~}_\stars+
{c^{~}_\SU}\hspace{.1em}\gamma_\Sp^\bcdot\Varepsilon^\mu_\bcdot\Aa^I_\mu t_I
\right)\psi
-{m_e}\bar\psi\psi
\right.\nonumber
\\ 
&\hspace{2em}\left.-\frac{\hbar}{4}\f^I_\bcdots\f_I^\bcdots\right\}\VVV_a.\label{esform}
\end{align}
\end{remark}
\begin{proof}
For the torsion form, a tensor $\bm{K}$ is introduced as 
\begin{align*}
\frac{\delta\LLL_{\YM}}{\delta\www^{ab}}&=:
K^\bcdot_{~ab}\VVV_\bcdot.
\end{align*}
Coefficients of tensor $\bm{K}$ can be obtained by direct calculations from (\ref{LYM2}) as
\begin{align*}
\frac{\delta\LLL_{\YM}}{\delta\www^{ab}}=
\frac{\delta\LLL_{\YM}}{\delta\omega_\mu^{~ab}}\partial_\mu&=
-i\hbar\hspace{.1em}\cG
\left(\bar\psi\gamma_\Sp^\bcdot{\Ss}^{~}_{ab}\psi\right)
\frac{1}{3!}\epsilon_{\bcdot\star\stars}
\eee^\star\wedge\eee^\star\wedge\eee^\star,\\
&=-i\hbar\hspace{.1em}\cG
\left(\bar\psi\gamma_\Sp^\bcdot{\Ss}^{~}_{ab}\psi\right)\VVV_\bcdot,
\end{align*}
and thus, (\ref{kform}) is provided.
Torsion form (\ref{tform}) can be obtained by solving algebraic equation 
\[
\epsilon_{ab\bcdots}\hspace{.1em}\TTT_{\YM}^\bcdot\wedge\eee^\bcdot=
K^\bcdot_{~ab}\hspace{.1em}\VVV_\bcdot.
\]
Energy-stress form $\EEE_{\YM}$ can be obtained by direct calculations from (\ref{LYM2}).
\end{proof}
In summary, the Einstein equation and the torsion equation with the Yang--Mills field are respectively provided using the trivial basis in $\TM$ as
\begin{align}
\frac{1}{\kappa}
\left(
\frac{1}{2}\epsilon_{a\bcdots\bcdot}
\RRR^\bcdots\wedge\eee^\bcdot
-\Lambda_{\mathrm c}\VVV_a\right)&=-[\EEE_{\YM}]_a
~\Leftarrow~(\ref{esform}),\label{EoM3}
\end{align}
and
\begin{align*}
\frac{1}{\kappa}\left(d\eee^a+
\cG\hspace{.1em}\www^{a}_{~\bcdot}\wedge\eee^\bcdot\right)
=\hbar\left(-i\bar\psi\gamma_\Sp^a{\Ss}^{~}_{\bcdot\star}\psi+
2\delta^a_{\hspace{.2em}
[\bcdot}{\eta}^{~}_{\star]\diamond}\bar\psi\gamma_\Sp^\diamond\psi
\right)\eee^{\bcdot}\wedge\eee^{\star}.
\end{align*}

The stress-energy due to the gauge field tensor (corresponding to the last term of the right-hand side of (\ref{esform})) is provided as $T^{ab}_\SU:=-{\hbar}\hspace{.1em}\eta^{ab}\f^I_\bcdots\f_I^\bcdots/4$.
Whereas the standard form of the stress-energy tensor is traceless, its trace is not zero.
The standard traceless stress-energy tensor can be obtained by adding total derivative term $\hbar\hspace{.1em}d(\AAA^{~}_\SU\wedge\FFF_\SU)$ to Lagrangian $\LLL_{\ym}$.
This additional term gives term $\hbar\hspace{.1em}\eta_\bcdots\f_I^{a\bcdot}\f_I^{b\bcdot}$ in the stress-energy tensor, and the standard representation of the stress-energy tensor for the gauge field is obtained as
\begin{align}
T^{\overline{ab}}_\SU&:=\hbar\hspace{.1em}\sum_I\left(
\eta_\bcdots\f_I^{a\bcdot}\f_I^{b\bcdot}-\frac{1}{4}\eta^{ab}\f^I_\bcdots\f_I^\bcdots
\right),\label{TabSU}
\end{align}
which is traceless such as $\eta_{\bcdots}T^{\bcdots}_\SU=0$.

\section{Index theorem in general relativistic Yang--Mills theory}
Both Lagrangians of general relativity and the Yang--Mills theory are defined using the supertrace in the secondary superspace.
According to methods given in section \ref{AppIndex} and \ref{themetric}, this section discussed a topological index in the general relativistic Yang--Mills bundle with Euclidean and Lorentz metrics.

%
%
\subsection{Chern index of space-time manifold}
This section treats the characteristic class of the inertial manifold.
The characteristic class is an element of the cohomology group of the classifying space of the principal bundle.
We utilize the Chren characteristic to classify the inertial manifold.
The first Chern class of the inertial manifold is zero since the curvature $\RRR$ is traceless concerning tensor coefficients in $\TM$.
In the four-dimensional manifold, only the second Chern class can have a non-zero value provided such that:
\begin{align*}
c^{~}_2\left(\RRR\right)=p_1\left(\RRR\right)=
\frac{\Tr_{\TM}\left[\RRR\wedge\RRR\right]}
{8\pi^2}\in H^4(\M,\Z)
\end{align*}
where $\RRR$ is provided as a solution of the Einstein equation and the torsion-less condition.
A flat inertial manifold has no singularity, and its domain is the whole $\M{\cong}\R^4$.
After the one-point compactification of $\M$, the Poincar\'{e} duality theorem insists that $H^4(\M{\cong}S^4,\Z){\cong}H_0(S^4,\Z){\cong}\Z$.
The de\hspace{.2em}Sitter space-time, as well as the Friedmann--Lema\^{i}tre--Robertson--Walker space-time, has topology $M_\textrm{dS}\cong\R{\otimes}S^3$, and after the one-point compactification of the time coordinate, it has cohomology such that:
\begin{align*}
H^4(S^1{\otimes}S^3,\Z)&{\cong}\sum_{i+j=4}
H^i(S^1,\Z){\otimes}H^j(S^3,\Z){\cong}
H_0(S^1,\Z){\otimes}
H_0(S^3,\Z){\cong}\Z\times\Z.
\end{align*}
Here, we use the K\"{u}nneth theorem and the Poincar\'{e} duality theorem.

On the other hand, black hole solutions have a singularity, e.g., the Schwarzschild solution is defined in $\R^1{\otimes}\R^3\setminus\{0\}$ and the Kerr--Newman solution has a domain as $\R^1{\otimes}\R^3{\setminus}S^1$.
The singularity-ring $S^1$ in the Kerr--Newman solution is homotopically contractible; thus, its homology is the same as the Schwarzschild solution's one.
Therefore, it is enough to consider the cohomology of the Schwarzschild solution. 
An oriented and closed domain of the Schwarzschild solution, denoted as $\M_{\hspace{-.1em}\textrm{Schw}}$, is constructed by means of the one-point compactification of $\M$ as follows:
Time variable $t$ is one-point compactified as $\R{\mapsto}S^1$.
A three-dimensional spacial manifold without the origin is the closed manifold such that $\R^3\setminus\{0\}{\cong}S^3$; thus, the domain of the Schwarzschild solution is a closed manifold $\M_{\hspace{-.1em}\textrm{Schw}}{\cong}S^1{\otimes}S^3$.
The cohomology of a domain of the Schwarzschild solution is provided as 
$H^4(\M_{\hspace{-.1em}\textrm{Schw}},\Z){\cong}\Z\times\Z$.

In summary, the second Chern class of the known solutions of the Einstein equation in the inertial bundle has integral cohomology.
In reality, direct calculations of $\Tr_{\TM}[\RRR\wedge\RRR]$ with the $\theta$-metric show 
\begin{align*}
c^{~}_2\left(\RRR\right)&\propto
\eta_{\theta\hspace{.1em}\bcdots}\hspace{.2em}\eta_{\theta\hspace{.1em}\stars}
\RRR^{\bcdot\star}\wedge\RRR^{\star\bcdot}=0,
\end{align*}
for all solutions listed above.
Therefore, the Chern characteristic is also zero for the space-time manifold of the known solution of the Einstein equation.

%
%
\subsection{Index in spinor-gauge bundle}\label{EMsYM}
The $\Z_2$-grading superspace of gauge fields consisting of $\AAA_{\SU}$ and $\hat\AAA^{~}_{\SU}$ induces the corresponding superspace of spinor fields $\psi$ and $\hat{\psi}$, respectively.
Whereas a spinor concerning $\hat\AAA^{~}_{\SU}$ (magnetic monopole) has not been observed experimentally, a dual spinor is introduced as a section in the spinor-gauge bundle.
The second term of Lagrangian (\ref{LYM2}) is kept in theory in this section.
This section treats the flat space-time with $\www=0$.

The magnetic-monopole Lagrangian is provided as
\begin{align*}
\LLL_\textrm{MP}&:=\left\{\bm{\bar{\hat{\psi}}}
\left(i\hbar\hspace{.1em}\hat{d}_{\SU}-\hat\mu\bm{1}_\SU\right)
{\bm{\hat\psi}}\right\}\vvv.
\end{align*}
The second Bianchi identities of gauge and its dual fields are provided as
\begin{align}
d^{~}_\SU\FFF^{~}_{\SU}=\hat{d}^{~}_{\SU}\hat\FFF^{~}_{\SU}=0.\label{dfdhf}
\end{align}
Equations of motion for gauge and spinor fields are provided as
\begin{subequations}
\begin{align}
&d_\SU\hat\FFF^{~}_{\SU}
={c^{~}_\SU}\hspace{.1em}\bm{\bar{\hat{\psi}}}\gamma_\Sp^\bcdot
{\bm{\psi}}\hspace{.2em}\VVV_\bcdot,&
\hat{d}_{\SU}\FFF_\SU={{c}^{~}_\SU}\hspace{.1em}\bm{\bar{\hat{\psi}}}
\gamma_\Sp^\bcdot{\bm{\hat\psi}}\hspace{.2em}\VVV_\bcdot,
\label{dfdf}\\
&\left(i\hbar\hspace{.1em}\dSg-
\mu\bm{1}_\SU\right)\bm{\psi}=0,
&\left(i\hbar\hspace{.1em}\hdSg-
\hat\mu\bm{1}_\SU\right){\bm{\hat\psi}}=0.\label{EoMEMS}
\end{align}
\end{subequations}
The equation of motion for ${\bm{\hat\psi}}$ is provided owing to a variational operation on $\I_\YM$ with respect to $\hat\AAA_{\SU}$.
The Bianchi identities (\ref{dfdhf}) and equations of motion (\ref{dfdf}) are generally independent from one another.
The equation of motion for the spinor section in the $\theta$-metric space has an expression using tensor coefficients such that:
\begin{align}
-\sqrt{\sigmath}\left(\delta^I_J\partial_\bcdot+
c^{~}_\SU f^I_{~JK}[\Aa^{~}_\theta]^K_\bcdot\right)
\left[\f_\theta^J\right]^{\bcdot a}\tau^{~}_I
=\left[j^{~}_\theta\right]^a,\label{EoMFtensor}
\end{align}
where 
\begin{align*}
\left[\f_\theta^J\right]^{ab}:=\eta_\theta^{a\bcdot}\eta_\theta^{b\star}
\left[\f_\theta^J\right]_{\bcdot\star}~~\textrm{and}~~
j^a:=
{c^{~}_\SU}\hspace{.1em}\bm{\bar{\hat{\psi}}}\gamma_\theta^a{\bm{\psi}}.
\end{align*}
We note that both $\hat\FFF^{~}_\SU$ and $\VVV_\bullet$ are defined using the Levi-Civita tensor; thus, it can be factored out after rearranging indices.
Therefore, the equation of motion for the spinor section is provided with the connection and curvature of the spinor-gauge bundle.

For the $U(1)$ gauge, the Bianchi identities correspond to the (so-called) first set of the Maxwell equations, and they are not equations but identities.
The second set of the Maxwell equations corresponds to (\ref{dfdf}) and govern the dynamics of electric and magnetic fields.
We note that the dual field $\hat\FFF^{~}_{\SU}$ is the electromagnetic field created by an electron field $\psi$ through the electromagnetic potential $\AAA_{\SU}$ in this study.

\begin{example}\textbf{electric and magnetic poles:}\label{EXcoulomb}
~\\ \noindent
This example treats a topological index of the spinor-gauge bundle with the $U(1)$-gauge group.
Spinor fields $\psi$ and $\hat\psi$, conventionally called an electron and a magnetic monopole, are referred to as \emph{electric pole} and \emph{magnetic pole} in this study, respectively.
We start a discussion from a Coulomb-type electric-potential of an electric pole in the flat space-time manifold with the $\theta$-metric.
When a Coulomb-type potential in globally flat $\MM$  is provided as 
\begin{align*}
V(r)=\frac{e^{i\pi\theta/2}}{r}
\end{align*}
using a local three-dimensional polar coordinate $\xi^a=(t,r,\theta_1,\theta_2)$.
a connection of the gauge bundle is obtained as
\begin{align}
\AAA^{~}_\Uo&:=V(r)\hspace{.2em}dt\in\Omega^1\left(\TsM\SM\{r=0\}\right).\label{AAAU1}
\end{align}
A curvature is obtained owing to the structure equation of abelian gauge-group such that: 
\begin{align}
\FFF^{~}_\Uo:=d\AAA^{~}_\Uo
=\left(\partial_\bcdot V(r)\right)d\xi^\bcdot{\wedge}dt
=\frac{e^{i\pi\theta/2}}{r^2}\hspace{.2em}dt\wedge dr.\label{FU1}
\end{align}
Thus, the dual curvature is provided as
\begin{align}
\hat\FFF^{~}_\Uo&=\frac{e^{i\pi\theta/2}}{\sqrt{-e^{i\pi\theta}}}\frac{1}{r^2}
(rd\theta_1)\wedge(r\sin{\theta_1}d\theta_2).\label{hFU1}
\end{align}
We note that
\begin{align*}
\left.\frac{e^{i\pi\theta/2}}{\sqrt{-e^{i\pi\theta}}}\right|_{\theta\rightarrow+0}=
\left.\frac{e^{i\pi\theta/2}}{\sqrt{-e^{i\pi\theta}}}\right|_{\theta\rightarrow1}=i.
\end{align*}
We treat the $\theta$-metric space with  $0\leq\theta\leq1$; thus, we write $e^{i\pi\theta/2}/\sqrt{-e^{i\pi\theta}}=i$, hereafter.

Tensor coefficient $\hat\f$ is defined in $\TsM$ except $r=0$ and is static; thus, the domain of $\hat\f$ is $\Sigma_{\hat\f}=\TsM\SM\{r=0\}\cong\R\otimes(\R^3\SM\{r=0\})$.
An equation of motion in the spinor-gauge bundle is provided as
\begin{align}
d\hat\FFF^{~}_\Uo=i \hspace{.1em}d\hspace{-.2em}
\left(\frac{1}{r^2}\hspace{.2em}(rd\theta_1)\wedge(r\sin{\theta_1}d\theta_2)\right)=
{c^{~}_\Uo}\hspace{.1em}\bm{\bar{\hat{\psi}}}\gamma_\Sp^\bcdot{\bm{\psi}}\hspace{.2em}\VVV_\bcdot.\label{ex314EMS}
\end{align}
This integration in three-dimensional volume $D^3$ is performed as
\begin{align}
\int_{D^3}d\hat\FFF^{~}_\Uo=
\int_{\partial\hspace{-.1em}D^3=S^2}\hat\FFF^{~}_\Uo=
i\int_{S^2}\sin{\theta_1}\hspace{.1em}d\theta_1\wedge d\theta_2=
4\pi i.\label{ex314EMSL}
\end{align}
The first equality is just a formal relation because $d\hat\FFF^{~}_\Uo=0$ in reality; this is special for the $U(1)$-gauge bundle.
We note that the covariant differential is just the external derivative for an abelian gauge group; thus, the equation of motion is equivalent to the Bianchi identity.

For the right-hand side of (\ref{ex314EMS}), we obtain a stationary electric-pole solution at the origin.
We integrate it as
\begin{align*}
{c^{~}_\Uo}\int_{D^3}
\bm{\bar{\hat{\psi}}}\gamma_\Sp^\bcdot{\bm{\psi}}\hspace{.2em}\VVV_\bcdot=
\int_0^1\hspace{-.3em}dr
\int_{0}^{\pi}\hspace{-.3em}rd\theta_1
\int_0^{2\pi}\hspace{-.3em}r\sin{\hspace{-.1em}\theta_1}d\theta_2\hspace{.3em}
j^0=4\pi{i},
\end{align*}
where $j^a$ is a static electron current given as 
\begin{align*}
&~
j^0=4\pi{i}\hspace{.1em}\delta^3(\bm{r})=
4\pi{i}\frac{\delta(r)\delta(\theta_1)\delta(\theta_2)}{r^2\sin{\theta_1}},\\
&~j^1=j^2=j^3=0;
\end{align*}
thus, it is consistent with (\ref{ex314EMSL}).
A topological invariant of the electric-pole current is provided as the Chern characteristic such that:
\begin{align*}
ch\left(\hat\FFF^{~}_\Uo\right)&=dim\left(U(1)\right)+c_1\left(\hat\FFF^{~}_\Uo\right)+
\frac{1}{2}\left(c_1\left(\hat\FFF^{~}_\Uo\right)^2-2c^{~}_2\left(\hat\FFF^{~}_\Uo\right)\right),
\end{align*}
where
\begin{align*}
dim\left(U(1)\right)&=1,\\
c_1\left(\hat\FFF^{~}_\Uo\right)&=\frac{i}{2\pi}\int_{S^2}\textrm{Tr}_\Uo[\hat\FFF^{~}_\Uo]=-2,\\
c^{~}_2\left(\hat\FFF^{~}_\Uo\right)&=
\frac{1}{8\pi^2}\int_{D^4}\left(
\textrm{Tr}_\Uo[\hat\FFF^{~}_\Uo\wedge\hat\FFF^{~}_\Uo]-
\textrm{Tr}_\Uo[\hat\FFF^{~}_\Uo]\wedge\Tr[\hat\FFF^{~}_\Uo]
\right)=0.
\end{align*}
We note that $\textrm{Tr}_\Uo[\hat\FFF^{~}_\Uo]=\hat\FFF^{~}_\Uo$ for the $U(1)$ gauge group.
Therefore, we obtain the Chern characteristic of the electric pole as $ch(\hat\AAA^{~}_\Uo)=1$ for any $\theta\in[0,1]$.

\begin{figure*}[tb]
\centering
\includegraphics[width={3.5cm}]{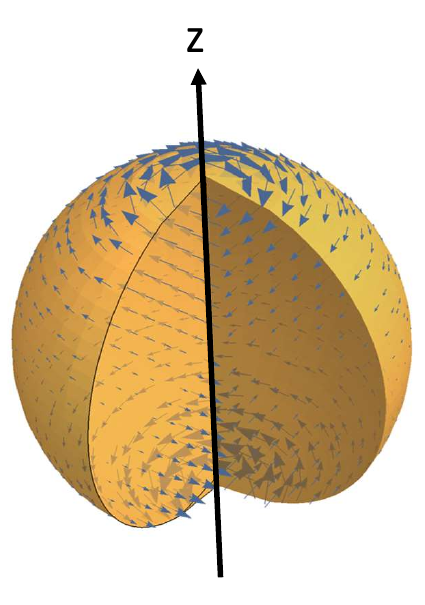}%
 \caption{
Arrows show three-dimensional vector representation of the dual connection for the Coulomb type static potential.
Vectors do not defined on a $z$-axis in this example.
} 
\label{fig0} 
\end{figure*}
The dual connection is provided as a solution of the structure equation for the given dual-curvature such that:
\begin{align*}
d\hat\AAA^{~}_\Uo
=\hat\FFF^{~}_\Uo=\frac{i}{r^2}
(rd\theta_1)\wedge(r\sin{\theta_1}d\theta_2).
\end{align*}
This equation has a unique solution with boundary condition of $\lim_{r\rightarrow\infty}\hat\Aa=0$ such as
\begin{align*}
\hat\AAA^{~}_{\Uo}=
-\frac{i}{r}\cot{\theta_1}\hspace{.2em}(r\sin{\theta_1}d\theta_2)=
-i\cos{\theta_1}d\theta_2,
\end{align*}
and it is represented using Cartesian coordinate $\xi^a=(t,x,y,z)$ as 
\begin{align*}
\hat\AAA^{~}_{\Uo}=i\hspace{.1em}\frac{y\hspace{.1em}dx-x\hspace{.1em}dy}{x^2+y^2}\hspace{.1em}\frac{z}{r}\in\Omega^1\left(\TsM\SM\{x=0\wedge y=0\}\right).
\end{align*}
This solution is known as the Dirac monopole, and a one-dimensional manifold subtracted from $\R^3$ is known as the Dirac string.
Three-dimensional vectors of connection $(\Aa_x,\Aa_y,\Aa_z)$ at $t=0$ are visualized in Figure \ref{fig0}.
We note that the one-dimensional singularity along the Dirac string is removable, and an essential singularity is a point at $x=y=z=0$.
In reality, the same solution is represented using a cylindrical coordinate $\xi_\textrm{cyl}:=(t,\rho,\phi,z) $ as 
\begin{align*}
\hat\AAA^{~}_{\Uo}=\frac{iz}{\sqrt{\rho^2+z^2}}\hspace{.1em}d\phi;
~&\textrm{thus,}~~
\lim_{\rho\rightarrow 0}\hat\AAA^{~}_{\Uo}\big|_{z\neq0}
\rightarrow{i\hspace{.1em}\textrm{sign}(z)d\phi}.
\end{align*}

Before calculating the topological index of the dual spinor-gauge bundle, we consider a relation between the spin-gauge connection and its dual; Figure \ref{figdiag} shows a schematic diagram of its structure.
We start from gauge connection $\AAA^{~}_\SU$; then, the structure equation concerning $\AAA^{~}_\SU$ gives gauge curvature $\FFF^{~}_\SU$.
The Hodge-dual of the gauge field (curvature) gives the equation of motion in the spinor-gauge bundle such as the left equation in (\ref{dfdf}).
We note that dual-curvature $\hat\FFF^{~}_\SU$ is not the curvature corresponding  the spinor-gauge connection, and spinor filed $\psi$ interacts with gauge connection $\AAA^{~}_\SU$.
When we start from the spinor current $j(\psi)$, a solution of equations (\ref{dfdf}), (\ref{EoMEMS}), and the Bianchi identity (\ref{dfdhf}) provides a set of fields $(\AAA^{~}_\SU,\hat\FFF^{~}_\SU)$; then, the gauge curvature $\FFF^{~}_\SU$ is provided as the Hodge-dual of $\hat\FFF^{~}_\SU$ that fulfils the structure equation with dual connection $\hat\AAA^{~}_\SU$.
The equation of motion of the (dual-)spinor field has a tensor-coefficient representation owing to the (dual-)connection and (dual-)curvature, respectively, e.g., as shown in (\ref{EoMFtensor}).
Dual connection $\hat\AAA^{~}_\SU$ is provided as a solution of the structure equation owing to the given dual curvature $\hat\FFF^{~}_\SU$, and the dual-spinor current $\hat{j}(\hat\psi)$ is given as a solution of the equation of motion and interacts with the dual connection.

\begin{figure}[t]
\centering 
\includegraphics[width={11cm}]{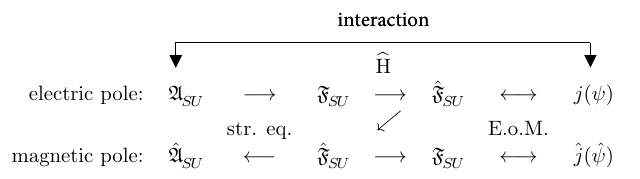}%
 \caption{
A diagram shows a relation between  spin-gauge bundle and its dual bundle. Abbreviations ``str. eq.'' and ``E.o.M.'' mean the structure function and the equation of motion, respectively.
} 
\label{figdiag} 
\end{figure}
In our case, a topological index of the Dirac monopole is provided owing to gauge connection $\AAA^{~}_\Uo$ given as (\ref{AAAU1}).
A direct calculation of the first Chern class concerning $\AAA^{~}_\Uo$ results indeterminate integration; thus, we exploit the indirect method here.
We note that $\AAA^{~}_\Uo$ is homotopically equivalent to $\hat\AAA^{~}_\Uo$.
When two connection are embedded in the five-dimensional smooth manifold such that:
\begin{align*}
\overline{\AAA}^{~}_\Uo&:=
s\hspace{.1em}{\AAA}^{~}_\Uo+
(1-s)\hspace{.1em}{\hat\AAA}^{~}_\Uo
\in(\Sigma\amalg\hat\Sigma)\otimes[0,1],
\end{align*}
where $\Sigma\cong\hat\Sigma\cong{S^2}$.
Moreover, $\overline{\AAA}^{~}_\Uo$ is the gauge connection at any $s\in[0,1]$ because the gauge group is abelian and the structure equation is a linear equation.
Therefore, two connections have the same Chern index as $ch(\FFF^{~}_\Uo)=ch(\hat\FFF^{~}_\Uo)=1$.
It is true  for any values of $\theta\in[0,1]$.

The stationary condition of the Yang--Mills action provides the equation of motion (\ref{dfdf}); thus, solutions are SD or ASD when the gauge group is $SU(N)$, and the metric is Euclidean.
In this example, the curvature is not SD nor ASD independent from the value of T because the gauge group is $U(1)$.
\QED
\end{example}

%
%
\subsection{Index in gauge field}
The four-dimensional \emph{Euclidean} space-time has the SD- or ASD curvature as a solution of the equation of motion obtained as the stationary point of the Yang--Mills action as shown in \textbf{Section \ref{Gaugefield}}.
The instanton solution\cite{ATIYAH1978185}, is an example of the (A)SD solution, and the instanton number is a topological invariant in the four-dimensional Euclidean space-time.
This section treats the instanton solution in a flat space-time ($\www=0$) with a $(4\hspace{-.1em}\times\hspace{-.1em}4)$-matrix representation of the $\theta$-metric space.
The instanton number in the Minkowski space-time is also discussed in this section.

Coupling constant ${c^{~}_\SU}$ is set to unity for simplicity.
The $SU(2)$ generator is extended to a four dimensional vector by adding the zeroth component such that ${\bm{\tau}}\rightarrow{\bm{\tau}}_4:=(\bm{1}_2/2,-i\tau_1,-i\tau_2,-i\tau_3)$, where $\tau_i=\sigma_i/2$ is the $SU(2)$ generator.
A gauge-group element at $\xi\in\M_\theta$ has a representation such that:
\begin{align*}
g(\xi):=v(\xi)\hspace{.1em}\bm{\gamma}_{\theta}=
\sum_{a=0}^3 v^a(\xi)\gamma_{\theta}^a,
\end{align*}
that yields $\textrm{det}[g(\xi)]=\left(\eta^{~}_{\theta\hspace{.1em}\bcdots}\hspace{.1em}\xi^\bcdot\xi^\bcdot\right)^2$, where $v^\bullet(\xi)$ is a position vector in $\TM_\theta$.
A generator of the four-dimensional \emph{rotation} in the $\theta$-metric space is provided as $\Xi:=C_{\bcdots}\gamma^\bcdot_\theta\gamma^\bcdot_\theta$, where $C_\bullets\in\R$ are rotation parameters.
The group element squared is an invariant under the  the four-dimensional \emph{rotation} as
\begin{align*}
g'(\xi)g'(\xi)&=\Xi^{-1}g(\xi)\Xi^{-1}\Xi g(\xi)\Xi=\Xi^{-1}g(\xi)g(\xi)\Xi
=g(\xi)g(\xi),\\&=
\eta^{~}_{\theta\hspace{.1em}\bcdots}\hspace{.1em}\xi^\bcdot\xi^\bcdot\bm1_\Sp.
\end{align*}

An instanton connection at $\xi^a=(t,x,y,z)$ is provided as
\begin{align*}
\Aa_\theta&=\sum_{I=1}^4\Aa_\theta^I
\left(
    \begin{array}{cc}
     [{\bm\tau}_4]_I & 0\\
    0& [{\bm\tau}_4]_I
    \end{array}
  \right)
=\frac{r_\theta^2}{r_\theta^2-\lambda^2}g(\xi)dg^{-1}(\xi)),
\end{align*}
where 
\begin{align*}
g(\xi):=\frac{1}{r^{~}_\theta}\sum_{a=0}^3\xi^a\gamma_\theta^a,&~~\textrm{and}~~
r_\theta^2:=\eta^{~}_{\theta\hspace{.1em}\bcdots}\hspace{.1em}
\xi^\bcdot\xi^\bcdot=e^{i\pi\theta}t^2-x^2-y^2-z^2.
\end{align*}
An instanton connection is provided using a chiral representation of $\gamma^i_\theta$ as
\begin{align*}
[\Aa_\theta]_\bcdot \eee_\theta^\bcdot&=
\frac{1}{r^2_\theta-\lambda^2}\left(
    \begin{array}{cc}
     [\aa_\theta]_\bcdot \eee_\theta^\bcdot& 0\\
    0& [\tilde\aa_\theta]_\bcdot \eee_\theta^\bcdot
    \end{array}
  \right),
\end{align*}
where
\begin{align*}
&~\eee^a_\theta:=(dt_\theta,dx,dy,dz),~~~t_\theta:=ie^{i\pi\theta/2}\hspace{.1em}t,\\
&~
\begin{array}{lcclcc}
[\aa_\theta]_0&=i
\left(
    \begin{array}{cc}
     z & x-iy\\
    x+iy& -z
    \end{array}
  \right),&
[\tilde\aa_\theta]_0&=-i
\left(
    \begin{array}{cc}
     z & x-iy\\
    x+iy& -z
    \end{array}
  \right),\\
{[\aa_\theta]}_1&=
\left(
    \begin{array}{cc}
     -iy & -it_\theta+z\\
    -it_\theta-z&iy
    \end{array}
  \right),&
[\tilde\aa_\theta]_1&=
\left(
    \begin{array}{cc}
     -iy & it_\theta+z\\
    it_\theta-z&iy
    \end{array}
  \right),\\
{[\aa_\theta]}_2&=
i\left(
    \begin{array}{cc}
     x & it_\theta-z\\
    -it_\theta-z&-x
    \end{array}
  \right),&
[\tilde\aa_\theta]_2&=
i\left(
    \begin{array}{cc}
     x & -it_\theta-z\\
    it_\theta-z&-x
    \end{array}
  \right),\\
{[\aa_\theta]}_3&=
\left(
    \begin{array}{cc}
    -it_\theta&-x+iy\\
    x+iy&it_\theta
    \end{array}
  \right),&
[\tilde\aa_\theta]_3&=
\left(
    \begin{array}{cc}
    it_\theta&-x+iy\\
    x+iy&-it_\theta
    \end{array}
  \right).
\end{array}
\end{align*}
The instanton has two sets of $SU(2)$ connections according to $\aa_\theta$ and $\tilde\aa_\theta$ such that:
\begin{equation*}
\AAA_\theta^I=\left[\Aa_\theta^I\right]_\bcdot dx^\bcdot=
\left(\frac{2}{r_\theta^2-\lambda^2}\right)\times\left\{
\begin{array}{cl}
0,&(I=0),\\
-x\hspace{.1em}dt_\theta+t_\theta\hspace{.1em}dx
+z\hspace{.1em}dy-y\hspace{.1em}dz,&(I=1),\\
-y\hspace{.1em}dt_\theta-z\hspace{.1em}dx
+t_\theta\hspace{.1em}dy+x\hspace{.1em}dz,&(I=2),\\
-z\hspace{.1em}dt_\theta+y\hspace{.1em}dx
-x\hspace{.1em}dy+t_\theta\hspace{.1em}dz,&(I=3),
\end{array}
\right.
\end{equation*} 
and
\begin{equation*}
\tilde\AAA_\theta^I=\left[\tilde\Aa_\theta^I\right]_\bcdot dx^\bcdot=
\left(\frac{2}{r_\theta^2-\lambda^2}\right)\times\left\{
\begin{array}{cl}
0,&(I=0),\\
x\hspace{.1em}dt_\theta-t_\theta\hspace{.1em}dx
+z\hspace{.1em}dy-y\hspace{.1em}dz,&(I=1),\\
y\hspace{.1em}dt_\theta-z\hspace{.1em}dx
-t_\theta\hspace{.1em}dy+x\hspace{.1em}dz,&(I=2),\\
z\hspace{.1em}dt_\theta+y\hspace{.1em}dx
-x\hspace{.1em}dy-t_\theta\hspace{.1em}dz,&(I=3),
\end{array}
\right.
\end{equation*} 
The instanton connection in Eudlidean space is provided as $\Aa_E:=\Aa_\theta\big|_{\theta\rightarrow1}$.

Instanton curvatures, $\FFF_\theta$ and $\tilde\FFF_\theta$, are induced owing to the structure equation with $\AAA_\theta$ and $\tilde\AAA_\theta$, respectively; Euclidean and Minkowski curvatures are provided from instanton curvature in the $\theta$-metric space such as 
\begin{align*}
\FFF^{+}_E&=\FFF^{~}_\theta\big|_{\theta\rightarrow 1},\hspace{1.3em}
\FFF^{-}_E~=~\tilde\FFF_\theta\big|_{\theta\rightarrow 1},\\
\FFF^{+}_M&=\FFF^{~}_\theta\big|_{\theta\rightarrow+0},\hspace{.5em}
\FFF^{-}_M~=~\tilde\FFF^{~}_\theta\big|_{\theta\rightarrow+0}.
\end{align*}
They yield $\hat\FFF^\pm_E=\pm\FFF^\pm_E$ and $\hat\FFF^\pm_M=\pm\FFF^\pm_M$; thus, instanton solutions in the $\theta$-metric space provide the SD and ASD solutions simultaneously for the both Euclidean and Minkowski limit.
Moreover, it is true for any $\theta\in[0,1]$ that 
\begin{align}
\HD(\FFF^{~}_\theta)=+\FFF^{~}_\theta~~\textrm{and}~~
\HD(\tilde{\FFF}^{~}_\theta)=-\tilde\FFF^{~}_\theta.\label{HFeqF}
\end{align}
For Eudlidean-metric space, according to a sign of the second Chern class, the SD- or ASD curvature gives a solution of the Yang--Mills equation as shown in \textbf{Remark \ref{SDorASD}}.
On the other hand, in the Minkowski space-time, the SD- and ASD curvature do not always provide the extremal of the Yang--Mills action.
We confirm that instanton curvature ${\FFF}^{\pm}_M$ in the Minkowski space-time is a solution of the Yang--Mills equation:
The Yang--Mills equation in the $\theta$-metric space has a tensor component representation as (\ref{EoMFtensor}).
For the SD curvature, the equation can be written using a half of components as
\begin{align*}
&[\textrm{Y-M eqs}]_\theta:=\\&\left\{
\begin{array}{rcl}
\left(
\nabla_{\hspace{-.2em}1;I}\f^I_{01}+
\nabla_{\hspace{-.2em}2;I}\f^I_{02}+
\nabla_{\hspace{-.2em}3;I}\f^I_{03}\right)
\sigmath^{\hspace{-.2em}-\frac{1}{2}}dt
&=&\bm{\bar{\hat{\psi}}}\left(e^{i\pi\theta/2}\gamma_\Sp^0\right){\bm{\psi}}\hspace{.1em}\sigmath^{\hspace{-.2em}-\frac{1}{2}}\hspace{.2em}dt,\\
\left(-\sigmath^{\hspace{-.2em}-\frac{1}{2}}\hspace{.1em}
\nabla_{\hspace{-.2em}0;I}\f^I_{01}-
\nabla_{\hspace{-.2em}3;I}\f^I_{02}+
\nabla_{\hspace{-.2em}2;I}\f^I_{03}\right)dx
&=&\bm{\bar{\hat{\psi}}}\gamma_\Sp^1{\bm{\psi}}\hspace{.2em}dx,\\
\left(
\nabla_{\hspace{-.2em}3;I}\f^I_{01}-\sigmath^{\hspace{-.2em}-\frac{1}{2}}\hspace{.1em}
\nabla_{\hspace{-.2em}0;I}\f^I_{02}-
\nabla_{\hspace{-.2em}1;I}\f^I_{03}\right)dy
&=&\bm{\bar{\hat{\psi}}}\gamma_\Sp^2{\bm{\psi}}\hspace{.2em}dy,\\
\left(-\nabla_{\hspace{-.2em}2;I}\f^I_{01}+
\nabla_{\hspace{-.2em}1;I}\f^I_{02}-\sigmath^{\hspace{-.2em}-\frac{1}{2}}\hspace{.1em}
\nabla_{\hspace{-.2em}0;I}\f^I_{03}\right)dz
&=&\bm{\bar{\hat{\psi}}}\gamma_\Sp^3{\bm{\psi}}\hspace{.2em}dz,
\end{array}
\right.
\end{align*}
where
\begin{align*}
\nabla_{a;J}:=\left(\delta^I_J\partial_a+
f^I_{~JK}[\Aa^{~}_\theta]^K_a\right)\tau^{~}_I.
\end{align*}
The SD Yang--Mills equations in Euclidean and Minkowski metrics are provided as
\begin{align*}
[\textrm{Y-M eqs}]_E:=\lim_{\theta\rightarrow1-0}[\textrm{Y-M eqs}]_\theta,
~~&\textrm{and}~~
[\textrm{Y-M eqs}]_M:=\lim_{\theta\rightarrow+0}[\textrm{Y-M eqs}]_\theta.
\end{align*}
Accordingly, equations have a simple relation to each other such that:
\begin{align*}
t\mapsto-it:[\textrm{Y-M eqs}]_E\mapsto[\textrm{Y-M eqs}]_M;
\end{align*}
thus, the connection with the Minkowski  metrics is provided from the Eudlidean metric as
\begin{align*}
t\mapsto-it:\AAA_E^I\mapsto\AAA_M^I.
\end{align*}
Therefore, the Minkowski metric's SD connection is a solution to the Yang-Mills equation in the Minkowski space-time.
It is also true for the ASD connection. 

$\M_1$ is constructed owing to the one-point compactification of $\M_E$ as follows:
For Euclidean manifold $\M_E$, variable transformations $r_E\mapsto\lambda\rho$ for $0<r_E\leq\lambda$ and $r_E\mapsto\lambda/\rho$ for $\lambda<r_E<\infty$ are applied; thus, two compact manifolds $D^4$ are obtained, namely $D^4_\textrm{in}$ and $D^4_\textrm{out}$, respectively.
Then, surfaces of $D^4_\textrm{in}$ and $D^4_\textrm{out}$ are equated to each other with keeping three angular variables; thus, $\M_E$ is compacitified into union of two compact manifolds such that $\M_E\mapsto\M_1=D_\textrm{in}^4\cup D_\textrm{out}^4$ with $D^4_\textrm{in}\cap D^4_\textrm{out}=\partial{D^4_\textrm{in}}=-\partial{D^4_\textrm{out}}= S^3$ and $\partial(D^4_\textrm{in}{\cup}D^4_\textrm{out})=\emptyset$.
The first Pontrjagin number is obtained as an integration of the Pontrjagin class in compactified volumes such that:
\begin{align*}
p_1(\M_1)=-\left(
\int_{D^4_\textrm{in}}\left.
\frac{\Tr_\SU\left[\FFF^{~}_1\wedge\FFF^{~}_1\right]}
{8\pi^2}\right|_{r_E\rightarrow\lambda \rho}+
\int_{D^4_\textrm{out}}\left.
\frac{\Tr_\SU\left[\FFF^{~}_1\wedge\FFF^{~}_1\right]}
{8\pi^2}\right|_{r_E\rightarrow\lambda/\rho}
\right),
\end{align*}
where
\begin{align*}
\FFF^{~}_1=\FFF^{~}_E&=:\left.\frac{1}{2}\left[\f_E^{~}\right]_{\bcdots}
\eee_\theta^\bcdot\wedge\eee_\theta^\bcdot\right|_{\theta\rightarrow1}.
\end{align*}
A trace of an instanton curvature squared is obtained such that:
\begin{align}
\Tr_\SU\left[\FFF^{~}_1\wedge\FFF^{~}_1\right]&=
\sum_{I=1}^4\epsilon_E^{\bcdots\bcdots}\frac{1}{2}
\left[\f_E^I\right]_{\bcdots}\frac{1}{2}\left[\f_E^I\right]_{\bcdots}
\hspace{.2em}\Tr_\TxT
\left[\left[\textbf{t}_4\right]_I.\left[\textbf{t}_4\right]_I\right]\vvv_\theta,\label{TrFF}\\
&=-48\left(\frac{\lambda}{r_E^2+\lambda^2}\right)^4\vvv_\theta,\nonumber
\end{align}
thus, $p_1(\M_1)$ is calculated as 
\begin{align*}
p_1(\M_1)&=
\frac{6}{\pi^2}
\int_0^1\frac{\rho^3}{(\rho^2+1)^4}
d\rho\hspace{.2em}\int_{S^3}d\Omega_4
~=~1,
\end{align*}
where $d\Omega_4$ is an integration measure for an angular integration on $S^3$, which gives $\int_{S^3}d\Omega_4=2\pi^2$.
The completely anti-symmetric tensor in Eudlidean space is provided as  $[\bm\epsilon_E]^{0123}=[\bm\epsilon_E]_{0123}=1$.

On intersection $D^4_\textrm{in}\cap D^4_\textrm{out}=S^3$, angular variables coincide as
\begin{align*}
\partial D_\textrm{in}\ni\theta_i^\textrm{in}=\theta_i^\textrm{out}\in-\partial D_\textrm{out} ,
\end{align*}
for $i=1,2,3$ with keeping an orientation on their surfaces.
When $\theta_3^\textrm{in}$ maps to $\theta_3^\textrm{out}$ as $n\times\hspace{.2em}\theta_3^\textrm{in}=\theta_3^\textrm{out}$ with $0<n\in\Z$, an instanton number is given as $p_1(\M_1)=n$ due to the $\theta_3$ integration from $0$ to $2n\pi$, which corresponds to the winding number around infinity.

The first Pontrjagin number is provided in the entire space of $W_\theta:={\M_\theta}\otimes(\theta\in[0,1])$.
After the same compactification as in a Euclidean space in $W_\theta$, a four-dimensional polar coordinate is introduced as
\begin{align*}
\begin{array}{ll}
\hspace{.2em}t:=\lambda\rho \cos{\theta_1},&
x:=\lambda\rho \sin{\theta_1}\cos{\theta_2},\\
y:=\lambda\rho \sin{\theta_1}\sin{\theta_2}\cos{\theta_3},&
z:=\lambda\rho \sin{\theta_1}\sin{\theta_2}\sin{\theta_3}.
\end{array}
\end{align*}
The first Pontrjagin number is provided by integrations such that:
\begin{align*}
p_1(\M_\theta)&=
-i\frac{6}{\pi^2}e^{i\pi\theta/2}\int_0^1d\rho\int_0^\pi
\rho^2\sin^2{\theta_1}\hspace{.1em}d\theta_1\int_0^\pi\rho\sin{\theta_2}
\hspace{.2em}d\theta_2\int_0^{2\pi} d\theta_3
\\ 
&\hspace{-2em}\left[
\left\{1+\rho^2\left(1-\left(1+e^{i\pi\theta}\right)\cos^2{\theta_1}\right)\right\}^{-4}
+
\left\{\left(1+e^{i\pi\theta}\right)\cos^2{\theta_1}-\left(1+\rho^2\right)\right\}^{-4}
\right],\\&=1,
\end{align*}
for $0<\theta\leq1$.
We note that $\epsilon_E^{\bcdots\bcdots}$ in (\ref {TrFF}) is replaced by $\sigmath\hspace{.1em}\epsilon_\theta^{\bcdots\bcdots}$ in the $\theta$-metric space. 
The Pontrjagin class with the Minkowski metric is a divergent integral; thus, it is not well-defined.
A domain of the instanton solution in the inertial bundle with the Minkowski metric is not the entire space of $\M_0$ but is $\M_0\hspace{-.3em}\setminus\hspace{-.3em}\M_\lambda$, where $\M_\lambda:=\{\xi\in\M_0\mid{\eta_\bcdots\xi^\bcdot\xi^\bcdot-\lambda^2=0}\}$; thus, manifold $\M_0$ is not closed as a principal bundle with the instanton connection.
Taking the closure of manifold $\M_0$ with respect to $\theta$ from the right ($\theta=+0$), $\M_0$ has the first Pontrjagin number as $p_1(\M_\theta)|_{\theta\rightarrow+0}=1$.
The instanton solution in the Minkowski metric is an example of \textbf{Theorem \ref{EMcobor}}: manifolds $\M_0$ and $\M_1$ are cobordant to each other, and the first Pontrjagin number of $\M_0$ is consistently defined owing to that of $\M_1$.

The SD- and ASD-instanton-curvatures fulfil the Bianchi identity as
\begin{align*}
d\FFF_\theta^\pm+\left[\FFF_\theta^\pm,\AAA_\theta^\pm\right]_\wedge=0.
\end{align*}
When we take the SD- or ASD solution as the principal curvature of the gauge bundle, we have
\begin{align*}
\FFF_\SU^{~}=\FFF_\theta^\pm=\pm\HD(\FFF_\theta^\pm)=\pm\hat\FFF_\SU^{~};
\end{align*}
thus, the equation of motion is provided from (\ref{DiracEq2}) as
\begin{align*}
{d}\hat\FFF_\SU^{~}+
\left[\hat\FFF_\SU^{~},\AAA_\SU^{~}\right]_\wedge=
\pm{d}\FFF_\SU^{~}\pm
\left[\FFF_\SU^{~},\AAA_\SU^{~}\right]_\wedge=
\pm\bar\psi_\textrm{inst}\hspace{.1em}\gamma_\Sp^\bcdot\psi_\textrm{inst}
\hspace{.2em}\VVV_\bcdot=0,
\end{align*}
where $\psi_\textrm{inst}$ is an instanton spinor-field.
Last equality is due to the Bianchi identity.
This is true for any $\theta\in(0,1]$.
Therefore, the source field does not exist in a domain of the instanton curvature.
\begin{figure}[t]
\centering
\includegraphics[width={5cm}]{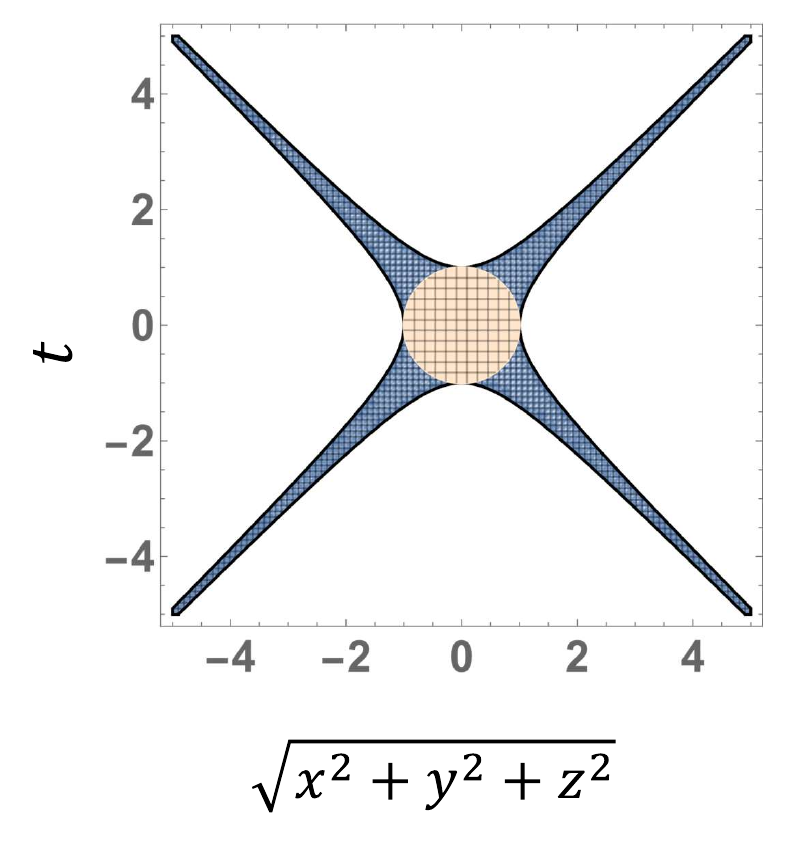}%
 \caption{
This figure shows regions where $\|\FFF^{~}_\SU\|>1$ for the Eudlidean metric (a circle at the origin) and for the Minkowski metric (a mashed region along diagonal lines).
Here, we set $\lambda=1$ in $\FFF^{~}_\SU$.
} 
\label{figins} 
\end{figure}

A physical interpretation of an instanton in the Minkowski-bundle is as follows:
The SD- and ASD-connections domain with the Minkowski metric is $\M_0\hspace{-.3em}\setminus\hspace{-.3em}\M_\lambda$; thus the domain of $\psi_\textrm{inst}$ is $\M_\lambda$, which is on the light-cone at far from the origin, as shown in Figure \ref{figins}. 
After quantization of spinor field $\psi_\textrm{inst}$, a quantity 
\begin{align*}
p_\textrm{inst}\left(\Sigma_3\right):=\int_{\Sigma_3}\bar\psi_\textrm{inst}\hspace{.1em}\gamma_\Sp^\bcdot\psi_\textrm{inst}\hspace{.2em}\VVV_\bcdot
\end{align*}
may give a provability to observe an instanton in three-dimensional volume $\Sigma_3\in\M_\lambda$ after appropriate normalization; thus, an instanton is localized in the light-cone and is interpreted as a massless particle with an $SU(2)$-charge.
Conservation of the Minkowski-instanton number is ensured as the topological invariant (the first Pontrjagin number) owing to cobordism.

%
%
\subsection{Index in chiral representation}
This section discusses the topological index of the Yang--Mills bundle using a new $2N$-dimensional vector representation, namely a chiral representation.
A spinor-gauge bundle has a $\Z_2$-grading total space as $\S_\Sp^{\otimes N}=({\S_\Sp^+})^{\otimes N}\oplus({\S_\Sp^-})^{\otimes N}$ owing to projection operator $P^\pm_\Sp$ as discussed  in section \ref{spinerbundle}. 
The chiral representation of spinors and covariant differential operator are introduced as
\begin{align*}
\bm{\D}_{ch}&:=\gamma_\ym\hspace{.1em}
\ds_\SG\left(
    \begin{array}{cc}
   P^+_\Sp\otimes\bm{1}_\SU & 0\\
    0& P^-_\Sp\otimes\bm{1}_\SU
    \end{array}
  \right)=\left(
    \begin{array}{cc}
    0& \dSgm\otimes\bm{1}_\SU\\
    \dSgp\otimes\bm{1}_\SU & 0
    \end{array}
  \right),\\
\bm{\Psi}_{ch}&:=\left(
    \begin{array}{cc}
     \bm{\psi}^+ \\ \bm{\psi}^-
    \end{array}
\right),
\end{align*}
where $\dSgpm:=\dSg{\bcdot}P^\pm_\Sp$.
A Dirac conjugate of $\bm{\Psi}_{ch}$ is provided as
\begin{align*}
\bm{\overline\Psi}_{ch}:=\bm{\Psi}^\dagger_{ch}\gamma^0\bm{\gamma}_{\ym}
=\left(\bm{\bar{\hat{\psi}}}^-,\bm{\bar{\hat{\psi}}}^+\right).
\end{align*}
In this representation, the Lagrangian of the spinor field is provided as
\begin{align*}
\LLL_{ch}:=-\left(\hspace{.1em}\overline{\bm{\Psi}}_{ch}\hspace{.2em}i\bm{\D}_{ch}\bm{\Psi}_{ch}+
\frac{\mu}{\hbar}\hspace{.2em}\overline{\bm{\Psi}}_{ch}\bm{\Psi}_{ch}\right)\vvv,
\end{align*}
which provides the same equation of motion as those from the matter-field Lagrangian in the Yang--Mills bundle.

A kinetic term of the matter Lagrangian denoted by $K_\SG:=i\bar\psi\hspace{.1em}\ds\psi$ is Hermitian self-conjugate as follows:
For the Dirac operator given in (\ref{DSG}), its Hermitian conjugate is provided that
\begin{align}
K_\SG^\dagger&=
\left(i\hspace{.1em}\bar\psi(\iota_{{\gamma}_\Sp}d)\psi\right)^\dagger
=-i\hspace{.1em}\left(\partial_\star\psi^\dagger\right)(\gamma_\Sp^\star)^\dagger(\gamma_\Sp^0)^\dagger \psi
=-i\hspace{.1em}\left(\partial_\star\bar\psi\right)\gamma_\Sp^\star\psi,\label{Ksp}\\
&=i\hspace{.1em}\bar\psi\gamma_\Sp^\star(\partial_\star\psi)
-i\hspace{.1em}\left(\partial_\star\bar\psi\gamma_\Sp^\star\psi\right),\nonumber
\end{align}
and the last term is vanished owing to the current conservation; thus, $K_\SG$ is a Hermitian self-conjugate as $K_\SG=K_\SG^\dagger$.
Moreover, $\mathrm{Ker}(\dSgm)^\dagger\cong\mathrm{Coker}(\dSgp)=\S^-/\mathrm{Im}(\D^+)$ is provided owing to $(\ref{Ksp})$ and
\begin{align*}
i\hspace{.1em}\bar\psi^+(\iota_{{\gamma}_\Sp}d)\psi^-=
-i\hspace{.1em}\left(\partial_\star\bar\psi^+\right)\gamma_\Sp^\star\psi^-=
-i\hspace{.1em}\bar\psi^+(\overleftarrow{\iota_{{\gamma}_\Sp}d})\psi^-,
\end{align*}
where $(\overleftarrow{\bullet})$ is a differential operator acting from the left. 
Operator $\dSg$ is the Dirac operator; thus the Atyah--Singer index theorem (\textbf{Remark \ref{ASI}}) is maintained in the spinor-gauge bundle after appropriate compactification of base manifold $\M_E$.

The topological index with the Minkowski metric is defined owing to cobordism through the $\theta$-metric space.
The Dirac equation is provided in the $\theta$-metric space as 
\begin{align*}
{\D}_{\hspace{-.1em}\theta}\bm{\psi}&:=\iota_{\bm{\gamma}_\theta}(d-i\cG\hspace{.1em}\wcs_{\hspace{-.1em}\theta})\bm{\psi}\otimes\bm{1}_\SU
-{i}{c^{~}_\SU}\hspace{.1em}\iota_{\bm{\gamma}_\theta}\AAA^{~}_\SU\hspace{.2em}\bm{\psi},
\intertext{where }
\wcs_{\hspace{-.1em}\theta}&:=
\frac{i}{2}\eta^{~}_{\theta\hspace{.1em}\bcdots}\eta^{~}_{\theta\hspace{.1em}\stars}
\Varepsilon^\mu_a\omega^{~\bcdot\star}_{\mu}\hspace{.1em}
{\Ss}^{\bcdot\star}_{\hspace{-.1em}\theta}.
\end{align*}
Spinor bundle is formally extended to the $\theta$-metric space as $\left(\S^\theta_\Sp,\pi^\theta_\Sp,\M_\theta,Spin_\theta\right)$, where $\M_\theta$ is a Riemannian manifold with metric tensor $\bm\eta^{~}_\theta$ defined in (\ref{Cmetric}) and $\S^\theta_\Sp$ is a Clifford module with the Clifford algebra $\bm{\gamma}^{~}_\theta$ defined in (\ref{gamma-theta}).
At $\theta\hspace{-.2em}=\hspace{-.2em}+0$ and $\theta\hspace{-.2em}=\hspace{-.2em}1$, the $\theta$-metric gives Minkowski and Euclid spinor bundles with $Spin_\theta|_{\theta\rightarrow+0}=Spin(1,3)$ and $Spin_\theta|_{\theta\rightarrow1}=Spin(4)$, respectively.
Dirac spinor $\psi^{~}_\theta$ is presented as a section in the $\theta$-metric spinor bundle; thus, the chiral representation introduced above is extended in the $\theta$-metric spinor bundle.

First, the topological index of Eudlidean space-time is considered in the $\theta$ metric space at $\theta=1$.
Dirac operator $\bm{\D}^E_{ch}:=\bm{\D}^\theta_{ch}|_{\theta\rightarrow1}$ has the topological index introduced in section \ref{AppIndex}, where $\bm{\D}^\theta_{ch}$ is a Dirac operator in the $\theta$-metric extended chiral space.
Euclidean base-manifold $\M_E$ is assumed to be compactified appropriately.
Owing to the McKean--Singer theorem (given as \textbf{Remark \ref{McS}} in section  \ref{AppIndex}), the index of Dirac operator $\bm{\D}_{ch}^E$ is provided using the super trace such that:
\begin{align*}
ind\left(i\bm{\D}_{ch}^E\right)&=
Str\left(e^{-t\hspace{.2em}{\D}_\SG^E\hspace{.2em}{\D}_\SG^E}\right)
=\Tr\left[\Gamma_{\hspace{-.1em}E}\hspace{.2em}
e^{-t\hspace{.2em}{\D}_\SG^E\hspace{.2em}{\D}_\SG^E}\right],\\&=
\Tr\left[e^{-t\hspace{.2em}{{\D}_\SG^E}^{\hspace{-.1em}-}
\hspace{.2em}{{\D}_\SG^E}^{\hspace{-.1em}+}}\right]-
\Tr\left[e^{-t\hspace{.2em}{{\D}_\SG^E}^{\hspace{-.1em}+}
\hspace{.2em}{{\D}_\SG^E}^{\hspace{-.1em}-}}\right],\\
&=\int_{\M_E}\hat{A}(\RRR)ch(\hat\FFF^{~}_\SU).
\end{align*}
where ${\D}_\SG^{E}$ is the Dirac operator in the spinor-gauge bundle in Eudlidean space-time and ${{\D}_\SG^{E}}^\pm$ are its projections owing to the projection operators $P^\pm_E$, that is introduced in section \ref{themetric}.
Operator ${\D}_\SG^{E}\hspace{.2em}{\D}_\SG^{E}$ is an elliptic type; thus, $e^{-t\hspace{.2em}{\D}_\SG^E\hspace{.2em}{\D}_\SG^E}$ is a heat kernel.
The last equality is owing to the Atiyah--Singer index theorem given as \textbf{Remark \ref{ASI}} in section \ref{AppIndex}.

Second, we discuss the topological index with the Minkowski metric, which is not well-defined because operator $\D_\SG\hspace{.2em}\D_\SG$ is a hyperbolic type, and the Atiyah--Singer index theorem is not fulfilled.
Although the index theorem is not applied as its original shape, the topological index is defined in the Minkowski space-time by \textbf{Definition \ref{EMcobor}}.
Minkowski manifold $\M_M:=\left.\M_\theta\right|_{\theta\rightarrow+0}$ and Euclidean manifold $\M_E:=\left.\M_\theta\right|_{\theta\rightarrow1}$ are cobordant to each other with respect to manifold $W_\theta:=\M_\theta\otimes(\theta\in[0,1])$.
Therefore, the topological index in $\M_M$ is well-defined as
\begin{align*}
ind\left(i\bm{\D}_\Sp\right):=
ind\left(i\bm{\D}_{ch}^E\right)
=\int_{\M_E}\hat{A}(\RRR)ch(\hat\FFF^{~}_\SU)
\Rightarrow\int_{\M_E}ch(\hat\FFF^{~}_\SU).
\end{align*}
We note that the second Chern class is zero for know solutions of the Einstein equation; thus,  the index is given only by curvature $\hat\FFF^{~}_\SU$.
The index in the Minkowski space-time is formally denoted as
\begin{align*}
ind\left(i\bm{\D}_\Sp\right)&=
\Tr\left[\Gamma_\Sp\hspace{.2em}
e^{-t\hspace{.2em}{\D}_\SG\hspace{.2em}{\D}_\SG}\right],
\end{align*}
where $\Gamma_\Sp=\gamma^5$ in the standard notation, and it is equivalent to the standard representation of the axial anomaly\cite{PhysRevLett.42.1195,bertlmann2000anomalies}.

\section{Summary}
\begin{table}[t]
\begin{tabular}{c|cccc|l}
bundle name&bundle&conn.&curv.&sect.&physical object\\
\hline
co-Poincar\'{e}&$\left(\M,\pi_{\mathrm L},\MM,G_\cP\right)$&$\www~~$&$\RRR~~$&$\eee$&gravity\\
Spinor&$\left(\S_\Sp,\pi_\Sp,\M, Spin(1,3)\right)$&$\AAA^{~}_\Sp$&$\FFF^{~}_\Sp$&$\psi$&matter\\
Gauge&$\left(\TsM,\pi_{\SU},\M,SO(1,3)\otimes{SU(N)}\right)$&
$\AAA^{~}_\SU$&$\FFF^{~}_\SU$&$\bm\phi$&force, Higgs\\
SG&$\left(\S_\Sp^{\otimes N},\pi_{\Sp},\M, Spin(1,3)\otimes{SU(N)}\right)$&
$\AAA^{~}_\Sp$&$\FFF^{~}_\Sp$&$\bm{\psi}$&force, matter\\
Dual SG&$\left(\hat\S_\Sp^{\otimes N},\hat\pi_{\Sp},\M, Spin(1,3)\otimes{SU(N)}\right)$&
$\hat\AAA^{~}_\Sp$&$\hat\FFF^{~}_\Sp$&${\bm{\hat\psi}}$&force, dual-matter\\
Yang--Mills&\multicolumn{3}{r}{\emph{tensor products of SG and dual SG bundles}}&~&force, matter\\
\hline
\end{tabular}
\vskip 2mm
\caption{
Principal bundles introduced in this study.
In a bundle name, ``SG'' stands for ``spinor-gauge''.
}\label{bundletbl}
\end{table}
This study treated the classical Yang--Mills theory and general relativity in four-dimension.
The Chern--Weil theory concerning several principal bundles twisting to each other provides a common mathematical language to treat physical objects appearing in the Yang--Mills theory in curved space-time.
In reality, we introduced five principal bundles listed in \textbf{Table \ref{bundletbl}}.
The co-Poincar\'{e} bundle is a model of our Universe, which provides the platform of the Yang--Mills theory.
Space-time manifold $\MM$ with, in general, non-zero curvature is a smooth Riemannian four-dimensional manifold.
At each point of $\MM$, inertial manifold $\M$ with vanishing Levi-Civita connection $\Gamma^\lambda_{~\mu\nu}$ is associated as the inertial space owing to Einstein's equivalent principle.
inertial manifold $\M$ has vierbain form $\eee$ as a section that determines a metric of $\M$ and connection $\www$ to govern gravitational interaction among any energetic objects.
Other bundles provide matter and force fields on the base manifold $\M$ as sections and connections, respectively, and are twisted with the co-Poincar\'{e} bundle.

The Yang--Mills theory is represented in the spinor-gauge bundles on the inertial manifold with various structure groups of the gauge symmetry.
Connection and curvature of the bundle belonging to the fundamental representation of the structure group represent a potential function and a force field in physics, e.g., $U(1)$-gauge group as the electromagnetic force, $SU(2)$ as the weak force, and $SU(3)$ as the strong force, in physics.
The matter field is a spinor-valued section in total space $\S_\Sp$ of the spinor-gauge bundle and interacts with each other through the gauge connection.
A scalar field in the gauge bundle with the structure group of $SU(2)$ is called the Higgs field in physics, which is a section of the $SU(2)$-gauge bundle and belongs to the fundamental representation of the $SU(2)$ symmetry.
The equation of motion for the spinor fields consists of the Dirac operator; thus, the Atiyah--Singer index theorem provides the topological invariant in the spinor-gauge bundle.

The Hodge-dual of curvature forms are indispensable objects to constructing Yang-Mills theory; a force field given by curvature has a constraint owing to the Bianchi identity $d^{~}_G\FFF^{~}_G=0$; thus, a gauge curvature $\FFF^{~}_G$ does not have dynamics in the spinor-gauge bundle.
On the other hand, the dual curvature is not a curvature in the spinor-gauge bundle and has an equation of motion $d^{~}_G\hat\FFF^{~}_G=j^\bcdot\VVV_\bcdot$.
Therefore, the dual-curvature has dynamics in the spinor-gauge bundle.
In four-dimensional spaces, the Hodge-dual of a two-form object is again a two-form object; thus, the curvature two-form is decomposed into the SD and ASD parts.
A solution to the equation of motion in Euclidean space is a pure SD- or ASD curvature; thus, the equation of motion is equivalent to the Bianchi identity. Therefore, the gauge field cannot have dynamics in the gauge bundle.

When a $\Z_2$-grading discrete operator, such as the Hodge-dual operator, exists for the principal curvature, duplex superspace appears in the system.
The $\Z_2$-grading operator splits a space of curvatures into the $\Z_2$-grading superspaces $V^\pm(\Omega^2)$, namely the primary superspace, owing to eigenvalues concerning the operator.
Owing to the isomorphism of four-dimensional spaces (\ref{ismor}), a space of one-form objects also splits into two subspaces; SD- and ASD curvatures are obtained using one-form objects in one of these subspaces.
A pair of the principal curvature and its dual curvature provides the Kramers pair.
The secondary $\Z_2$-grading superspace arises using the one-dimensional Clifford algebra in the $(\TxT)$-matrix representation of the Kramers pair.
The Dirac operator is provided in the secondary superspace, and the Atiyah--Singer index theorem provides the additional topological invariants in the principal bundles.
General relativity and Yang--Mills Lagrangians are represented using supertrace in the secondary superspace.
The similarity between general relativity and Yang--Mills gauge fields, in other words, similarity among four forces in Nature, is evident in our representation.

A procedure to extract topological invariants in the secondary superspace of the principal bundles commonly also appears in material physics. 
The topological insulator's edge-bulk correspondence, such as the charge- and spin-pump phenomena, is an example appearing as a topological invariant in the secondary superspace\cite{PhysRevB.27.6083,PhysRevB.74.195312,PhysRevB.76.045302}.
In this case, the time-reversal operator plays the role of the Hodge-dual operator in this study. The electron wave function and its time-reversal provide the Kramers pair of the fermionic section.
The principal bundle consists of the Brillouin zone as the base manifold, and the vector space of the electron wave functions as the total space; it has the $U(2)$ symmetry as the structure group and the Berry connection as the principal connection.
The spin polarization is the topological invariant expressed by the first Chern number.

General relativity and the Yang--Mills theory are fundamental theories of cosmology and elementary particle physics and treat fundamental phenomena in our universe; thus, they are defined in the four-dimensional manifold with the Minkowski metric and the Bianchi identity and an equation of motion are the hyperbolic type.
On the other hand, the Atiyah--Singer index theorem holds for the elliptic-type Dirac operator on a compact manifold.
Therefore, we cannot apply the theorem on general relativity and the Yang--Mills theory as its original shape, even after appropriately compacting the space-time.
We introduced the $\theta$-metric, which smoothly connects the Minkowski space-time to Euclidean space-time.
When a compact manifold with the Minkowski metric is isotopic to that with the Eudlidean metric, two manifolds are cobordant, and characteristic numbers are cobordism invariant.
In this case, a characteristic index in Minkowski space-time is defined owing to the corresponding index in Euclidean space-time, and the index theorem ensures the index is topological invariant.
We provided several examples of topological invariants defined in the $\theta$-metric space and demonstrated topological invariants working in the Minkowski space-time and Euclidean space-time.
Hayashi utilised a similar method to prove the bulk-edge correspondence for two-dimensional type \textbf{A} and type \textbf{AII} topological insulators owing to the cobordism invariance of the index\cite{Hayashi_2017}.
The method using the $\theta$-metric may have various applications on the wide variety of physical topics.

\begin{table}[t]
\centering
\begin{tabular}{llc}
Field&Lagrangian&\textup{\#}equation\\
\hline
~&~&~\\
Gravity & $\frac{1}{\hbar\kappa}\frac{1}{2}
Str\left[\bm\FFF_{Gr}\wedge\bm\FFF_{Gr}\right]$&(\ref{EHLformula})\\
~&~&~\\
Gauge field & $-\frac{1}{2}Str[\Fym\wedge\Fym]$&(\ref{LagrangianSU})\\
~&~&~\\
Cosmological constant &
$-\frac{1}{\hbar\kappa}\Lambda_\mathrm{c}\hspace{.2em}\vvv$&
(\ref{EHLformula})\\
~&~&~\\
Matter field & $
\overline{\bm\Psi}_{\ym}\left(i\hspace{.1em}\D^{~}_\ym-
\frac{1}{\hbar}\bm{\mu}^{~}_{\ym}\right)\bm\Psi_{\ym}\hspace{.2em}\vvv$
&(\ref{LagrangianM})\\
~&~&~\\
\hline
\end{tabular}
\vskip 2mm
\caption{
A summary table of the Lagrangians appearing in the Yang--Mills theory.
}\label{Lags}
\end{table}

\section*{Acknowledgements}
I would like to thank Dr.~Y.~Sugiyama, Prof. J. Fujimoto, Prof. T. Kaneko, and Prof. F. Yuasa for their continuous encouragement and fruitful discussions.
\appendix
\section{existence of dual connection and dual spinor}\label{AppDC}
%
%
\subsection{$U(1)$ gauge group}
First, this appendix treats the most straightforward case that the $U(1)$ group in the flat space-time ($\www=0$).
In this case, the dual connection is obtained as a solution of structure equation $\hat\FFF^{~}_{\SU}=d\hat\AAA^{~}_{\SU}$, that satisfies the Bianchi identity $d\hat\FFF^{~}_{\SU}=d(d\hat\AAA^{~}_{\SU})=0$.
The existence problem of a solution for this system is equivalent to the differential-geometrical question: ``Is closed-form $\hat\FFF^{~}_{\SU}$ exact ?''.
For a contractible manifold, the answer to this question is positively given by the Poincar\'{e} lemma; the general answer is provided using the \dR theorem (See, e.g., \cite{warner2013foundations}).
The \dR theorem asserts that $k$'{th} \dR cohomology of a smooth manifold $M$, $H_{\dr}^k(M)$, is isomorphic to the real cohomology $H^k(M,\R)$.

The existence problem of the one-form object $\aaa$ such that $d\aaa=\bbb$ for given two-form object $\bbb$ is answered by a following statement\cite{warner2013foundations}:
\begin{remark}\label{deRh1}
Suppose $M$ is $n$-dimensional smooth manifold and  closed $k$-form object $\bbb$ exists in a compact submanifold of $M$; $\Sigma_\bbb\subseteq M$.
In this case, following two statements are equivalent to each other:
\begin{enumerate}
\item  $(k-1)$-form object $\aaa$ such that $\bbb=d\aaa$ exists in $\Sigma_\aaa\subseteq\Sigma_\bbb$, where $\Sigma_\aaa$ is a compact and connected manifold. 
\item The $(n-k)$'{th} relative homology of pair $(\Sigma_\aaa,\partial\Sigma_\aaa)$ is vanished;\\ $H_{n-k}(\Sigma_\aaa,\partial\Sigma_\aaa;\R)=0$, where $\partial\Sigma_\aaa\subset\Sigma_\aaa$ is a boundary of compact and connected manifold $\Sigma_\aaa$.
\end{enumerate}
\end{remark}
\begin{proof}
\underline{\emph{1.}~$\Rightarrow$~\emph{2.}}~:
Closedness of two-form $\bbb$ is consistent with an exactness $\bbb=d\aaa$ such that $d\bbb=dd\aaa=0$.
The $k$'th \dR cohomology of domain $\Sigma_\aaa$ concerning the external-derivative operation  is defined as
\begin{align*}
H^k_{\dr}(\Sigma_\aaa)
&:=Z^k(\Sigma_\aaa)/B^k(\Sigma_\aaa)=0,
\end{align*}
where
\begin{align*}
Z^k(\Sigma)&:=\mathrm{Ker}\left(d:\Omega^k(\Sigma_\aaa)\rightarrow\Omega^{k+1}(\Sigma_\aaa)\right),\\
B^k(\Sigma_\aaa)&:=\mathrm{Im}\left(d:\Omega^{k-1}(\Sigma_\aaa)\rightarrow\Omega^k(\Sigma_\aaa)\right).
\end{align*}
Owing to the \dR theorem and the Poincar\'{e} duality-theorem for a compact and connected four-dimensional manifold, there exists isomorphism:
\[
H^k_{\dr}(\Sigma_\aaa)\rightarrow H^k(\Sigma_\aaa;\R)\rightarrow H_{n-k}(\Sigma_\aaa,\partial\Sigma_\aaa;\R).
\] 
Thus, vanishing relative homology  $H_{n-k}(\Sigma_\aaa,\partial\Sigma_\aaa;\R) = 0$ is maintained.
\\
\noindent
\underline{\emph{2.}~$\Rightarrow$~\emph{1.}}~:
When $H_{n-k}(\Sigma_\aaa,\partial\Sigma_\aaa;\R)=0$, the $k$'th \dR cohomology is also vanished owing to the \dR theorem and the Poincar\'{e} duality theorem.
The vanishing $k$'th cohomology induces isomorphism between $Z^k(\Sigma_\aaa)$ and $B^k(\Sigma_\aaa)$.
If $B^k(\Sigma_\aaa)$ is empty, $Z^k(\Sigma_\aaa)$ is also empty, which contradicts to the assumption of the existence of $\bbb\in Z^k(\Sigma_\aaa)$.
Therefore, the existence of  $\aaa\in\Omega^{k-1}(\Sigma_\aaa)$ such that $B^k(\Sigma_\aaa)\cong[d\aaa]\equiv[\bbb]{\cong}Z^k(\Sigma_\aaa)$ is maintained.
\end{proof}
\noindent
For an abelian gauge symmetry such as a $U(1)$ gauge group, the structure equation is equivalent to the closedness of the curvature; thus, the existence of the dual connection is stated as the following theorem:
\begin{theorem}\label{deRh2}
Suppose Lie algebra valued two-form object $\FFF$ with an abelian gauge group in the flat space-time is given in submanifold $\Sigma_\FFF\subseteq\M$.
Following two statements are equivalent to each other:
\begin{enumerate}
\item
Lie-algebra valued one-form object $\AAA$ exists in compact manifold $\Sigma_\AAA\subseteq\Sigma_\FFF$ as a solution of structure equation $\FFF=d\AAA$, and the  principal bundle is constructed using connection $\AAA$ and curvature $\FFF$.
\item
The second homology  is vanished in $\Sigma_\AAA$; $H_2(\Sigma_{\AAA},\partial\Sigma_{\AAA};\R)=0$.
\end{enumerate}
\end{theorem}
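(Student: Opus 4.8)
\textbf{Proof proposal for Theorem \ref{deRh2}.}
The plan is to reduce this statement almost entirely to \textbf{Remark \ref{deRh1}} applied with $n=4$ and $k=2$, and then to check the one additional ingredient that \textbf{Remark \ref{deRh1}} does not by itself supply: that the one-form obtained is genuinely a connection of a principal bundle with structure group $U(1)$ (or more generally an abelian group). First I would observe that for an abelian gauge group the structure equation (\ref{adjointG}) collapses, since $\AAA\wedge\AAA=0$ when $\AAA$ is Lie-algebra valued in an abelian $\ggg$, so that $\FFF=d\AAA$ and the Bianchi identity $d\FFF=0$ is exactly the closedness of $\FFF$. Thus $\FFF$ is a closed two-form on $\Sigma_\FFF$, and the question ``does $\AAA$ with $\FFF=d\AAA$ exist on a compact connected $\Sigma_\AAA\subseteq\Sigma_\FFF$?'' is verbatim statement \textit{1.} of \textbf{Remark \ref{deRh1}} with $\bbb=\FFF$, $\aaa=\AAA$, $k=2$, $n=4$. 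The equivalence with the vanishing of $H_{4-2}(\Sigma_\AAA,\partial\Sigma_\AAA;\R)=H_2(\Sigma_\AAA,\partial\Sigma_\AAA;\R)$ is then immediate from \textbf{Remark \ref{deRh1}}.

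Next I would address the clause in statement \textit{1.} of the theorem that goes beyond mere existence of $\AAA$: that ``the principal bundle is constructed using connection $\AAA$ and curvature $\FFF$.'' Here I would invoke \textbf{Remark \ref{remA3}}: once $\AAA$ solves the structure equation for a given $\FFF$ that transforms in the adjoint representation, $\AAA$ itself transforms as a connection, hence the tuple $(\text{total space},\pi,\M,U(1))$ with this connection form is a bona fide principal $U(1)$-bundle. For an abelian group the adjoint action is trivial, so $\FFF$ and $\AAA$ are simply gauge-invariant up to the inhomogeneous term $ic^{-1}_G\,G\,dG^{-1}$, and \textbf{Remark \ref{remA3}} applies directly; I would state this as the reason the word ``bundle'' in \textit{1.} adds nothing once $\AAA$ exists. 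One should also note that $\FFF$ being \emph{globally} defined and closed on $\Sigma_\FFF$ is part of the hypothesis, so there is no integrality obstruction to worry about at the level of real cohomology — the de Rham/Poincar\'e-duality chain in \textbf{Remark \ref{deRh1}} is over $\R$, matching the real relative homology in the statement.

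Concretely the two implications would run as follows. For \textit{1.}$\Rightarrow$\textit{2.}: given $\AAA$ on compact connected $\Sigma_\AAA$ with $\FFF=d\AAA$, the second de Rham cohomology class $[\FFF]\in H^2_{\dr}(\Sigma_\AAA)$ vanishes; by the de Rham theorem $H^2_{\dr}(\Sigma_\AAA)\cong H^2(\Sigma_\AAA;\R)$ and by Poincar\'e–Lefschetz duality on the compact connected oriented four-manifold with boundary, $H^2(\Sigma_\AAA;\R)\cong H_2(\Sigma_\AAA,\partial\Sigma_\AAA;\R)$, so the latter vanishes. For \textit{2.}$\Rightarrow$\textit{1.}: reversing the same isomorphism chain gives $H^2_{\dr}(\Sigma_\AAA)=0$, hence the closed form $\FFF$ is exact on $\Sigma_\AAA$, say $\FFF=d\AAA$; then \textbf{Remark \ref{remA3}} promotes $\AAA$ to a connection and the principal bundle is constructed. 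The main obstacle I anticipate is not analytic but bookkeeping: making sure the domains line up correctly — \textbf{Remark \ref{deRh1}} produces $\aaa$ on a compact connected $\Sigma_\aaa\subseteq\Sigma_\bbb$, and one must be careful that this $\Sigma_\aaa$ is exactly the $\Sigma_\AAA$ referred to in the theorem, and that orientability of $\Sigma_\AAA$ (needed for Poincar\'e–Lefschetz duality) is either assumed or inherited from $\M$. Beyond that the proof is a direct transcription of \textbf{Remark \ref{deRh1}} together with the abelian simplification of the structure equation and \textbf{Remark \ref{remA3}}, so I would keep it short.
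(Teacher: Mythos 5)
Your proposal is correct and follows essentially the same route as the paper: reduce the abelian structure equation to exactness of the closed form $\FFF$, invoke \textbf{Remark \ref{deRh1}} (de Rham theorem plus Poincar\'e duality) for the equivalence with $H_2(\Sigma_\AAA,\partial\Sigma_\AAA;\R)=0$, and then appeal to the transformation-law remark to conclude that the resulting one-form is a genuine connection. The only difference is cosmetic: you cite \textbf{Remark \ref{remA3}} for the last step (arguably the more apt reference), whereas the paper cites \textbf{Remark \ref{remA1}}; your version also spells out the duality chain more explicitly, which the paper leaves implicit.
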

\begin{proof}
Vanishing second homology is equivalent to existence of connection $\AAA$ owing to \textbf{Remark \ref{deRh1}}.
When connection $\AAA$ is provided as a solution of structure equation $\FFF=d\AAA$ from the  given $\FFF$, it is  Lie algebra valued one-form belonging to an adjoint representation as shown in \textbf{Remark \ref{remA1}}.
\end{proof}
\begin{example}\textbf{Coulomb-type potential}\end{example}\noindent
A domain of a connection for thr Coulomb-type potential is $\Sigma_{\Aa}:=\TsM\SM\{r=0\}$, as given in \textbf{Example \ref{EXcoulomb}}, and corresponding curvature $(\ref{FU1})$ has the same domain.
For dual curvature $\hat\FFF^{~}_\Uo$ given by  (\ref{hFU1}), tensor coefficient $\hat\f^{~}_\Uo$ is also defined in $\TM$ except $r=0$ and is static (no time dependence); thus, the domain of $\hat\f^{~}_\Uo$ is $\Sigma_{\hat\f}=\TM\SM\{r=0\}\cong\R\otimes(\R^3\SM\{r=0\})$.
The second homology of a compact submanifold of the domain; $\overline{\Sigma}_{\hat\f}:=D^1{\otimes}(D^3{\SM}\{0\})$ is calculated as $H_2(\overline{\Sigma}_{\hat\f};\R)\cong H_2(D^1,S^0;\R){\oplus}H_2(D^3,S^2{\oplus}S^2;\R)=0$; thus, there exists dual connection $\hat\AAA^{~}_\Uo$ in $\Sigma_{\hat\Aa}\subseteq\Sigma_{\hat\f}$.

\begin{example}\label{EXLcurrect}\textbf{Linear electric current}\end{example}\noindent
A linear electric current makes a stationary magnetic field around the current in a flat space-time manifold, and it has a $U(1)$ connection such that:
\begin{align*}
\AAA^{~}_\Uo&=
\frac{1}{2}\log{\left(x^2+y^2\right)}\hspace{.1em}dz\in\Omega^1\left(\TsM\SM\{x=0\wedge y=0\}\right).
\end{align*}
Corresponding curvature and dual curvature are respectively provided as
\begin{align*}
\FFF^{~}_\Uo&=\frac{x}{x^2+y^2}\hspace{.1em}dx\wedge dz+\frac{y}{x^2+y^2}\hspace{.1em}dy\wedge dz\\
\overset{\HD}{\longrightarrow}\hat\FFF^{~}_\Uo&=
 i\frac{y}{x^2+y^2}\hspace{.1em}dt{\wedge}dx
-i\frac{x}{x^2+y^2}\hspace{.1em}dt{\wedge}dy,
\end{align*}  
The dual connection is provided as a solution of structure equation (\ref{eq1}) such that
\begin{align*}
\hat\AAA^{~}_\Uo&=
i\tan^{-1}{\frac{y}{x}}\hspace{.2em}dt,
\end{align*}
which is indeterminate at $x\hspace{-.1em}=\hspace{-.1em}y\hspace{-.1em}=\hspace{-.1em}0$ with any $z$; thus, its domain is $\Sigma_{\hat\Aa}\cong\R\otimes\left(\R^3\SM\R\right)$.  
The second homology of $\Sigma_{\hat\Aa}$ is the same as that of \textbf{Example \ref{EXcoulomb}} such that $H_2(\overline{\Sigma}_{\hat\Aa};\R){\cong}H_2(D^1;\R){\oplus} H_2(D^1{\otimes}S^1,T^2;\R)=0$; thus, the dual connection exists in the same domain.
A domain of the dual connection is the same as that of the dual curvature. 

The second Chern class for $\AAA^{~}_\Uo$ and $\hat\AAA^{~}_\Uo$ is provided as 
\[
c^{~}_2(\AAA^{~}_\Uo)=c^{~}_2(\hat\AAA^{~}_\Uo)=
\left((4\pi(x^2+y^2)\right)^{-2}.
\]\QED
\begin{example}\textbf{Plane wave}\end{example}\noindent
A plane wave solution in globally flat $\MM$ is considered.
A $U(1)$ connection and corresponding curvature are, respectively, provided using the Cartesian coordinate system as 
\begin{align*}
\AAA^{~}_\Uo&=\sin{(t-z)}\left(dx+dy\right),\\
\FFF^{~}_\Uo&=\cos{(t-z)}\left(dt\wedge dx+dt\wedge dy+dx\wedge dz+dy\wedge dz\right),
\end{align*}
which represent plane wave lying in a $(x,y)$-plane propagating to $z$ direction with speed $c=1$, and its dual curvature is obtained as follows:
\begin{align*}
\FFF^{~}_\Uo=\cos{(t-z)}\left(dt\wedge dx-dt\wedge dy+dx\wedge dz-dy\wedge dz\right).
\end{align*}
The domain of $\FFF^{~}_\Uo$ is the entire space of $\TsM{\cong}\R^4$, which provides the  vanishing second homology; $H_2(\R^4,\slash{0};\R)=0$. 
The dual connection is provided as follows:
\begin{align*}
\hat\AAA^{~}_\Uo=i\sin{(t-z)}\left(dx-dy\right),
\end{align*}
whose domain is also entire  $\TsM$.
%
\QED

%
%
\subsection{$SU(N)$ gauge group}
Next, we treat a non-abelian gauge with $SU(N)$ group in the flat space-time.
In this case, exactness and closedness are corresponding to the structure equation and Bianchi identity, respectively, such that:
\begin{align}
\emph{Covariant-exactness}\hspace{.3em}:&\hspace{1.1em}\hat\FFF^{~}_{\SU}\hspace{.4em}=
d\hat\AAA^{~}_{\SU}
-i\hspace{.1em}{{c}^{~}_\SU}\hspace{.1em}\hat\AAA^{~}_{\SU}\wedge\hat\AAA^{~}_{\SU}
\tag*{(\ref{eq1})\texttt{'}},\label{eq12}\\
\emph{Covariant-closedness}:&\hspace{.1em}\hat{d}_{\SU}\hat\FFF^{~}_{\SU} = 
d\hat\FFF^{~}_{\SU}-i\hspace{.1em}{{c}^{~}_\SU}\hspace{.1em}[\hat\AAA^{~}_{\SU},\hat\FFF^{~}_{\SU}]_\wedge=0
\tag*{(\ref{BianchihSU})\texttt{'}}\label{BianchihSU2},
\end{align}
where $SO(1,3)$-covariant differential $d_\www$ in (\ref{eq1}) and in (\ref{BianchihSU}) is replaced by the external derivative owing to the flat space-time.
As is the case of an abelian gauge, when the dual curvature is \emph{covariant-exact} (given by the structure equation \ref{eq12}, it is trivially \emph{covariant-closed} owing to the Bianchi identity \ref{BianchihSU2}.
The question here is that ``Does dual curvature $\hat\FFF^{~}_{\SU}$ satisfying \ref{BianchihSU2} have a solution for \ref{eq12} ?''.
A key to treat this question is the Chern--Simons form, which is introduced in the spinor-gauge bundle as follows:
\begin{definition}[Chern--Simons form]
The Chern--Simons forms in a gauge bundle is defined as a functional of $\hspace{.1em}\AAA_\SU^\#$ in the total space such that: 
\begin{align}
\cs_\SU(\AAA^\#_\SU):=\cs^I_\SU(\AAA^\#_\SU)\hspace{.2em}t_I
:=-i\left[\AAA^\#_\SU,d\AAA_\SU^\#
-i\hspace{.1em}{{c^{~}_\SU}}\hspace{.1em}\AAA_\SU^\#\wedge\AAA_\SU^\#\right]_\wedge.
\label{CSform}
\end{align}
A superscript ``{\footnotesize\textrm{\#}}'' represents a pull-buck of forms from the base manifold to the total manifold  with respect to the bundle map. 
\end{definition}
\noindent
Hereafter, ``{\footnotesize\textrm{\#}}'' and a subscript ``$\SU$'' are omitted for simplicity unless otherwise denoted.
The Chern--Simons form of a dual connection is defined by $(\ref{CSform})$ with replacing $\AAA^{~}_\SU$ to $\hat\AAA^{~}_{\SU}$.  
Discussions below are common for both principal- and dual connections.

We introduce a map 
\begin{align*}
\pi_\aaa:\Omega^p\rightarrow\Omega^{p-1}:\left[\aaa,\bbb\right]_\wedge\mapsto i\bbb,
\end{align*}
for $\aaa\in\Omega^1$ and $\bbb\in\Omega^{p-1}$.
An anti-derivative operator $\delta_\aaa:\Omega^p\rightarrow\Omega^{p-1}$ is defined in the total space of the principal bundle as follows:
\begin{align}
\delta_\aaa\WWW(\AAA)&=\pi^{~}_\aaa\left(\lim_{\varepsilon \to 0}\frac{\WWW\left(\AAA+\varepsilon\aaa\right)
-\WWW\left(\AAA\right)}{\varepsilon}\right),\label{deltaA}
\end{align}
where $\WWW(\AAA)$ is an invariant polynomial in Weil algebra.
\dR cohomology of Weil algebra with respect to anti-derivative operator $\delta_\aaa$ is trivial such that:
\begin{align}
H_{\dr}^k(W(\AAA);\delta_\aaa)=
\begin{cases}
    \R &(k=0) \\
    0&(k>0),
  \end{cases}\label{cohoweil}
\end{align}
where $W(\AAA)\in\WWW(\AAA)$ is Weil algebra appearing in the gauge bundle.

\begin{remark}\label{CSRM1}
The principal curvature form in the total space is an exact form by means of the Chern--Weil theory, and there exists the Chern--Simons form which gives the curvature form such that $\FFF=\delta_\aaa\cs$.
\end{remark}
\begin{proof}
Suppose the curvature is provided as a solution of the structure equation.
The Chern--Simons form is expressed using $\AAA$ as
\begin{align*}
\cs^I\hspace{-.3em}\left(\AAA\right)&=
f^I_{~JK}\hspace{.2em}\AAA^J\wedge\left(
d\AAA^K+\frac{{c}}{2}f^K_{~LM}\AAA^L\wedge\AAA^M\right),\\
&=\frac{1}{2}\left\{f^I_{~JK}\hspace{.2em}d\hspace{-.1em}\left(\AAA^J\wedge\AAA^K\right)
+{c}\hspace{.1em}f^I_{~JK}\hspace{.1em}f^K_{~LM}\hspace{.2em}\AAA^J\wedge\AAA^L\wedge\AAA^M\right\}.
\end{align*}
Anti-derivative operator is applied on the Chern--Simons form as follows:
\begin{align*}
\delta_\aaa\cs
=\pi^{~}_\aaa\left(f^I_{~JK}\hspace{.2em}\aaa^J\hspace{-.2em}\wedge\left(d\AAA^K
+\frac{{c}}{2}f^K_{~LM}\hspace{.2em}\AAA^L\wedge\AAA^M\right)t_I\right)
=-i\pi^{~}_\aaa\left(\left[\aaa,\FFF\right]_\wedge\right)=\FFF,
\end{align*}
where the Jacobi identity $f_{\hspace{-.1em}A\hspace{-.1em}B\hspace{-.1em}E}\hspace{.3em}f_{\hspace{-.1em}E\hspace{-.1em}C\hspace{-.1em}D}+f_{\hspace{-.1em}C\hspace{-.1em}B\hspace{-.1em}E}\hspace{.3em}f_{\hspace{-.1em}A\hspace{-.1em}E\hspace{-.1em}D}+f_{\hspace{-.1em}D\hspace{-.1em}B\hspace{-.1em}E}\hspace{.3em}f_{\hspace{-.1em}A\hspace{-.1em}C\hspace{-.1em}E}=0$ is used.
Vector space $\Omega^{2}$ is interpreted as a dual space of $\Omega^1$ with respect to a bilinear map $\Omega^1\otimes\Omega^2\rightarrow \R[\sss\uuu_N]$ such that
\begin{align*}
\langle\aaa,\FFF\rangle_\wedge&:=\int_\Sigma 
[\aaa,\FFF]_\wedge\in\R[\sss\uuu_N],
\end{align*}
where $\Sigma\subset\M$ is a three-dimensional oriented and compact submanifold of $\M$, and $\R[\sss\uuu_N]$ is polynomial algebra of Lie algebra $\sss\uuu_N$ with real coefficients.
This bilinear map is non-degenerated and induces homomorphism $\Omega^{2}\cong{\Omega^1}^\#$; thus, $\FFF$ is recognized as a one-form object in the space of connections.
Therefore, the existence of the Chern--Simons form is maintained owing to its \dR cohomology $H_{\dr}^1(\Sigma_\cs;\delta_\AAA)=0$ and the \dR theorem, where $\Sigma_\cs$ is a domain of the Chern--Simons form.
\end{proof}
\noindent

\begin{remark}\label{CSclose}
The Chern--Simons form is a closed form with respect to the external derivative such that $d\cs=0$, and it is consistent with the structure equation \textup{\ref{eq12}}. 
\end{remark}
\begin{proof}
The Bianchi identity ref{BianchihSU2} is represented using the Chern--Simons form as $d\FFF+\cs=0$, when $\FFF$ is defined using the structure equation \ref{eq12}.
Therefore, the Chern--Simons form is closed such that  $dd\FFF+d\cs=d\cs=0$ with respect to the external derivative in $\TsM$.
The external derivative of the Chern--Simons form is directly calculated as
\begin{align*}
d\cs^I&=f^I_{~JK}\left(d\AAA^J\wedge\FFF^K-\AAA^J\wedge d\FFF^K\right),\\&=
f^I_{~JK}\left(d\AAA^J\wedge\FFF^K+{c} f^K_{~LM}\AAA^J\wedge
\AAA^L\wedge\FFF^M
\right),\\ 
&=f^I_{~JK}\left(
d\AAA^J+\frac{{c}}{2}f^J_{~ML}\AAA^M\wedge\AAA^L
\right)\wedge\FFF^K=
f^I_{~JK}\hspace{.2em}\widetilde\FFF^J\wedge\FFF^K,
\end{align*}
where $\widetilde{\FFF}^J:=d\AAA^J+{c} f^J_{~ML}\AAA^M\wedge\AAA^L/2$.
In this calculaiotn, the Bianchi identity \ref{BianchihSU2} and the Jacobi identity are used.
We note that the structure equation \ref{eq12} is not assumed here. 
When $\widetilde\FFF^J\wedge\FFF^K$ is symmetric in exchange $J$ for $K$, relation $d\cs^I=0$ is maintained.
Therefore, closedness of the Chern--Simon form is consistent with $\widetilde\FFF^I=\FFF^I=d\AAA^I-i\hspace{.1em}{{c}}\hspace{.3em}\left[\AAA\wedge\AAA\right]^I$.
\end{proof}

Whereas $\ref{eq12}\Rightarrow (d\cs=0)$ is trivial, inverse statement $(d\cs=0)\Rightarrow \ref{eq12}$ for given $\FFF$ is not always true.
When non-singular two-form objects $\FFF^I$ are given, the existence of one-form objects $\AAA^I$ fulfilling both \ref{BianchihSU2} and \ref{eq12} is discussed here.
The Bianchi identity \ref{BianchihSU2} is represented using trivial bases in $\TsM$ such that
\begin{align*}
t_I\left(
\partial_\bcdot\f^I_\bcdots+{c}\hspace{.1em}f^I_{\hspace{.1em}JK}\Aa^J_\bcdot\f^K_\bcdots
\right)\eee^\bcdot\wedge\eee^\bcdot\wedge\eee^\bcdot&=0,
\end{align*}
which is equivalent to
\begin{align}
\epsilon^{a\bcdot\bcdots}\left(
\partial_\bcdot\f^I_\bcdots+{c}\hspace{.1em}f^I_{\hspace{.1em}JK}\Aa^J_\bcdot\f^K_\bcdots
\right)&=0.\label{bicomp}
\end{align}
We note that the flat space-time yields $d\eee=0$.
Equation (\ref{bicomp}) is represented using the dual curvature as
\begin{align*}
\partial_\bcdot[\hat{\f}^I]^{a\bcdot}
+{c}\hspace{.1em}f^I_{\hspace{.1em}JK}[\hat{\f}^K]^{a\bcdot}\Aa^J_\bcdot
=0.
\end{align*}
Unknown function $\Aa^I_a$ has $N_\Aa:=N_\textrm{ST}\times\hspace{-.2em}N_{SU}$ independent components, where $N_\textrm{ST}$ is a dimension of the space-time manifold and $N_{SU}=N^2-1$ is total number of freedom for the $SU(N)$ symmetry.
Thus, equation (\ref{bicomp}) has $(N_\Aa\hspace{-.2em}\times\hspace{-.2em}N_\Aa)$-matrix representation such that
\begin{align}
\left[\partial\hat{\f}\right]^{(I;a)}
+{c}\left[f\hat{\f}\right]^{(I;a)}_{(J;\bcdot)}\left[\Aa\right]^{(J;\bcdot)}
=0,
\label{bicomp3}
\end{align}
where
\begin{align*}
\left[\partial\hat{\f}\right]^{(I;a)}&:=
\eta^{a\star}\eta^{\bcdots}\partial_\bcdot\hat{\f}^I_{\star\bcdot},\hspace{1em}
\left[f\hat{\f}\right]^{(I;a)}_{(J;b)}:=
\eta^{a\bcdot}f^I_{\hspace{.1em}JK}\hspace{.1em}\hat{\f}^K_{\bcdot b},\\
\left[\Aa\right]^{(J;a)}&:=\eta^{a\bcdot}\Aa^J_\bcdot,
\end{align*}
and $(I;a):=N_\SU\hspace{-.2em}\times\hspace{-.2em}(I\hspace{-.2em}-\hspace{-.2em}1)\hspace{-.2em}+\hspace{-.2em}a$ runs from 1 to $N_\Aa$.
Suppose all $\f^I_\bullets$ (and thus, $\partial_\bullet\f^I_\bullets$ too) are given.
If matrix $[\hat{\f}]$ is invertible, $[f\hat{\f}]$ is also invertible; thus $\AAA$ is uniquely provided from given $\FFF$.
It is proven for the $SU(2)$ Yang--Mills theory:

\begin{example}\label{su2sol}\textbf{Solution of equation $(\ref{bicomp3})$ for SU(2)}\end{example}\vskip -4mm\noindent
Matrix $\left[f\hat{\f}\right]$ has a block representation such that
\begin{align*}
\left[f\hat{\f}\right]=
\left(
    \begin{array}{ccc}
      \hspace{1em}\bm{0}   & \hspace{1em}\left[\hat{\f}^3\right] & -\left[\hat{\f}^2\right]\\
   -\left[\hat{\f}^3\right] & \hspace{1em}\bm{0}    &  \hspace{1em}\left[\hat{\f}^1\right]\\
      \hspace{1em}\left[\hat{\f}^2\right]&-\left[\hat{\f}^1\right] &    \hspace{1em}\bm{0}
    \end{array}
\right),
\end{align*}
where $\bm{0}$ is a $(4\times4)$ zero-matrix.
Here, matrix
\begin{align*}
\tilde{\bm{F}}:=\left(
\begin{array}{ccc}
\bm{1}&\left[\hat{\f}^2\right]\left[\hat{\f}^1\right]^{-1}&
\left[\hat{\f}^3\right]\left[\hat{\f}^1\right]^{-1}\\
\left[\hat{\f}^1\right]\left[\hat{\f}^2\right]^{-1}&\bm{1}&
\left[\hat{\f}^3\right]\left[\hat{\f}^2\right]^{-1}\\
\left[\hat{\f}^1\right]\left[\hat{\f}^3\right]^{-1}&
\left[\hat{\f}^2\right]\left[\hat{\f}^3\right]^{-1}&\bm{1}
\end{array}
\right),
\end{align*}
are introduced, where $\bm{1}$ is $(4\times4)$ unit-matrix and $[\bullet]^{-1}$ is inverse of matrix $\bullet$.
Notably, all $[\f^\bullet]$ are assumed to be invertible.
Matrix $\left[f\hat{\f}\right]$ is block-diagonalized using $\tilde{\bm{F}}$ as
\begin{align*}
&\tilde{\bm{F}}.\left[f\hat{\f}\right]\\&=\left(
\begin{array}{ccc}
\left[\hat{\f}^3\right]\left[\hat{\f}^1\right]^{-1}\left[\hat{\f}^2\right]
&0&0\\
0&\left[\hat{\f}^1\right]\left[\hat{\f}^2\right]^{-1}\left[\hat{\f}^3\right]
&0\\
0&0&\left[\hat{\f}^2\right]\left[\hat{\f}^3\right]^{-1}\left[\hat{\f}^1\right]
\end{array}
\right)\\
&~~-
\left(
\begin{array}{ccc}
(2\leftrightarrow 3)
&0&0\\
0&(3\leftrightarrow 1)
&0\\
0&0&(1\leftrightarrow 2)
\end{array}
\right)
\end{align*}
Matrix $[\hat{\f}^I]$ are anti-symmetric as $[\hat{\f}^I]^T=-[\hat{\f}^I]$ and $([\hat{\f}^I]^{-1})^T=-[\hat{\f}^I]^{-1}$; thus, diagonal block-matrices are symmetric such that
\begin{align*}
\tilde{\bm{F}}_{J}&:=[\hat{\f}^I][\hat{\f}^J]^{-1}[\hat{\f}^K]-(K \leftrightarrow I),\\
\tilde{\bm{F}}_{J}^T&=[\hat{\f}^K]^T([\hat{\f}^J]^{-1})^T[\hat{\f}^I]^T-(K \leftrightarrow I)
,\\&=
-[\hat{\f}^K][\hat{\f}^J]^{-1}[\hat{\f}^I]+(K \leftrightarrow I)=\tilde{\bm{F}}_{J}.
\end{align*}
Therefore, diagonal blocks of $\tilde{\bm{F}}.[f\hat{\f}]$ are invertible matrices; thus, $[\Aa]$ has a unique solution.
\QED

Field matrices $\f^I$ are anti-symmetric $(N_\textrm{ST}\hspace{-.2em}\times\hspace{-.2em}N_\textrm{ST})$-matrices; thus, they have $N_\SU\times N_\textrm{ST}\hspace{-.2em}\times\hspace{-.2em}(N_\textrm{ST}-1)/2$ independent components in total.
On the other hand, $\Aa^I$ has $(N_\SU\times N_\textrm{ST})$-independent components.
If curvature $\FFF$ is provided from connection $\AAA$ through the structure equation, all component of $\f^I$ are not independent; there are $N_\SU\times N_\textrm{ST}\hspace{-.2em}\times\hspace{-.2em}(N_\textrm{ST}-3)/2$ constraints.
In this case, algebraic equation $(\ref{bicomp3})$ has a unique solution that fulfils the structure equation.
When arbitrary anti-symmetric matrices $\f^I$ are given without specifying $\Aa^I$, the solution of algebraic equation $(\ref{bicomp3})$ does not fulfil \ref{eq12} in general.
A necessary condition to realize a relation $\ref{BianchihSU2}\Rightarrow\ref{eq12}$ is provided below.

\begin{theorem}\label{deRhSUN}
Suppose $SU(N)$ Lie-algebra valused non-singular two-form object $\FFF$ given in $\Sigma_\f\subseteq\TsM$ is an adjoint representation of $SU(N)$.
Following two statements are equivalent:
\begin{enumerate}
\item
$\FFF$ is an $SU(N)$-curvature associated with $SU(N)$-connection $\AAA$.
\item
Solution of algebraic equation  $d\FFF-i\hspace{.1em}{{c}}\hspace{.1em}[\AAA,\FFF]_\wedge=0$ with respect to $\AAA$ 
 in compact manifold $\Sigma_\AAA$ has the vanishing first homology  such that;\\ $H_1(\Sigma_\AAA,\partial\Sigma_\AAA;\R)=0$.
\end{enumerate}
\end{theorem}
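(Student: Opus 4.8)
The plan is to mirror the structure of \textbf{Theorem \ref{deRh2}} (the abelian case) but replace the \dR cohomology argument with the Chern--Simons machinery of \textbf{Remarks \ref{CSRM1}, \ref{CSclose}}, since for a non-abelian group the structure equation is no longer simply ``$d\AAA=\FFF$'' but the covariant-exactness $d\AAA-i{c}[\AAA,\AAA]_\wedge/2=\FFF$. First I would restate the desired equivalence in the dual-language of the $(N_\Aa\times N_\Aa)$-matrix equation $(\ref{bicomp3})$: the covariant-closedness $\hat{d}_\SU\FFF=0$ is precisely the linear algebraic system $[\partial\hat\f]^{(I;a)}+{c}[f\hat\f]^{(I;a)}_{(J;\bcdot)}[\Aa]^{(J;\bcdot)}=0$ for the unknown connection coefficients $\Aa^I_a$, as derived just above the statement. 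So the theorem reduces to: this system admits a solution $\AAA$ living on a compact $\Sigma_\AAA$ that \emph{also} satisfies the structure equation $(\ref{eq1})'$ if and only if $H_1(\Sigma_\AAA,\partial\Sigma_\AAA;\R)=0$.

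For the direction \textit{1.}~$\Rightarrow$~\textit{2.}, I would argue as follows. Assume $\FFF$ is a genuine $SU(N)$-curvature coming from a connection $\AAA$. By \textbf{Remark \ref{CSRM1}} the curvature is exact in Weil algebra, $\FFF=\delta_\aaa\cs$, and by \textbf{Remark \ref{CSclose}} the Chern--Simons form is $d$-closed, with closedness \emph{equivalent} to the structure equation. The key move is to recognise, exactly as in the proof of \textbf{Remark \ref{CSRM1}}, that $\Omega^2$ is dual to $\Omega^1$ via the bilinear pairing $\langle\aaa,\FFF\rangle_\wedge:=\int_\Sigma[\aaa,\FFF]_\wedge$ over a three-dimensional oriented compact $\Sigma\subset\M$, so that $\FFF$ is a one-form object in the space of connections and the existence/uniqueness of $\AAA$ is governed by the first anti-derivative cohomology $H^1_{\dr}(\Sigma_\cs;\delta_\AAA)$, which vanishes by $(\ref{cohoweil})$. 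Then, translating from this Weil-algebra cohomology to ordinary homology of $\Sigma_\AAA$ via the \dR theorem and Poincar\'e duality on the compact connected four-manifold (here with $k=3$, $n=4$, so $H^3_{\dr}(\Sigma_\AAA)\cong H_1(\Sigma_\AAA,\partial\Sigma_\AAA;\R)$), yields the vanishing first relative homology.

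For \textit{2.}~$\Rightarrow$~\textit{1.}, I would run the implication backwards: vanishing $H_1(\Sigma_\AAA,\partial\Sigma_\AAA;\R)$ forces, through Poincar\'e duality and the \dR theorem, the vanishing of the relevant cohomology, hence of $H^1_{\dr}$ in the anti-derivative complex; this gives an isomorphism between the covariant cocycles and coboundaries in the Chern--Simons complex, so the covariantly-closed $\FFF$ is in fact covariantly-exact, i.e.\ there exists $\AAA$ with $\FFF=d\AAA-i{c}[\AAA,\AAA]_\wedge/2$. Finally, that $\AAA$ transforms as an $SU(N)$-connection (adjoint inhomogeneous law) follows from \textbf{Remark \ref{remA3}} applied to the dual curvature, since $\FFF$ was assumed to transform in the adjoint representation; thus the principal bundle is genuinely constructed from $(\AAA,\FFF)$.

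The main obstacle I anticipate is the passage from the linear-algebraic invertibility picture (where \textbf{Example \ref{su2sol}} shows that generic non-singular $\hat\f^I$ give a \emph{unique} $[\Aa]$ solving $(\ref{bicomp3})$, but that solution need not satisfy the nonlinear structure equation) to a clean cohomological criterion that exactly captures \emph{when} the algebraic solution is compatible with $(\ref{eq1})'$. Counting degrees of freedom shows there are $N_\SU\times N_\textrm{ST}(N_\textrm{ST}-3)/2$ constraints distinguishing ``curvature built from a connection'' from ``arbitrary covariantly-closed two-form''; the delicate step is verifying that the vanishing of $H_1(\Sigma_\AAA,\partial\Sigma_\AAA;\R)$ is precisely the obstruction to these constraints being met globally, rather than merely a pointwise/local statement. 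This requires care in identifying the Chern--Simons form's domain $\Sigma_\cs$ with $\Sigma_\AAA$ and in checking that the non-degeneracy of the pairing $\langle\cdot,\cdot\rangle_\wedge$ survives on the (possibly non-contractible, boundary-carrying) compact manifold $\Sigma_\AAA$.
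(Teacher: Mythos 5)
There is a genuine gap in your argument, and it sits at the centre of both directions: you propose to ``translate'' the anti-derivative (Weil-algebra) cohomology $H_{\dr}^1(W(\AAA);\delta_\aaa)$ of (\ref{cohoweil}) into the topological relative homology $H_1(\Sigma_\AAA,\partial\Sigma_\AAA;\R)$ via the \dR theorem and Poincar\'e duality. No such translation exists: the triviality (\ref{cohoweil}) is an algebraic property of the Weil complex with respect to $\delta_\aaa$ and holds for \emph{any} base domain, whereas $H_1(\Sigma_\AAA,\partial\Sigma_\AAA;\R)$ is a topological invariant of $\Sigma_\AAA$ that can certainly be non-zero. If your translation were valid, condition \textit{2.} would be automatically satisfied and the theorem would assert that every non-singular adjoint-valued two-form is a curvature, which is false and contradicts the paper's own observation that an arbitrary covariantly-closed $\FFF$ given without specifying $\AAA$ does not in general satisfy $(\ref{eq1})'$. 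The same conflation undermines your converse direction: the step ``vanishing $H_1$ gives an isomorphism between covariant cocycles and coboundaries, so covariantly-closed implies covariantly-exact'' presupposes a cohomology theory for the covariant differential, but $d_G$ does not square to zero (\ref{Bianchi}), so there is no such complex in which covariant-closedness could be upgraded to covariant-exactness.

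The paper's proof follows a different, and more elementary, chain. For \textit{1.}~$\Rightarrow$~\textit{2.} it only notes that a genuine curvature obeys the Bianchi identity, so the connection itself solves the algebraic equation. For \textit{2.}~$\Rightarrow$~\textit{1.} it first exploits the non-singularity of the tensor coefficients to solve the linear system (\ref{bicomp3}) \emph{uniquely} for $\AAA$ (the mechanism illustrated in \textbf{Example \ref{su2sol}}); it then rewrites the algebraic equation as the three-form identity $\widetilde\cs=-d\FFF$ with $\widetilde\cs:=-i\hspace{.1em}{c}\hspace{.1em}[\AAA,\FFF]_\wedge$ and invokes \textbf{Remark \ref{deRh1}} with $k=3$, $n=4$ --- this is the only place where $H_1(\Sigma_\AAA,\partial\Sigma_\AAA;\R)=0$ enters, as the obstruction for an ordinary $d$-closed three-form, not for any ``covariant'' complex --- and finally identifies $\widetilde\cs$ with the Chern--Simons form of \textbf{Remark \ref{CSclose}}, whose closedness is what encodes the structure equation $(\ref{eq1})'$. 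Your outline mentions the right ingredients (the pairing of \textbf{Remark \ref{CSRM1}}, the matrix form of the Bianchi identity, the constraint counting), but the deductive route you actually propose runs through the ill-defined cohomological identification above, so as written the proof does not go through; to repair it you would need to replace that identification by the non-singularity/unique-solution step and the three-form exactness argument.
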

\begin{proof}
\underline{\emph{1.}~$\Rightarrow$~\emph{2.}}: When $\FFF$ is a curvature, it is represented using structure equation $\FFF=d\AAA-i\hspace{.1em}{{c}}\hspace{.1em}\AAA\wedge\AAA$ and fulfils the Bianchi identity  $d\FFF-i\hspace{.1em}{{c}}\hspace{.1em}[\AAA,\FFF]_\wedge=0$.
~\\
\noindent
\underline{\emph{2.}~$\Rightarrow$~\emph{1.}}: 
When non-singular tensors $\f^I_{ab}$ for $\FFF=\f^I_\bcdots(\eee^\bcdot\wedge\eee^\bcdot){t_I}$ are provided, equation $d\FFF-i\hspace{.1em}{{c}}\hspace{.1em}[\AAA,\FFF]_\wedge=0$ has a unique solution for $\Aa^I_a$.
When $\Sigma_\AAA$ has the vanished first homology; $H_1(\Sigma_\AAA,\partial\Sigma_\AAA;\R)=0$, the statement \emph{2}.\@ equivalently asserts that the exact form $\widetilde\cs:=-i\hspace{.1em}{{c}}\hspace{.1em}[\AAA,\FFF]_\wedge$ exists in oriented closed $\Sigma_{\widetilde\cs}$ corresponding to given $\FFF$ such that $\widetilde\cs=-d\FFF$. 
Therefore, the first-homology of $\widetilde\cs$ is vanished and $\widetilde\cs$ trivially induces exact form $d\FFF$ owing to \textbf{Remark~\ref{deRh1}}. 
On the other hand, statement \emph{1}.\@ also uniquely induces the Chern--Simons form, which fulfils the Bianchi identity $d\FFF+\cs=0$; thus, $\widetilde\cs=\cs$ is fulfilled.
Therefore, the theorem is maintained.
\end{proof}

%
%
\subsection{$SU(N)$ gauge group in curved space-time}
Finally, non-abelian gauge of $SU(N)$ group in the curved space-time is considered.
Exactness and closedness are represented, respectively, such that:
\begin{subequations}
\begin{align}
\emph{Covariant\hspace{.5em}exactness}\hspace{.3em}:&\hspace{1.3em}\FFF=
d\AAA+\cG\www\wedge\AAA
-i\hspace{.1em}{{c}}\hspace{.2em}\AAA\wedge\AAA,\vspace{1mm}\\
\emph{Covariant\hspace{.5em}closedness}:&\hspace{.1em}d_G\FFF=
d\FFF+\cG[\www,\FFF]_\wedge-i\hspace{.1em}{{c}}\hspace{.1em}[\AAA,\FFF]_\wedge=0.
\end{align}
\end{subequations}
A component representation of, e.g., $\www\wedge\AAA$, using the trivial bases in $\TsM$ is provided as
\begin{align*}
\www\wedge\AAA&=[\www\wedge\AAA ]_{ab}\hspace{.1em}\eee^a\wedge\eee^b=
\left(\eta_{b\bcdot}\hspace{.1em}\omega_\mu^{\hspace{.3em}\bcdot\star}\Aa^I_\star\Varepsilon^\mu_a\right)
\hspace{.1em}\eee^a\hspace{-.1em}\wedge\eee^b\hspace{.2em}t_I.
\end{align*}
Here, an approximation that spin form $\www$ is a external field that has constant values independent from $\AAA$ and $\FFF$, is exploited.
In this case, the Chern--Simons form is defied as
\begin{align*}
\cs\hspace{-.2em}\left(\www,\AAA\right):=
[\cG\www-i\hspace{.1em}{{c}}\hspace{.1em}\AAA,\FFF]_\wedge,
\end{align*}
and thus, its closedness with respect to the external derivative is maintained; $d\cs\hspace{-.2em}\left(\www,\AAA\right)=0$.
Anti-derivative operator $\delta_{\www\otimes\AAA}$ is defined similarly as (\ref{deltaA}), and the exactness with respect to the anti-derivative operator $\delta_{\www\otimes\AAA}\hspace{.1em}\cs\hspace{-.2em}\left(\www,\AAA\right)=\FFF$ is maintained.
\textbf{Theorem \ref{deRhSUN}} is also maintained in this case because spin form $\www$ is assumed to be constant valued.

\subsection{Existence of dual spinor}
It is known that existence of a spin structure in a  bundle is distinguished owing to the first and second Stiefel--Whitney classes\cite{milnor1974characteristic}:
\begin{remark}
Suppose $w^i(M)\in H^i(M,\Z_2)$ is $i^{th}$ Stiefel--Whitney class of orientable manifold $M$.
\begin{enumerate}
\item Manifold $M$ has a spin structure, if and only if 
\begin{align*}
w^2(M)\in H^2(M,\Z_2)\cong Hom(H_2(M;\Z_2),\Z_2)
\end{align*}
vanishes; $w^2(M)=0$.
\item When $M$ has a spin structure, each spin structure has one-to-one correspondence to
\begin{align*}
H^1(M,\Z_2)\cong Hom(\pi_1(M),\Z_2),
\end{align*}
where $\pi_1(M)$ is fundamental group of $M$. 
\end{enumerate}
\end{remark}
\begin{proof}
For a proof of the remark, see, e.g., chapter 4, section 5 in Ref.\cite{cohen1998}
\end{proof}
For a three-dimensional manifold $\M^3_{\overline{r=0}}:=\R^3\SM\{r=0\}$, any closed surface $\Sigma^2_{\overline{r=0}}\subset\M^3_{\overline{r=0}}$ such that $[\Sigma^2_{\overline{r=0}}]\in H_2(\M^3_{\overline{r=0}},\Z_2)$ is homologous to $S^2$, and any vector bundle in $S^2$ is isomorphic to $\pi_1(SO(3))$.
Therefore, both first and second Stiefel--Whitney classes vanish, and only one spin structure exists in $\M^3_{\overline{r=0}}$.
As shown in \textbf{Example~\ref{EXcoulomb}}, the Coulomb type potential is defined in $\R\otimes\M^3_{\overline{r=0}}$; thus, there exists a spinorial section corresponding to an electric monopole in the Yang--Mills theory with $U(1)$-gauge group (an electron in electromagnetic theory).
On the other hand, a manifold $\M^3_{\overline{x=y=0}}:=\R^3\SM\{x=y=0\}$ has a non-trivial two-dimensional surface such as a torus winding around the Dirac string.
Therefore, the second Stiefel--Whitney class does not vanish; $w^2(\M^3_{\overline{x=y=0}})\cong\Z$; thus, the spin structure does not exist in a dual-bundle.
Therefore, in a space with an isolated magnetic monopole, an electron does not exist.
If a pair of magnetic monopoles with opposite charges makes a magnetic dipole, the second Stiefel--Whitney class is trivial; thus, a spin structure exists.

\bibliographystyle{bmc-mathphys} 
\bibliography{Topology-GYM}      


\begin{thebibliography}{53}
\ifx \bisbn   \undefined \def \bisbn  #1{ISBN #1}\fi
\ifx \binits  \undefined \def \binits#1{#1}\fi
\ifx \bauthor  \undefined \def \bauthor#1{#1}\fi
\ifx \batitle  \undefined \def \batitle#1{#1}\fi
\ifx \bjtitle  \undefined \def \bjtitle#1{#1}\fi
\ifx \bvolume  \undefined \def \bvolume#1{\textbf{#1}}\fi
\ifx \byear  \undefined \def \byear#1{#1}\fi
\ifx \bissue  \undefined \def \bissue#1{#1}\fi
\ifx \bfpage  \undefined \def \bfpage#1{#1}\fi
\ifx \blpage  \undefined \def \blpage #1{#1}\fi
\ifx \burl  \undefined \def \burl#1{\textsf{#1}}\fi
\ifx \doiurl  \undefined \def \doiurl#1{\textsf{#1}}\fi
\ifx \betal  \undefined \def \betal{\textit{et al.}}\fi
\ifx \binstitute  \undefined \def \binstitute#1{#1}\fi
\ifx \binstitutionaled  \undefined \def \binstitutionaled#1{#1}\fi
\ifx \bctitle  \undefined \def \bctitle#1{#1}\fi
\ifx \beditor  \undefined \def \beditor#1{#1}\fi
\ifx \bpublisher  \undefined \def \bpublisher#1{#1}\fi
\ifx \bbtitle  \undefined \def \bbtitle#1{#1}\fi
\ifx \bedition  \undefined \def \bedition#1{#1}\fi
\ifx \bseriesno  \undefined \def \bseriesno#1{#1}\fi
\ifx \blocation  \undefined \def \blocation#1{#1}\fi
\ifx \bsertitle  \undefined \def \bsertitle#1{#1}\fi
\ifx \bsnm \undefined \def \bsnm#1{#1}\fi
\ifx \bsuffix \undefined \def \bsuffix#1{#1}\fi
\ifx \bparticle \undefined \def \bparticle#1{#1}\fi
\ifx \barticle \undefined \def \barticle#1{#1}\fi
\ifx \bconfdate \undefined \def \bconfdate #1{#1}\fi
\ifx \botherref \undefined \def \botherref #1{#1}\fi
\ifx \url \undefined \def \url#1{\textsf{#1}}\fi
\ifx \bchapter \undefined \def \bchapter#1{#1}\fi
\ifx \bbook \undefined \def \bbook#1{#1}\fi
\ifx \bcomment \undefined \def \bcomment#1{#1}\fi
\ifx \oauthor \undefined \def \oauthor#1{#1}\fi
\ifx \citeauthoryear \undefined \def \citeauthoryear#1{#1}\fi
\ifx \endbibitem  \undefined \def \endbibitem {}\fi
\ifx \bconflocation  \undefined \def \bconflocation#1{#1}\fi
\ifx \arxivurl  \undefined \def \arxivurl#1{\textsf{#1}}\fi
\csname PreBibitemsHook\endcsname

\bibitem{PhysRev.96.191}
\begin{barticle}
\bauthor{\bsnm{Yang}, \binits{C.N.}},
\bauthor{\bsnm{Mills}, \binits{R.L.}}:
\batitle{Conservation of isotopic spin and isotopic gauge invariance}.
\bjtitle{Phys. Rev.}
\bvolume{96},
\bfpage{191}--\blpage{195}
(\byear{1954}).
doi:\doiurl{10.1103/PhysRev.96.191}
\end{barticle}
\endbibitem

\bibitem{Atiyah:1968mp}
\begin{barticle}
\bauthor{\bsnm{Atiyah}, \binits{M.F.}},
\bauthor{\bsnm{Singer}, \binits{I.M.}}:
\batitle{{The Index of elliptic operators. 1}}.
\bjtitle{Annals Math.}
\bvolume{87},
\bfpage{484}--\blpage{530}
(\byear{1968}).
doi:\doiurl{10.2307/1970715}
\end{barticle}
\endbibitem

\bibitem{Atiyah:1968rj}
\begin{barticle}
\bauthor{\bsnm{Atiyah}, \binits{M.F.}},
\bauthor{\bsnm{Segal}, \binits{G.B.}}:
\batitle{{The Index of elliptic operators. 2.}}
\bjtitle{Annals Math.}
\bvolume{87},
\bfpage{531}--\blpage{545}
(\byear{1968}).
doi:\doiurl{10.2307/1970716}
\end{barticle}
\endbibitem

\bibitem{Atiyah:1967ih}
\begin{barticle}
\bauthor{\bsnm{Atiyah}, \binits{M.F.}},
\bauthor{\bsnm{Singer}, \binits{I.M.}}:
\batitle{{The Index of elliptic operators. 3.}}
\bjtitle{Annals Math.}
\bvolume{87},
\bfpage{546}--\blpage{604}
(\byear{1968}).
doi:\doiurl{10.2307/1970717}
\end{barticle}
\endbibitem

\bibitem{Atiyah:1971rm}
\begin{barticle}
\bauthor{\bsnm{Atiyah}, \binits{M.F.}},
\bauthor{\bsnm{Singer}, \binits{I.M.}}:
\batitle{{The Index of elliptic operators. 5.}}
\bjtitle{Annals Math.}
\bvolume{93},
\bfpage{139}--\blpage{149}
(\byear{1971}).
doi:\doiurl{10.2307/1970757}
\end{barticle}
\endbibitem

\bibitem{Atiyah:1970ws}
\begin{barticle}
\bauthor{\bsnm{Atiyah}, \binits{M.F.}},
\bauthor{\bsnm{Singer}, \binits{I.M.}}:
\batitle{{The Index of elliptic operators. 4}}.
\bjtitle{Annals Math.}
\bvolume{93},
\bfpage{119}--\blpage{138}
(\byear{1971}).
doi:\doiurl{10.2307/1970756}
\end{barticle}
\endbibitem

\bibitem{atiyah1975spectral}
\begin{bchapter}
\bauthor{\bsnm{Atiyah}, \binits{M.F.}},
\bauthor{\bsnm{Patodi}, \binits{V.K.}},
\bauthor{\bsnm{Singer}, \binits{I.M.}}:
\bctitle{{Spectral asymmetry and Riemannian geometry. I}}.
In: \bbtitle{Mathematical Proceedings of the Cambridge Philosophical Society},
vol. \bseriesno{77},
pp. \bfpage{43}--\blpage{69}
(\byear{1975}).
\bcomment{Cambridge University Press}
\end{bchapter}
\endbibitem

\bibitem{atiyah_patodi_singer_1975}
\begin{barticle}
\bauthor{\bsnm{Atiyah}, \binits{M.F.}},
\bauthor{\bsnm{Patodi}, \binits{V.K.}},
\bauthor{\bsnm{Singer}, \binits{I.M.}}:
\batitle{{Spectral asymmetry and Riemannian geometry. II}}.
\bjtitle{Mathematical Proceedings of the Cambridge Philosophical Society}
\bvolume{78}(\bissue{3}),
\bfpage{405}--\blpage{432}
(\byear{1975}).
doi:\doiurl{10.1017/S0305004100051872}
\end{barticle}
\endbibitem

\bibitem{atiyah_patodi_singer_1976}
\begin{barticle}
\bauthor{\bsnm{Atiyah}, \binits{M.F.}},
\bauthor{\bsnm{Patodi}, \binits{V.K.}},
\bauthor{\bsnm{Singer}, \binits{I.M.}}:
\batitle{{Spectral asymmetry and Riemannian geometry. III}}.
\bjtitle{Mathematical Proceedings of the Cambridge Philosophical Society}
\bvolume{79}(\bissue{1}),
\bfpage{71}--\blpage{99}
(\byear{1976}).
doi:\doiurl{10.1017/S0305004100052105}
\end{barticle}
\endbibitem

\bibitem{SCHWARZ1977172}
\begin{barticle}
\bauthor{\bsnm{Schwarz}, \binits{A.S.}}:
\batitle{{On regular solutions of Euclidean Yang--Mills equations}}.
\bjtitle{Physics Letters B}
\bvolume{67}(\bissue{2}),
\bfpage{172}--\blpage{174}
(\byear{1977}).
doi:\doiurl{10.1016/0370-2693(77)90095-8}
\end{barticle}
\endbibitem

\bibitem{Witten:1994cg}
\begin{barticle}
\bauthor{\bsnm{Witten}, \binits{E.}}:
\batitle{{Monopoles and four manifolds}}.
\bjtitle{Math. Res. Lett.}
\bvolume{1},
\bfpage{769}--\blpage{796}
(\byear{1994}).
doi:\doiurl{10.4310/MRL.1994.v1.n6.a13}.
\arxivurl{hep-th/9411102}
\end{barticle}
\endbibitem

\bibitem{donaldson1983}
\begin{barticle}
\bauthor{\bsnm{Donaldson}, \binits{S.K.}}:
\batitle{An application of gauge theory to four-dimensional topology}.
\bjtitle{J. Differential Geom.}
\bvolume{18}(\bissue{2}),
\bfpage{279}--\blpage{315}
(\byear{1983}).
doi:\doiurl{10.4310/jdg/1214437665}
\end{barticle}
\endbibitem

\bibitem{PhysRevLett.42.1195}
\begin{barticle}
\bauthor{\bsnm{Fujikawa}, \binits{K.}}:
\batitle{Path-integral measure for gauge-invariant fermion theories}.
\bjtitle{Phys. Rev. Lett.}
\bvolume{42},
\bfpage{1195}--\blpage{1198}
(\byear{1979}).
doi:\doiurl{10.1103/PhysRevLett.42.1195}
\end{barticle}
\endbibitem

\bibitem{ALVAREZGAUME1984449}
\begin{barticle}
\bauthor{\bsnm{Alvarez-Gaum\'{e}}, \binits{L.}},
\bauthor{\bsnm{Ginsparg}, \binits{P.}}:
\batitle{The topological meaning of non-abelian anomalies}.
\bjtitle{Nuclear Physics B}
\bvolume{243}(\bissue{3}),
\bfpage{449}--\blpage{474}
(\byear{1984}).
doi:\doiurl{10.1016/0550-3213(84)90487-5}
\end{barticle}
\endbibitem

\bibitem{AlvarezGaume:1984dr}
\begin{barticle}
\bauthor{\bsnm{Alvarez-Gaum\'{e}}, \binits{L.}},
\bauthor{\bsnm{Ginsparg}, \binits{P.H.}}:
\batitle{{The Structure of Gauge and Gravitational Anomalies}}.
\bjtitle{Annals Phys.}
\bvolume{161},
\bfpage{423}
(\byear{1985}).
doi:\doiurl{10.1016/0003-4916(85)90087-9}.
\bcomment{[Erratum: Annals Phys. 171, 233 (1986)]}
\end{barticle}
\endbibitem

\bibitem{berline2003heat}
\begin{bbook}
\bauthor{\bsnm{Berline}, \binits{N.}},
\bauthor{\bsnm{Getzler}, \binits{E.}},
\bauthor{\bsnm{Vergne}, \binits{M.}}:
\bbtitle{Heat Kernels and Dirac Operators}.
\bsertitle{Grundlehren Text Editions},
(\byear{2003}).
\burl{https://www.springer.com/gp/book/9783540200628}
\end{bbook}
\endbibitem

\bibitem{Atiyah425}
\begin{barticle}
\bauthor{\bsnm{Atiya}, \binits{M.}},
\bauthor{\bsnm{Hitchin}, \binits{N.}},
\bauthor{\bsnm{Singer}, \binits{I.}}:
\batitle{Self-duality in four-dimensional riemannian geometry}.
\bjtitle{Proceedings of the Royal Society of London A: Mathematical, Physical
  and Engineering Sciences}
\bvolume{362}(\bissue{1711}),
\bfpage{425}--\blpage{461}
(\byear{1978}).
doi:\doiurl{10.1098/rspa.1978.0143}
\end{barticle}
\endbibitem

\bibitem{Woit:2021bmb}
\begin{botherref}
\oauthor{\bsnm{Woit}, \binits{P.}}:
{Euclidean Twistor Unification}
(2021).
\arxivurl{2104.05099}
\end{botherref}
\endbibitem

\bibitem{PhysRevD.96.125004}
\begin{barticle}
\bauthor{\bsnm{Fukaya}, \binits{H.}},
\bauthor{\bsnm{Onogi}, \binits{T.}},
\bauthor{\bsnm{Yamaguchi}, \binits{S.}}:
\batitle{Atiyah-patodi-singer index from the domain-wall fermion dirac
  operator}.
\bjtitle{Phys. Rev. D}
\bvolume{96},
\bfpage{125004}
(\byear{2017}).
doi:\doiurl{10.1103/PhysRevD.96.125004}
\end{barticle}
\endbibitem

\bibitem{Fukaya:2019ytk}
\begin{barticle}
\bauthor{\bsnm{Fukaya}, \binits{H.}},
\bauthor{\bsnm{Furuta}, \binits{M.}},
\bauthor{\bsnm{Matsuo}, \binits{S.}},
\bauthor{\bsnm{Onogi}, \binits{T.}},
\bauthor{\bsnm{Yamaguchi}, \binits{S.}},
\bauthor{\bsnm{Yamashita}, \binits{M.}}:
\batitle{{A physicist-friendly reformulation of the Atiyah-Patodi-Singer index
  and its mathematical justification}}.
\bjtitle{PoS}
\bvolume{LATTICE2019},
\bfpage{061}
(\byear{2019}).
doi:\doiurl{10.22323/1.363.0061}.
\arxivurl{2001.01428}
\end{barticle}
\endbibitem

\bibitem{Fukaya:2019qlf}
\begin{barticle}
\bauthor{\bsnm{Fukaya}, \binits{H.}},
\bauthor{\bsnm{Furuta}, \binits{M.}},
\bauthor{\bsnm{Matsuo}, \binits{S.}},
\bauthor{\bsnm{Onogi}, \binits{T.}},
\bauthor{\bsnm{Yamaguchi}, \binits{S.}},
\bauthor{\bsnm{Yamashita}, \binits{M.}}:
\batitle{{The Atiyah\textendash{}Patodi\textendash{}Singer Index and
  Domain-Wall Fermion Dirac Operators}}.
\bjtitle{Commun. Math. Phys.}
\bvolume{380}(\bissue{3}),
\bfpage{1295}--\blpage{1311}
(\byear{2020}).
doi:\doiurl{10.1007/s00220-020-03806-0}.
\arxivurl{1910.01987}
\end{barticle}
\endbibitem

\bibitem{wrro123059}
\begin{botherref}
\oauthor{\bsnm{B{\"a}r}, \binits{C.}},
\oauthor{\bsnm{Strohmaier}, \binits{A.}}:
{An index theorem for Lorentzian manifolds with compact spacelike Cauchy
  boundary}.
American Journal of Mathematics
(2017).
{\copyright} 2017. This is an author produced version of a paper accepted for
  publication in American Journal of Mathematics. Uploaded in accordance with
  the publisher's self-archiving policy.
\end{botherref}
\endbibitem

\bibitem{bar2018}
\begin{bbook}
\bauthor{\bsnm{B\"{a}r}, \binits{C.}},
\bauthor{\bsnm{Hannes}, \binits{S.}}:
\bbtitle{Boundary Value Problems for the Lorentzian Dirac Operator}.
\bsertitle{Geometry and Physics: Volume I},
(\byear{2018}).
doi:\doiurl{10.1093/oso/9780198802013.003.0001}
\end{bbook}
\endbibitem

\bibitem{bertlmann2000anomalies}
\begin{bbook}
\bauthor{\bsnm{Bertlmann}, \binits{R.A.}}:
\bbtitle{Anomalies in Quantum Field Theory}.
\bsertitle{International Series of Monographs on Physics},
(\byear{2000}).
doi:\doiurl{10.1093/acprof:oso/9780198507628.001.0001}
\end{bbook}
\endbibitem

\bibitem{10.4310/jdg/1214427880}
\begin{barticle}
\bauthor{\bsnm{McKean}, \binits{H.P.J.}},
\bauthor{\bsnm{Singer}, \binits{I.M.}}:
\batitle{{Curvature and the eigenvalues of the Laplacian}}.
\bjtitle{Journal of Differential Geometry}
\bvolume{1}(\bissue{1-2}),
\bfpage{43}--\blpage{69}
(\byear{1967}).
doi:\doiurl{10.4310/jdg/1214427880}
\end{barticle}
\endbibitem

\bibitem{doi:10.1063/1.1705276}
\begin{barticle}
\bauthor{\bsnm{Geroch}, \binits{R.P.}}:
\batitle{Topology in general relativity}.
\bjtitle{Journal of Mathematical Physics}
\bvolume{8}(\bissue{4}),
\bfpage{782}--\blpage{786}
(\byear{1967}).
doi:\doiurl{10.1063/1.1705276}.
\arxivurl{https://doi.org/10.1063/1.1705276}
\end{barticle}
\endbibitem

\bibitem{tom1}
\begin{barticle}
\bauthor{\bsnm{{Thom, Ren\'{e} F.}}}:
\batitle{{Veri\'{e}t}\'{e}s diff\'{e}rentiables cobordantes}.
\bjtitle{C.R. Acad.Sic. Paris}
\bvolume{78},
\bfpage{1733}--\blpage{1735}
(\byear{1953})
\end{barticle}
\endbibitem

\bibitem{tom2}
\begin{barticle}
\bauthor{\bsnm{{Thom, Ren\'{e} F.}}}:
\batitle{{Quelques propri\'{e}t\'{e}s globales des vari\'{e}t\'{e}s
  diffrentiables}}.
\bjtitle{Comment. Math. Helv.}
\bvolume{28},
\bfpage{17}--\blpage{86}
(\byear{1954})
\end{barticle}
\endbibitem

\bibitem{hirsch1976differential}
\begin{bbook}
\bauthor{\bsnm{Hirsch}, \binits{M.W.}}:
\bbtitle{Differential Topology}.
\bsertitle{Graduate texts in mathematics},
(\byear{1976}).
doi:\doiurl{10.1007/978-1-4684-9449-5}
\end{bbook}
\endbibitem

\bibitem{GIBBONS1979431}
\begin{barticle}
\bauthor{\bsnm{Gibbons}, \binits{G.W.}}:
\batitle{{Cosmological fermion-number non-conservation}}.
\bjtitle{Physics Letters B}
\bvolume{84}(\bissue{4}),
\bfpage{431}--\blpage{434}
(\byear{1979}).
doi:\doiurl{10.1016/0370-2693(79)91232-2}
\end{barticle}
\endbibitem

\bibitem{GIBBONS198098}
\begin{barticle}
\bauthor{\bsnm{Gibbons}, \binits{G.W.}}:
\batitle{Spectral asymmetry and quantum field theory in curved spacetime}.
\bjtitle{Annals of Physics}
\bvolume{125}(\bissue{1}),
\bfpage{98}--\blpage{116}
(\byear{1980}).
doi:\doiurl{10.1016/0003-4916(80)90120-7}
\end{barticle}
\endbibitem

\bibitem{bar2016}
\begin{barticle}
\bauthor{\bsnm{B{\"a}r}, \binits{C.}},
\bauthor{\bsnm{Strohmaier}, \binits{A.}}:
\batitle{{A Rigorous Geometric Derivation of the Chiral Anomaly in Curved
  Backgrounds}}.
\bjtitle{Communications in Mathematical Physics}
\bvolume{347}(\bissue{3}),
\bfpage{703}--\blpage{721}
(\byear{2016}).
doi:\doiurl{10.1007/s00220-016-2664-1}
\end{barticle}
\endbibitem

\bibitem{doi:10.1063/1.4990708}
\begin{barticle}
\bauthor{\bsnm{Kurihara}, \binits{Y.}}:
\batitle{{Characteristic classes in general relativity on a modified
  Poincar\'{e} curvature bundle}}.
\bjtitle{Journal of Mathematical Physics}
\bvolume{58}(\bissue{9}),
\bfpage{092502}
(\byear{2017}).
doi:\doiurl{10.1063/1.4990708}
\end{barticle}
\endbibitem

\bibitem{Kurihara_2020}
\begin{barticle}
\bauthor{\bsnm{Kurihara}, \binits{Y.}}:
\batitle{{Symplectic structure for general relativity and
  Einstein--Brillouin--Keller quantization}}.
\bjtitle{Classical and Quantum Gravity}
\bvolume{37}(\bissue{23}),
\bfpage{235003}
(\byear{2020}).
doi:\doiurl{10.1088/1361-6382/abbc44}
\end{barticle}
\endbibitem

\bibitem{fre2012gravity}
\begin{bbook}
\bauthor{\bsnm{Fr{\`e}}, \binits{P.G.}}:
\bbtitle{Gravity, a Geometrical Course}
vol. \bseriesno{1: Development of the Theory and Basic Physical Applications},
(\byear{2013}).
doi:\doiurl{10.1007/978-94-007-5361-7}
\end{bbook}
\endbibitem

\bibitem{0264-9381-29-13-133001}
\begin{barticle}
\bauthor{\bsnm{Zanelli}, \binits{J.}}:
\batitle{{Chern--Simons forms in gravitation theories}}.
\bjtitle{Classical and Quantum Gravity}
\bvolume{29}(\bissue{13}),
\bfpage{133001}
(\byear{2012})
\end{barticle}
\endbibitem

\bibitem{krasnov2020formulations}
\begin{bbook}
\bauthor{\bsnm{Krasnov}, \binits{K.}}:
\bbtitle{Formulations of General Relativity: Gravity, Spinors and Differential
  Forms}.
\bsertitle{Cambridge Monographs on Mathematical Physics},
(\byear{2020})
\end{bbook}
\endbibitem

\bibitem{moore1996lectures}
\begin{bbook}
\bauthor{\bsnm{Moore}, \binits{J.D.}}:
\bbtitle{Lectures on Seiberg-Witten Invariants}.
\bsertitle{Lecture notes in mathematics Berlin: Mathematical biosciences
  subseries},
vol. \bseriesno{1629}
(\byear{1996}).
doi:\doiurl{10.1007/BFb0092948}
\end{bbook}
\endbibitem

\bibitem{Nakajima:2015rhw}
\begin{barticle}
\bauthor{\bsnm{Nakajima}, \binits{S.}}:
\batitle{{Application of covariant analytic mechanics with differential forms
  to gravity with Dirac field}}.
\bjtitle{Electron. J. Theor. Phys.}
\bvolume{13},
\bfpage{95}
(\byear{2016}).
\arxivurl{1510.09048}
\end{barticle}
\endbibitem

\bibitem{Palatini2008}
\begin{barticle}
\bauthor{\bsnm{Palatini}, \binits{A.}}:
\batitle{Deduzione invariantiva delle equazioni gravitazionali dal principio di
  hamilton}.
\bjtitle{Rendiconti del Circolo Matematico di Palermo (1884-1940)}
\bvolume{43}(\bissue{1}),
\bfpage{203}--\blpage{212}
(\byear{2008}).
doi:\doiurl{10.1007/BF03014670}
\end{barticle}
\endbibitem

\bibitem{ferreris}
\begin{barticle}
\bauthor{\bsnm{Ferraris}, \binits{M.}},
\bauthor{\bsnm{Francaviglia}, \binits{M.}},
\bauthor{\bsnm{Reina}, \binits{C.}}:
\batitle{Variational formulation of general relativity from 1915 to 1925
  ``{Palatini's method}'' discovered by {Einstein} in 1925}.
\bjtitle{General Relativity and Gravitation}
\bvolume{14}(\bissue{3}),
\bfpage{243}--\blpage{254}
(\byear{1982}).
doi:\doiurl{10.1007/BF00756060}
\end{barticle}
\endbibitem

\bibitem{kobayashi1996foundations}
\begin{bbook}
\bauthor{\bsnm{Kobayashi}, \binits{S.}},
\bauthor{\bsnm{Nomizu}, \binits{K.}}:
\bbtitle{Foundations of Differential Geometry}
vol. \bseriesno{1},
(\byear{1996})
\end{bbook}
\endbibitem

\bibitem{donaldson1990geometry}
\begin{bbook}
\bauthor{\bsnm{Donaldson}, \binits{S.K.}},
\bauthor{\bsnm{Kronheimer}, \binits{P.B.}}:
\bbtitle{The Geometry of Four-manifolds}.
\bsertitle{Oxford mathematical monographs}.
\bpublisher{Oxford University Press},
\blocation{Oxford, England}
(\byear{1990})
\end{bbook}
\endbibitem

\bibitem{doi:10.1073/pnas.86.22.8610}
\begin{barticle}
\bauthor{\bsnm{Sibner}, \binits{L.M.}},
\bauthor{\bsnm{Sibner}, \binits{R.J.}},
\bauthor{\bsnm{Uhlenbeck}, \binits{K.}}:
\batitle{{Solutions to Yang--Mills equations that are not self-dual}}.
\bjtitle{Proceedings of the National Academy of Sciences}
\bvolume{86}(\bissue{22}),
\bfpage{8610}--\blpage{8613}
(\byear{1989}).
doi:\doiurl{10.1073/pnas.86.22.8610}
\end{barticle}
\endbibitem

\bibitem{markvorsen2003global}
\begin{bbook}
\bauthor{\bsnm{Markvorsen}, \binits{S.}},
\bauthor{\bsnm{Min-Oo}, \binits{M.}}:
\bbtitle{Global Riemannian Geometry: Curvature and Topology}.
\bsertitle{Advanced Courses in Mathematics - CRM Barcelona},
(\byear{2003})
\end{bbook}
\endbibitem

\bibitem{ATIYAH1978185}
\begin{barticle}
\bauthor{\bsnm{Atiyah}, \binits{M.F.}},
\bauthor{\bsnm{Hitchin}, \binits{N.J.}},
\bauthor{\bsnm{Drinfeld}, \binits{V.G.}},
\bauthor{\bsnm{Manin}, \binits{Y.I.}}:
\batitle{Construction of instantons}.
\bjtitle{Physics Letters A}
\bvolume{65}(\bissue{3}),
\bfpage{185}--\blpage{187}
(\byear{1978}).
doi:\doiurl{10.1016/0375-9601(78)90141-X}
\end{barticle}
\endbibitem

\bibitem{PhysRevB.27.6083}
\begin{barticle}
\bauthor{\bsnm{Thouless}, \binits{D.J.}}:
\batitle{Quantization of particle transport}.
\bjtitle{Phys. Rev. B}
\bvolume{27},
\bfpage{6083}--\blpage{6087}
(\byear{1983}).
doi:\doiurl{10.1103/PhysRevB.27.6083}
\end{barticle}
\endbibitem

\bibitem{PhysRevB.74.195312}
\begin{barticle}
\bauthor{\bsnm{Fu}, \binits{L.}},
\bauthor{\bsnm{Kane}, \binits{C.L.}}:
\batitle{Time reversal polarization and a ${Z}_{2}$ adiabatic spin pump}.
\bjtitle{Phys. Rev. B}
\bvolume{74},
\bfpage{195312}
(\byear{2006}).
doi:\doiurl{10.1103/PhysRevB.74.195312}
\end{barticle}
\endbibitem

\bibitem{PhysRevB.76.045302}
\begin{barticle}
\bauthor{\bsnm{Fu}, \binits{L.}},
\bauthor{\bsnm{Kane}, \binits{C.L.}}:
\batitle{Topological insulators with inversion symmetry}.
\bjtitle{Phys. Rev. B}
\bvolume{76},
\bfpage{045302}
(\byear{2007}).
doi:\doiurl{10.1103/PhysRevB.76.045302}
\end{barticle}
\endbibitem

\bibitem{Hayashi_2017}
\begin{barticle}
\bauthor{\bsnm{Hayashi}, \binits{S.}}:
\batitle{Bulk-edge correspondence and the cobordism invariance of the index}.
\bjtitle{Reviews in Mathematical Physics}
\bvolume{29}(\bissue{10}),
\bfpage{1750033}
(\byear{2017}).
doi:\doiurl{10.1142/s0129055x17500337}
\end{barticle}
\endbibitem

\bibitem{warner2013foundations}
\begin{bbook}
\bauthor{\bsnm{Warner}, \binits{F.W.}}:
\bbtitle{Foundations of Differentiable Manifolds and Lie Groups}.
\bsertitle{Graduate Texts in Mathematics},
(\byear{2013}).
\burl{https://link.springer.com/book/10.1007/978-1-4757-1799-0}
\end{bbook}
\endbibitem

\bibitem{milnor1974characteristic}
\begin{bbook}
\bauthor{\bsnm{Milnor}, \binits{J.W.}},
\bauthor{\bsnm{Stasheff}, \binits{J.D.}},
\bauthor{\bsnm{Stasheff}, \binits{J.}}:
\bbtitle{Characteristic Classes}.
\bsertitle{Annals of mathematics studies},
(\byear{1974}).
\burl{https://www.maths.ed.ac.uk/\texttildelow v1ranick/papers/milnstas.pdf}
\end{bbook}
\endbibitem

\bibitem{cohen1998}
\begin{bbook}
\bauthor{\bsnm{Cohen}, \binits{R.L.}}:
\bbtitle{The Topology of Fibre Bundles Lecture Notes},
(\byear{1998}).
\burl{http://math.stanford.edu/\texttildelow ralph/fiber.pdf}
\end{bbook}
\endbibitem

\end{thebibliography}

\newcommand{\BMCxmlcomment}[1]{}

\BMCxmlcomment{

<refgrp>

<bibl id="B1">
  <title><p>Conservation of Isotopic Spin and Isotopic Gauge
  Invariance</p></title>
  <aug>
    <au><snm>Yang</snm><fnm>C. N.</fnm></au>
    <au><snm>Mills</snm><fnm>R. L.</fnm></au>
  </aug>
  <source>Phys. Rev.</source>
  <pubdate>1954</pubdate>
  <volume>96</volume>
  <fpage>191</fpage>
  <lpage>-195</lpage>
</bibl>

<bibl id="B2">
  <title><p>{The Index of elliptic operators. 1}</p></title>
  <aug>
    <au><snm>Atiyah</snm><fnm>M.F.</fnm></au>
    <au><snm>Singer</snm><fnm>I.M.</fnm></au>
  </aug>
  <source>Annals Math.</source>
  <pubdate>1968</pubdate>
  <volume>87</volume>
  <fpage>484</fpage>
  <lpage>-530</lpage>
</bibl>

<bibl id="B3">
  <title><p>{The Index of elliptic operators. 2.}</p></title>
  <aug>
    <au><snm>Atiyah</snm><fnm>M.F.</fnm></au>
    <au><snm>Segal</snm><fnm>G.B.</fnm></au>
  </aug>
  <source>Annals Math.</source>
  <pubdate>1968</pubdate>
  <volume>87</volume>
  <fpage>531</fpage>
  <lpage>-545</lpage>
</bibl>

<bibl id="B4">
  <title><p>{The Index of elliptic operators. 3.}</p></title>
  <aug>
    <au><snm>Atiyah</snm><fnm>M.F.</fnm></au>
    <au><snm>Singer</snm><fnm>I.M.</fnm></au>
  </aug>
  <source>Annals Math.</source>
  <pubdate>1968</pubdate>
  <volume>87</volume>
  <fpage>546</fpage>
  <lpage>-604</lpage>
</bibl>

<bibl id="B5">
  <title><p>{The Index of elliptic operators. 5.}</p></title>
  <aug>
    <au><snm>Atiyah</snm><fnm>M.F.</fnm></au>
    <au><snm>Singer</snm><fnm>I.M.</fnm></au>
  </aug>
  <source>Annals Math.</source>
  <pubdate>1971</pubdate>
  <volume>93</volume>
  <fpage>139</fpage>
  <lpage>-149</lpage>
</bibl>

<bibl id="B6">
  <title><p>{The Index of elliptic operators. 4}</p></title>
  <aug>
    <au><snm>Atiyah</snm><fnm>M.F.</fnm></au>
    <au><snm>Singer</snm><fnm>I.M.</fnm></au>
  </aug>
  <source>Annals Math.</source>
  <pubdate>1971</pubdate>
  <volume>93</volume>
  <fpage>119</fpage>
  <lpage>-138</lpage>
</bibl>

<bibl id="B7">
  <title><p>{Spectral asymmetry and Riemannian geometry. I}</p></title>
  <aug>
    <au><snm>Atiyah</snm><fnm>MF</fnm></au>
    <au><snm>Patodi</snm><fnm>VK</fnm></au>
    <au><snm>Singer</snm><fnm>IM</fnm></au>
  </aug>
  <source>Mathematical Proceedings of the Cambridge Philosophical
  Society</source>
  <pubdate>1975</pubdate>
  <volume>77</volume>
  <fpage>43</fpage>
  <lpage>-69</lpage>
</bibl>

<bibl id="B8">
  <title><p>{Spectral asymmetry and Riemannian geometry. II}</p></title>
  <aug>
    <au><snm>Atiyah</snm><fnm>M. F.</fnm></au>
    <au><snm>Patodi</snm><fnm>V. K.</fnm></au>
    <au><snm>Singer</snm><fnm>I. M.</fnm></au>
  </aug>
  <source>Mathematical Proceedings of the Cambridge Philosophical
  Society</source>
  <pubdate>1975</pubdate>
  <volume>78</volume>
  <issue>3</issue>
  <fpage>405</fpage>
  <lpage>-432</lpage>
</bibl>

<bibl id="B9">
  <title><p>{Spectral asymmetry and Riemannian geometry. III}</p></title>
  <aug>
    <au><snm>Atiyah</snm><fnm>M. F.</fnm></au>
    <au><snm>Patodi</snm><fnm>V. K.</fnm></au>
    <au><snm>Singer</snm><fnm>I. M.</fnm></au>
  </aug>
  <source>Mathematical Proceedings of the Cambridge Philosophical
  Society</source>
  <pubdate>1976</pubdate>
  <volume>79</volume>
  <issue>1</issue>
  <fpage>71</fpage>
  <lpage>-99</lpage>
</bibl>

<bibl id="B10">
  <title><p>{On regular solutions of Euclidean Yang--Mills
  equations}</p></title>
  <aug>
    <au><snm>Schwarz</snm><fnm>A.S.</fnm></au>
  </aug>
  <source>Physics Letters B</source>
  <pubdate>1977</pubdate>
  <volume>67</volume>
  <issue>2</issue>
  <fpage>172</fpage>
  <lpage>174</lpage>
</bibl>

<bibl id="B11">
  <title><p>{Monopoles and four manifolds}</p></title>
  <aug>
    <au><snm>Witten</snm><fnm>E</fnm></au>
  </aug>
  <source>Math. Res. Lett.</source>
  <pubdate>1994</pubdate>
  <volume>1</volume>
  <fpage>769</fpage>
  <lpage>796</lpage>
</bibl>

<bibl id="B12">
  <title><p>An application of gauge theory to four-dimensional
  topology</p></title>
  <aug>
    <au><snm>Donaldson</snm><fnm>S. K.</fnm></au>
  </aug>
  <source>J. Differential Geom.</source>
  <pubdate>1983</pubdate>
  <volume>18</volume>
  <issue>2</issue>
  <fpage>279</fpage>
  <lpage>-315</lpage>
</bibl>

<bibl id="B13">
  <title><p>Path-Integral Measure for Gauge-Invariant Fermion
  Theories</p></title>
  <aug>
    <au><snm>Fujikawa</snm><fnm>K</fnm></au>
  </aug>
  <source>Phys. Rev. Lett.</source>
  <pubdate>1979</pubdate>
  <volume>42</volume>
  <fpage>1195</fpage>
  <lpage>-1198</lpage>
  <url>https://link.aps.org/doi/10.1103/PhysRevLett.42.1195</url>
</bibl>

<bibl id="B14">
  <title><p>The topological meaning of non-abelian anomalies</p></title>
  <aug>
    <au><snm>Alvarez Gaum\'{e}</snm><fnm>L.</fnm></au>
    <au><snm>Ginsparg</snm><fnm>P.</fnm></au>
  </aug>
  <source>Nuclear Physics B</source>
  <pubdate>1984</pubdate>
  <volume>243</volume>
  <issue>3</issue>
  <fpage>449</fpage>
  <lpage>474</lpage>
  <url>http://www.sciencedirect.com/science/article/pii/0550321384904875</url>
</bibl>

<bibl id="B15">
  <title><p>{The Structure of Gauge and Gravitational Anomalies}</p></title>
  <aug>
    <au><snm>Alvarez Gaum\'{e}</snm><fnm>L</fnm></au>
    <au><snm>Ginsparg</snm><fnm>PH</fnm></au>
  </aug>
  <source>Annals Phys.</source>
  <editor>Salam, A. and Sezgin, E.</editor>
  <pubdate>1985</pubdate>
  <volume>161</volume>
  <fpage>423</fpage>
  <note>[Erratum: Annals Phys. 171, 233 (1986)]</note>
</bibl>

<bibl id="B16">
  <title><p>Heat Kernels and Dirac Operators</p></title>
  <aug>
    <au><snm>Berline</snm><fnm>N.</fnm></au>
    <au><snm>Getzler</snm><fnm>E.</fnm></au>
    <au><snm>Vergne</snm><fnm>M.</fnm></au>
  </aug>
  <series><title><p>Grundlehren Text Editions</p></title></series>
  <pubdate>2003</pubdate>
  <url>https://www.springer.com/gp/book/9783540200628</url>
</bibl>

<bibl id="B17">
  <title><p>Self-duality in four-dimensional Riemannian geometry</p></title>
  <aug>
    <au><snm>Atiya</snm><fnm>M</fnm></au>
    <au><snm>Hitchin</snm><fnm>N</fnm></au>
    <au><snm>Singer</snm><fnm>I</fnm></au>
  </aug>
  <source>Proceedings of the Royal Society of London A: Mathematical, Physical
  and Engineering Sciences</source>
  <pubdate>1978</pubdate>
  <volume>362</volume>
  <issue>1711</issue>
  <fpage>425</fpage>
  <lpage>-461</lpage>
</bibl>

<bibl id="B18">
  <title><p>{Euclidean Twistor Unification}</p></title>
  <aug>
    <au><snm>Woit</snm><fnm>P</fnm></au>
  </aug>
  <pubdate>2021</pubdate>
</bibl>

<bibl id="B19">
  <title><p>Atiyah-Patodi-Singer index from the domain-wall fermion Dirac
  operator</p></title>
  <aug>
    <au><snm>Fukaya</snm><fnm>H</fnm></au>
    <au><snm>Onogi</snm><fnm>T</fnm></au>
    <au><snm>Yamaguchi</snm><fnm>S</fnm></au>
  </aug>
  <source>Phys. Rev. D</source>
  <pubdate>2017</pubdate>
  <volume>96</volume>
  <fpage>125004</fpage>
  <url>https://link.aps.org/doi/10.1103/PhysRevD.96.125004</url>
</bibl>

<bibl id="B20">
  <title><p>{A physicist-friendly reformulation of the Atiyah-Patodi-Singer
  index and its mathematical justification}</p></title>
  <aug>
    <au><snm>Fukaya</snm><fnm>H</fnm></au>
    <au><snm>Furuta</snm><fnm>M</fnm></au>
    <au><snm>Matsuo</snm><fnm>S</fnm></au>
    <au><snm>Onogi</snm><fnm>T</fnm></au>
    <au><snm>Yamaguchi</snm><fnm>S</fnm></au>
    <au><snm>Yamashita</snm><fnm>M</fnm></au>
  </aug>
  <source>PoS</source>
  <pubdate>2019</pubdate>
  <volume>LATTICE2019</volume>
  <fpage>061</fpage>
</bibl>

<bibl id="B21">
  <title><p>{The Atiyah\textendash{}Patodi\textendash{}Singer Index and
  Domain-Wall Fermion Dirac Operators}</p></title>
  <aug>
    <au><snm>Fukaya</snm><fnm>H</fnm></au>
    <au><snm>Furuta</snm><fnm>M</fnm></au>
    <au><snm>Matsuo</snm><fnm>S</fnm></au>
    <au><snm>Onogi</snm><fnm>T</fnm></au>
    <au><snm>Yamaguchi</snm><fnm>S</fnm></au>
    <au><snm>Yamashita</snm><fnm>M</fnm></au>
  </aug>
  <source>Commun. Math. Phys.</source>
  <pubdate>2020</pubdate>
  <volume>380</volume>
  <issue>3</issue>
  <fpage>1295</fpage>
  <lpage>-1311</lpage>
</bibl>

<bibl id="B22">
  <title><p>{An index theorem for Lorentzian manifolds with compact spacelike
  Cauchy boundary}</p></title>
  <aug>
    <au><snm>B{\"a}r</snm><fnm>C</fnm></au>
    <au><snm>Strohmaier</snm><fnm>A</fnm></au>
  </aug>
  <source>American Journal of Mathematics</source>
  <pubdate>2017</pubdate>
  <url>http://eprints.whiterose.ac.uk/123059/</url>
  <note>{\copyright} 2017. This is an author produced version of a paper
  accepted for publication in American Journal of Mathematics. Uploaded in
  accordance with the publisher's self-archiving policy.</note>
</bibl>

<bibl id="B23">
  <title><p>Boundary Value Problems for the Lorentzian Dirac
  Operator</p></title>
  <aug>
    <au><snm>B\"{a}r</snm><fnm>C</fnm></au>
    <au><snm>Hannes</snm><fnm>S</fnm></au>
  </aug>
  <series><title><p>Geometry and Physics: Volume I</p></title></series>
  <pubdate>2018</pubdate>
</bibl>

<bibl id="B24">
  <title><p>Anomalies in Quantum Field Theory</p></title>
  <aug>
    <au><snm>Bertlmann</snm><fnm>R.A.</fnm></au>
  </aug>
  <series><title><p>International Series of Monographs on
  Physics</p></title></series>
  <pubdate>2000</pubdate>
</bibl>

<bibl id="B25">
  <title><p>{Curvature and the eigenvalues of the Laplacian}</p></title>
  <aug>
    <au><snm>McKean</snm><fnm>HPJ</fnm></au>
    <au><snm>Singer</snm><fnm>I. M.</fnm></au>
  </aug>
  <source>Journal of Differential Geometry</source>
  <pubdate>1967</pubdate>
  <volume>1</volume>
  <issue>1-2</issue>
  <fpage>43</fpage>
  <lpage>-69</lpage>
  <url>https://doi.org/10.4310/jdg/1214427880</url>
</bibl>

<bibl id="B26">
  <title><p>Topology in General Relativity</p></title>
  <aug>
    <au><snm>Geroch</snm><fnm>RP</fnm></au>
  </aug>
  <source>Journal of Mathematical Physics</source>
  <pubdate>1967</pubdate>
  <volume>8</volume>
  <issue>4</issue>
  <fpage>782</fpage>
  <lpage>786</lpage>
  <url>https://doi.org/10.1063/1.1705276</url>
</bibl>

<bibl id="B27">
  <title><p>{Veri\'{e}t}\'{e}s Diff\'{e}rentiables cobordantes</p></title>
  <aug>
    <au><cnm>{Thom, Ren\'{e} F.}</cnm></au>
  </aug>
  <source>C.R. Acad.Sic. Paris</source>
  <pubdate>1953</pubdate>
  <volume>78</volume>
  <fpage>1733</fpage>
  <lpage>-1735</lpage>
</bibl>

<bibl id="B28">
  <title><p>{Quelques propri\'{e}t\'{e}s globales des vari\'{e}t\'{e}s
  diffrentiables}</p></title>
  <aug>
    <au><cnm>{Thom, Ren\'{e} F.}</cnm></au>
  </aug>
  <source>Comment. Math. Helv.</source>
  <pubdate>1954</pubdate>
  <volume>28</volume>
  <fpage>17</fpage>
  <lpage>-86</lpage>
</bibl>

<bibl id="B29">
  <title><p>Differential Topology</p></title>
  <aug>
    <au><snm>Hirsch</snm><fnm>M.W.</fnm></au>
  </aug>
  <series><title><p>Graduate texts in mathematics</p></title></series>
  <pubdate>1976</pubdate>
</bibl>

<bibl id="B30">
  <title><p>{Cosmological fermion-number non-conservation}</p></title>
  <aug>
    <au><snm>Gibbons</snm><fnm>G.W.</fnm></au>
  </aug>
  <source>Physics Letters B</source>
  <pubdate>1979</pubdate>
  <volume>84</volume>
  <issue>4</issue>
  <fpage>431</fpage>
  <lpage>434</lpage>
</bibl>

<bibl id="B31">
  <title><p>Spectral asymmetry and quantum field theory in curved
  spacetime</p></title>
  <aug>
    <au><snm>Gibbons</snm><fnm>G.W.</fnm></au>
  </aug>
  <source>Annals of Physics</source>
  <pubdate>1980</pubdate>
  <volume>125</volume>
  <issue>1</issue>
  <fpage>98</fpage>
  <lpage>116</lpage>
  <url>https://www.sciencedirect.com/science/article/pii/0003491680901207</url>
</bibl>

<bibl id="B32">
  <title><p>{A Rigorous Geometric Derivation of the Chiral Anomaly in Curved
  Backgrounds}</p></title>
  <aug>
    <au><snm>B{\"a}r</snm><fnm>C</fnm></au>
    <au><snm>Strohmaier</snm><fnm>A</fnm></au>
  </aug>
  <source>Communications in Mathematical Physics</source>
  <pubdate>2016</pubdate>
  <volume>347</volume>
  <issue>3</issue>
  <fpage>703</fpage>
  <lpage>-721</lpage>
</bibl>

<bibl id="B33">
  <title><p>{Characteristic classes in general relativity on a modified
  Poincar\'{e} curvature bundle}</p></title>
  <aug>
    <au><snm>Kurihara</snm><fnm>Y</fnm></au>
  </aug>
  <source>Journal of Mathematical Physics</source>
  <pubdate>2017</pubdate>
  <volume>58</volume>
  <issue>9</issue>
  <fpage>092502</fpage>
</bibl>

<bibl id="B34">
  <title><p>{Symplectic structure for general relativity and
  Einstein--Brillouin--Keller quantization}</p></title>
  <aug>
    <au><snm>Kurihara</snm><fnm>Y</fnm></au>
  </aug>
  <source>Classical and Quantum Gravity</source>
  <pubdate>2020</pubdate>
  <volume>37</volume>
  <issue>23</issue>
  <fpage>235003</fpage>
  <url>https://doi.org/10.1088/1361-6382/abbc44</url>
</bibl>

<bibl id="B35">
  <title><p>Gravity, a Geometrical Course</p></title>
  <aug>
    <au><snm>Fr{\`e}</snm><fnm>PG</fnm></au>
  </aug>
  <pubdate>2013</pubdate>
  <volume>1: Development of the Theory and Basic Physical Applications</volume>
</bibl>

<bibl id="B36">
  <title><p>{Chern--Simons forms in gravitation theories}</p></title>
  <aug>
    <au><snm>Zanelli</snm><fnm>J</fnm></au>
  </aug>
  <source>Classical and Quantum Gravity</source>
  <pubdate>2012</pubdate>
  <volume>29</volume>
  <issue>13</issue>
  <fpage>133001</fpage>
  <url>http://stacks.iop.org/0264-9381/29/i=13/a=133001</url>
</bibl>

<bibl id="B37">
  <title><p>Formulations of General Relativity: Gravity, Spinors and
  Differential Forms</p></title>
  <aug>
    <au><snm>Krasnov</snm><fnm>K.</fnm></au>
  </aug>
  <series><title><p>Cambridge Monographs on Mathematical
  Physics</p></title></series>
  <pubdate>2020</pubdate>
</bibl>

<bibl id="B38">
  <title><p>Lectures on Seiberg-Witten Invariants</p></title>
  <aug>
    <au><snm>Moore</snm><fnm>J.D.</fnm></au>
  </aug>
  <series><title><p>Lecture notes in mathematics Berlin: Mathematical
  biosciences subseries</p></title></series>
  <pubdate>1996</pubdate>
  <issue>1629</issue>
</bibl>

<bibl id="B39">
  <title><p>{Application of covariant analytic mechanics with differential
  forms to gravity with Dirac field}</p></title>
  <aug>
    <au><snm>Nakajima</snm><fnm>S</fnm></au>
  </aug>
  <source>Electron. J. Theor. Phys.</source>
  <pubdate>2016</pubdate>
  <volume>13</volume>
  <fpage>95</fpage>
</bibl>

<bibl id="B40">
  <title><p>Deduzione invariantiva delle equazioni gravitazionali dal principio
  di Hamilton</p></title>
  <aug>
    <au><snm>Palatini</snm><fnm>A.</fnm></au>
  </aug>
  <source>Rendiconti del Circolo Matematico di Palermo (1884-1940)</source>
  <pubdate>2008</pubdate>
  <volume>43</volume>
  <issue>1</issue>
  <fpage>203</fpage>
  <lpage>-212</lpage>
</bibl>

<bibl id="B41">
  <title><p>Variational formulation of general relativity from 1915 to 1925
  ``{Palatini's method}'' discovered by {Einstein} in 1925</p></title>
  <aug>
    <au><snm>Ferraris</snm><fnm>M.</fnm></au>
    <au><snm>Francaviglia</snm><fnm>M.</fnm></au>
    <au><snm>Reina</snm><fnm>C.</fnm></au>
  </aug>
  <source>General Relativity and Gravitation</source>
  <pubdate>1982</pubdate>
  <volume>14</volume>
  <issue>3</issue>
  <fpage>243</fpage>
  <lpage>-254</lpage>
</bibl>

<bibl id="B42">
  <title><p>Foundations of Differential Geometry</p></title>
  <aug>
    <au><snm>Kobayashi</snm><fnm>S.</fnm></au>
    <au><snm>Nomizu</snm><fnm>K.</fnm></au>
  </aug>
  <pubdate>1996</pubdate>
  <volume>1</volume>
</bibl>

<bibl id="B43">
  <title><p>The Geometry of Four-manifolds</p></title>
  <aug>
    <au><snm>Donaldson</snm><fnm>S. K.</fnm></au>
    <au><snm>Kronheimer</snm><fnm>P. B.</fnm></au>
  </aug>
  <publisher>Oxford, England: Oxford University Press</publisher>
  <series><title><p>Oxford mathematical monographs</p></title></series>
  <pubdate>1990</pubdate>
</bibl>

<bibl id="B44">
  <title><p>{Solutions to Yang--Mills equations that are not
  self-dual}</p></title>
  <aug>
    <au><snm>Sibner</snm><fnm>L. M.</fnm></au>
    <au><snm>Sibner</snm><fnm>R. J.</fnm></au>
    <au><snm>Uhlenbeck</snm><fnm>K.</fnm></au>
  </aug>
  <source>Proceedings of the National Academy of Sciences</source>
  <pubdate>1989</pubdate>
  <volume>86</volume>
  <issue>22</issue>
  <fpage>8610</fpage>
  <lpage>8613</lpage>
  <url>https://www.pnas.org/doi/abs/10.1073/pnas.86.22.8610</url>
</bibl>

<bibl id="B45">
  <title><p>Global Riemannian Geometry: Curvature and Topology</p></title>
  <aug>
    <au><snm>Markvorsen</snm><fnm>S.</fnm></au>
    <au><snm>Min Oo</snm><fnm>M.</fnm></au>
  </aug>
  <series><title><p>Advanced Courses in Mathematics - CRM
  Barcelona</p></title></series>
  <pubdate>2003</pubdate>
</bibl>

<bibl id="B46">
  <title><p>Construction of instantons</p></title>
  <aug>
    <au><snm>Atiyah</snm><fnm>M.F.</fnm></au>
    <au><snm>Hitchin</snm><fnm>N.J.</fnm></au>
    <au><snm>Drinfeld</snm><fnm>V.G.</fnm></au>
    <au><snm>Manin</snm><fnm>Y</fnm></au>
  </aug>
  <source>Physics Letters A</source>
  <pubdate>1978</pubdate>
  <volume>65</volume>
  <issue>3</issue>
  <fpage>185</fpage>
  <lpage>187</lpage>
</bibl>

<bibl id="B47">
  <title><p>Quantization of particle transport</p></title>
  <aug>
    <au><snm>Thouless</snm><fnm>D. J.</fnm></au>
  </aug>
  <source>Phys. Rev. B</source>
  <publisher>American Physical Society</publisher>
  <pubdate>1983</pubdate>
  <volume>27</volume>
  <fpage>6083</fpage>
  <lpage>-6087</lpage>
  <url>https://link.aps.org/doi/10.1103/PhysRevB.27.6083</url>
</bibl>

<bibl id="B48">
  <title><p>Time reversal polarization and a ${Z}_{2}$ adiabatic spin
  pump</p></title>
  <aug>
    <au><snm>Fu</snm><fnm>L</fnm></au>
    <au><snm>Kane</snm><fnm>C. L.</fnm></au>
  </aug>
  <source>Phys. Rev. B</source>
  <publisher>American Physical Society</publisher>
  <pubdate>2006</pubdate>
  <volume>74</volume>
  <fpage>195312</fpage>
  <url>https://link.aps.org/doi/10.1103/PhysRevB.74.195312</url>
</bibl>

<bibl id="B49">
  <title><p>Topological insulators with inversion symmetry</p></title>
  <aug>
    <au><snm>Fu</snm><fnm>L</fnm></au>
    <au><snm>Kane</snm><fnm>C. L.</fnm></au>
  </aug>
  <source>Phys. Rev. B</source>
  <publisher>American Physical Society</publisher>
  <pubdate>2007</pubdate>
  <volume>76</volume>
  <fpage>045302</fpage>
  <url>https://link.aps.org/doi/10.1103/PhysRevB.76.045302</url>
</bibl>

<bibl id="B50">
  <title><p>Bulk-edge correspondence and the cobordism invariance of the
  index</p></title>
  <aug>
    <au><snm>Hayashi</snm><fnm>S</fnm></au>
  </aug>
  <source>Reviews in Mathematical Physics</source>
  <publisher>World Scientific Pub Co Pte Lt</publisher>
  <pubdate>2017</pubdate>
  <volume>29</volume>
  <issue>10</issue>
  <fpage>1750033</fpage>
  <url>https://doi.org/10.1142
</bibl>

<bibl id="B51">
  <title><p>Foundations of Differentiable Manifolds and Lie Groups</p></title>
  <aug>
    <au><snm>Warner</snm><fnm>F.W.</fnm></au>
  </aug>
  <series><title><p>Graduate Texts in Mathematics</p></title></series>
  <pubdate>2013</pubdate>
  <url>https://link.springer.com/book/10.1007/978-1-4757-1799-0</url>
</bibl>

<bibl id="B52">
  <title><p>Characteristic Classes</p></title>
  <aug>
    <au><snm>Milnor</snm><fnm>J.W.</fnm></au>
    <au><snm>Stasheff</snm><fnm>J.D.</fnm></au>
    <au><snm>Stasheff</snm><fnm>J.</fnm></au>
  </aug>
  <series><title><p>Annals of mathematics studies</p></title></series>
  <pubdate>1974</pubdate>
  <url>https://www.maths.ed.ac.uk/~v1ranick/papers/milnstas.pdf</url>
</bibl>

<bibl id="B53">
  <title><p>The Topology of Fibre Bundles Lecture Notes</p></title>
  <aug>
    <au><snm>Cohen</snm><fnm>RL</fnm></au>
  </aug>
  <pubdate>1998</pubdate>
  <url>http://math.stanford.edu/~ralph/fiber.pdf</url>
</bibl>

</refgrp>
} 
\end{document}